\documentclass{article}
\usepackage{graphicx}
\usepackage{amsfonts}
\usepackage{amsmath}
\usepackage{amssymb}
\usepackage{amsthm}
\usepackage[margin=0.95in]{geometry}
\usepackage{tcolorbox}
\usepackage{url}
\usepackage{hyperref}
\usepackage{multirow}
\usepackage{booktabs}
\usepackage{enumitem}
\usepackage[normalem]{ulem}
\usepackage{natbib} 
\usepackage{authblk}
\usepackage{framed}

\title{Feature-based Uncertainty Model for School Choice}

\author[1]{Yao Zhang\thanks{zhang@agent.inf.kyushu-u.ac.jp}}
\author[1]{Makoto Yokoo\thanks{yokoo@inf.kyushu-u.ac.jp}}
\affil[1]{Kyushu University, Japan}

\date{}

\newtheorem{definition}{Definition}

\newtheorem{proposition}{Proposition}

\newtheorem{corollary}{Corollary}
\newtheorem{example}{Example}
\newtheorem{theorem}{Theorem}
\newtheorem{lemma}{Lemma}

\usepackage{lineno}

\begin{document}

\maketitle

\begin{abstract}
In this work, we consider a school choice scenario where a student does not exactly know which college is better for her. Although it is hard for a student to obtain an exact preference, she can usually compare specific features of colleges, such as reputation, location, and campus facilities. Motivated by this, we propose a feature-based uncertainty model for school choice where a student's preference is based on a linear combination of her utilities over different features, and the coefficients of the combination are treated as random variables. Our main goal is to achieve a higher probability of stability (ProS) and incentive compatibility (IC) for students. Unfortunately, these two goals are incompatible in general. We show that a student-proposing deferred acceptance (DA) that prioritizes colleges with higher expected ranking can achieve a worst-case approximation ratio of $(1/n)^n$ on ProS, while a DA with a carefully defined iterated comparison vector can guarantee the strongest achievable form of IC. Finally, we provide additional results for some specific restrictions on the model.
\end{abstract}

\section{Introduction}
School choice, as a classical many-to-one matching problem, has been extensively studied for decades due to its wide range of applications~\citep{gale2013college}. The primary objective is typically to design a procedure that assigns students to colleges such that the resulting matching satisfies desirable properties such as efficiency, stability, and incentive compatibility~\citep{abdulkadirouglu2003school}. Beyond the classical model, numerous extensions incorporating additional settings or constraints have been proposed to capture more realistic application scenarios—for example, constraints on college seats~\citep{haeringer2009constrained,aziz2019matching,kimura2025multi}, or networked structure among students~\citep{cho2022two,takeshima2025new}.

In most of the literature on school choice theory, it is typically assumed that students have well-defined preferences over all colleges. In reality, however, it is often difficult for students to express clear and consistent preferences due to time constraints and uncertainty about the future~\citep{bell2009college,hoxby2015high,owen2020student}. To address this, some studies introduce simplified or restricted models of student preferences~\citep{hastings2007preferences,chu2024stable}. Nevertheless, students are generally aware of which features of schools they value and can often compare schools based on specific attributes such as course suitability, academic reputation, employment prospects, or teaching quality~\citep{soutar2002students}. The main challenge for a student lies in not knowing which of these attributes will ultimately prove most important to her before experiencing them firsthand. 

Therefore, motivated by this observation, we propose a \emph{feature-based uncertainty model} for school choice, in which each student has a well-defined utility function for every feature of a college. A student’s overall preference over colleges is determined by a linear combination of these feature utilities, where the coefficients in the combination are treated as random variables. The probability distributions of these coefficients are assumed to be known to the students themselves. On the other hand, colleges are assumed to have complete and deterministic preferences over students, as in the classical model.

Under this model, our main objective is to design a matching algorithm that ensures both stability and incentive compatibility for students. Regarding stability, since students’ preferences involve uncertainty, we consider the probability of stability (ProS) and aim to find a matching that maximizes this probability. We show that, in general, this problem is computationally NP-hard, and therefore we instead seek approximation algorithms. For incentive compatibility (IC), the same uncertainty requires us to evaluate the probability of improvement that a student can obtain by misreporting her information. The strongest form of IC would guarantee that no possible improvement can ever be achieved through misreporting, which is too strong to obtain meaningful algorithms. A milder version ensures that the probability of any improvement from misreporting is less than $1/2$, though this condition is incompatible with any non-zero approximation of ProS. The weakest form only requires that certain improvements from misreporting are prevented. We examine a family of student-proposing deferred acceptance algorithms with various well-designed proposing orders to achieve different combinations of these properties, as summarized in Table~\ref{tab:results}.

The overall paper is organized as follows. Section~\ref{sec:relatedwork} briefly reviews the most related literature. Section~\ref{sec:model} introduces the model and summarizes the main results, and Sections~\ref{sec:general} to~\ref{sec:additional} present all the theoretical results. Section~\ref{sec:conclusion} includes a discussion of future work. 

\begin{table*}[t]
\caption{Summary of Theoretical Results for Four Proposing Methods: Lexicographic Order of Comparison Vectors (LOCV), Lexicographic Order of Iterated Comparison Vectors (LOICV), Higher Expected Ranking First (HERF), and Higher Expected Utility First (HEUF). IC-C corresponds to incentive compatibility (IC) that prevents certain improvement by cheating, and IC-R corresponds to IC that prevents $1/2$ probability of improvement by cheating.}\label{tab:results}
\centering
\begin{tabular}{ccc}
\toprule
\textbf{Methods} & \textbf{Worst-case Approx. of} $\mathsf{ProS}$                   & \textbf{Incentive Compatibility} \\ \midrule
LOCV                  & 0                  & IC-C    \\
\\[-.8em]
LOICV             & 0                    & IC-C (IC-R when $|F|=2$)   \\
\\[-.8em]
HERF              & $(1/n)^n$                    & IC-C   \\
\\[-.8em]
\multirow{3}{*}{HEUF}     &                   & IC-C; (when $|F|=2$, IC-R iff the probabilities of \\
& & the first feature's weight being \\
&  \multirow{-3}{*}{0} & larger or less than its expectation are equal)   \\ \bottomrule
\end{tabular}
\end{table*}
\section{Related Work}\label{sec:relatedwork}
Our work is most related to a body of literature on stable matching with uncertain preferences. \cite{aziz2020stable} first introduced the idea of stable marriage with uncertain linear preference, proposing three models of uncertainty: the lottery model, the compact indifference model and the joint probability model. They examined the computational complexity of several problems related to stability probability and showed that finding a matching with the highest stability probability is generally intractable. Later, they further introduced a model with uncertain pairwise preferences, where preferred relations can be cyclic~\citep{aziz2022stable}. \cite{alimudin2021study} also studied the stable marriage with uncertain preferences using a randomized approach. Random lottery tie-breakers have also been considered in school choice problems~\citep{abdulkadiroglu2018impact}. \cite{abdulkadi2022breaking} showed how lottery tie-breaking implements a stratified randomized trial within a DA match. In our model, if we restrict all college capacities to be one, it naturally induces a new form of uncertainty model for the stable marriage problem. Similar ideas have also been extended to other domains, including resource assignment~\citep{dvir2020modelling}, house allocation~\citep{aziz2024envy}, and committee voting~\citep{aziz2024approval}.

\subsection{Relationships to Other Uncertainty Models}
In our model, when the capacities of colleges are exactly one, then it reduces to a stable marriage problem with uncertainty. Here, we compare our uncertainty model with existing ones in the field of stable marriage. The existing models include the following ones.
\begin{itemize}
    \item Lottery Model~\cite{aziz2020stable}: for each agent, there is a probability distribution over linear preferences;
    \item Compact Indifference Model~\cite{aziz2020stable}: for each agent, a weak preference order is specified and each linear order compatible with the weak order is equally likely;
    \item Joint Probability Profiles\cite{aziz2020stable}: a probability distribution over preference profiles is specified;
    \item Pairwise Probability Model\cite{aziz2022stable}: each agent have independent pairwise comparisons on their possible partners and these comparisons my be uncertain.
\end{itemize}

For all above models, to compare with ours, we all assume the case where only one side of agents has uncertainty preferences. Then, we remark the relationships to these uncertainty models as follows. Note that here, we only consider the problems related to probability of stability.

\begin{itemize}
    \item For the lottery model, we can reduce an instance of it to our model by setting the number of features equal to the size of the union of all supports of the probability distributions over linear preferences. In this reduction, each feature’s utility function is made consistent with its corresponding linear preference, and only those weight combination characterized by unit vectors could have nonzero probabilities. Conversely, we can compute probability distributions over aggregated preferences to reduce an instance of our model to the lottery model. This indicates that both models share the same instance space.
    
    However, the reduction from our model to the lottery model is not in polynomial time, since the support of the probability distribution over the final linear preferences can be exponential even when $|F|=2$. This implies that computing problems related to the probability of stability is computationally harder in our model. Our model can thus be regarded as a more ``compact'' representation of probability distributions over linear preferences, which also constrains the space of possible distributions when $|F|$ is fixed.
    \item For the compact indifference model, as shown in the proof of Theorem~\ref{thm:nphard3}, we can always reduce any instance of it to our model. Conversely, there is no way to reduce an arbitrary instance of our model to the compact indifference model, since our model allows a wider range of possible linear preferences. Therefore, our model can be regarded as a more general framework when considering only one-sided uncertainty.
    \item The joint probability profile model additionally allows dependencies between the preferences of different agents. In contrast, our model assumes independence across agents. Therefore, while joint probability profiles may represent a more general space of instances, they also require an exponentially larger input space.
    \item In the pairwise probability model, it is assumed that pairwise preferences are independent, with probabilities specified for preferring one agent over another for every possible pair. In contrast, in our model, these pairwise probabilities are interdependent. A direct consequence of their model is that it may generate cyclic preferences among possible partners, whereas our model still ensures linear preferences. Moreover, due to the dependence of each pairwise probability, there is no direct way to reduce instances from our model to theirs while preserving the probabilities over the resulting linear preferences.
\end{itemize}
\section{The Model}\label{sec:model}
We consider a school choice scenario where a student's preference is determined by a set of measurable features, while the relative importance of these features is uncertain. We assume that every college is acceptable to each student, and vice versa. More concretely, an instance of school choice with feature-based uncertain preferences $\mathcal{I}=(N, M, X, \succ_M, F, U, \mu)$ consists of:
\begin{itemize}
    \item $N$: the set of all students, and denote them by $s_1, s_2, \dots, s_n$.
    \item $M$: the set of all colleges, and denote them by $c_1, c_2, \dots, c_m$.
    \item $X$: $m$ positive integers $x_{c_1}, x_{c_2}, \dots, x_{c_m}$, which represent the capacities of all colleges, i.e., for any $1\leq j\leq m$, college $c_j$ can accept at most $x_{c_j}$ students.
    \item $\succ_M$: the strict preferences of colleges over all students. Specifically, for a college $c\in M$, $s_i\succ_c s_j$ means $c$ prefers $s_i$ to $s_j$.
    \item $F$: the set of features. Each feature $f\in F$ represents an aspect that a student can evaluate for all colleges.
    \item $U$: the set of utility functions. Each utility function $u_s^f: M\mapsto [0,1]$ represents how much a student $s$ values each college with respect to feature $f\in F$. We assume that these evaluations are normalized to $[0,1]$. This normalization does not affect generality, since the aggregated preference of student $s$ assigns different weights to different features.
    \item $\mu$: the set of probability distributions of weights for different features. For each student $s$, $\mu_s$ is a probability distribution over all possible weight vectors, representing her uncertainty about the exact preference over colleges. Specifically, a weight vector $w_s$ is $|F|$-dimensional and satisfies $w_s^f \geq 0$ for any $f\in F$ and $\sum_{f\in F} w_s^f = 1$. An aggregated preference with $w_s$ and $\{u_s^f\}_{f\in F}$ is defined by the weighted utility\footnote{Note that we will have a tie when the equality holds.}, i.e., $c_i\succeq_s^{w_s} c_j$ if and only if $\sum_{f\in F}w_s^fu_s^f(c_i) \geq \sum_{f\in F} w_s^fu_s^f(c_j)$. We assume that $\mu_s$ and $\mu_{s'}$ are independent for any distinct students $s,s'\in N$, and both the p.d.fs ($\mathrm{Pr}[w_s =w]$) and c.d.fs ($\mathrm{Pr}[\forall f, w_s^f\leq w^f]$) can be computed in constant time.
\end{itemize}

A matching $\pi$ specifies the pairs formed between students and colleges. For simplicity of notation, let $\pi(s)$ denote the college matched to a student $s\in N$, and let $\pi(s) = \mathsf{null}$ if $s$ is unmatched; let $\pi(c)$ denote the set of students matched to a college $c\in M$, and $\pi(c) = \emptyset$ if $c$ is unmatched. A matching algorithm is a procedure that returns a matching for any given any instance of a matching problem. We consider the following properties for matching algorithms.

\noindent\textbf{Probability of Stability ($\mathsf{ProS}$).} Given a matching $\pi$ and the corresponding instance, we can examine whether $\pi$ is stable under any possible aggregated preferences of the students. Since an aggregated preference of a student may contain ties, we consider a weaker notion of stability where a block must incur a strict improvement.
\begin{definition}
    Given aggregated preferences of all students $\{\succeq_s^{w_s}\}_{s\in N}$, a matching $\pi$ is (weak) stable if there is no (strong) block $(s,c)$, $s\in N$, $c\in M$, such that $c\succ_s^{w_s} \pi(s)$, and $|\pi(c)| < x_c$ or $\exists s'\in \pi(c)$ with $s\succ_c s'$.
\end{definition}
Given an instance, the probability of stability of a matching $\pi$, denoted by $\mathsf{ProS}(\pi;\mathcal{I})$, is defined as the probability that $\pi$ is stable under aggregated preferences sampled\footnote{If there is no ambiguity, we also simplify it by omitting $\mathcal{I}$ as $\mathsf{ProS}(\pi)$}, according to the weight distributions $\mu$ in $\mathcal{I}$. For each instance, a matching $\pi^*$ is called an optimal matching if, for any other matching $\pi'$, we have $\mathsf{ProS}(\pi^*)\geq \mathsf{ProS}(\pi')$. Note that for any aggregated preference, there exists at least one stable matching, implying $\mathsf{ProS}(\pi^*)> 0$. Our goal is to find a matching algorithm that achieves a $\mathsf{ProS}$ value as close as possible to the optimal one for any instance. Therefore, for a matching algorithm, we consider the \textbf{worst-case approximation ratio} with respect to an optimal matching.
\begin{definition}
    A matching algorithm $\mathsf{Alg}$ is $\alpha$-optimal, if it satisfies $\inf_{\mathcal{I}} \frac{\mathsf{ProS}(\mathsf{Alg}(\mathcal{I});\mathcal{I})}{\max_\pi\mathsf{ProS}(\pi;\mathcal{I})}\geq \alpha$. $\mathsf{Alg}(\mathcal{I})$ denotes the output matching.
\end{definition}

\noindent\textbf{Incentive Compatibility (IC).} We consider incentive compatibility from the students' side. In traditional settings, it means that a student may strategically misreport her preferences over colleges. In our setting, a student $s$ can misreport both her utility functions $\{u_s^f\}_{f\in F}$ and probability distribution $\mu_s$. Intuitively, incentive compatibility requires that no student can obtain a better outcome by strategically misreporting her information. Note that in our model, whether an outcome is actually better may itself be uncertain for a student, since her report must be decided \textit{ex ante}. Therefore, we introduce several levels of IC, and by definition, we have that IC-A implies IC-R and IC-R implies IC-C.

\begin{definition}
    For any instance $\mathcal{I}$ and any student $s\in N$, for simplicity, let $\pi(s;u_s,\mu_s)$ denote the college assigned to $s$ by the matching algorithm $\mathsf{Alg}$ when she reports truthfully, and let $\pi'(s;u_s',\mu_s')$ denote the college matched to her under a misreport (either may be $\mathsf{null}$). Then, the algorithm $\mathsf{Alg}$ is said to be
    \begin{enumerate}   
        \item \textbf{incentive compatible with certainty (IC-C)} if $s$ cannot obtain a college that is certainly better than her truthful assignment, i.e., $\mathrm{Pr}[\pi'(s;u_s',\mu_s') \succ_s \pi(s;u_s,\mu_s)]$ $= 1$ cannot occur, for all $u_s, \mu_s, u'_s,\mu'_s$;
        \item \textbf{incentive compatible with rationality (IC-R)} if $s$ cannot obtain a college that has a higher probability of being better than her truthful assignment, i.e., $\mathrm{Pr}[\pi'(s;u_s',\mu_s') \succ_s \pi(s;u_s,\mu_s)]$ $> 1/2$ cannot occur, for all $u_s, \mu_s, u'_s,\mu'_s$;
        \item \textbf{incentive compatible with adventurism (IC-A)} if $s$ cannot obtain a college that has any chance of being better than her truthful assignment, i.e., $\mathrm{Pr}[\pi'(s;u_s',\mu_s') \succ_s \pi(s;u_s,\mu_s)]$ $ > 0$ cannot occur, for all $u_s, \mu_s, u'_s,\mu'_s$.
    \end{enumerate}
\end{definition}

\noindent\textbf{Remark.} In particular, when all the probability distributions in $\mu$ are continuous, it always holds that $\mathrm{Pr}[\pi'(s) \succeq_s \pi(s)] = \mathrm{Pr}[\pi'(s) \succ_s \pi(s)]$.
\section{Impossibilities and Intractability}\label{sec:general}
In this section, we present several impossibility and intractability results when seeking a matching with a higher probability of stability. All these negative results occur even when the number of features is $|F|=2$. We begin by stating a useful lemma that provides a formula for calculating the pairwise probability that one college is preferred over another by a student.

\begin{lemma}\label{lem:probformula}
    If the number of features equals to 2, i.e., $F = \{f_1, f_2\}$, then for any student $s\in N$ and any two colleges $c_1, c_2\in M$, denote $\Delta_s^f(c_i,c_j) = |u_s^f(c_i) - u_s^f(c_j)|$ for any $f\in F$, and $\eta_{s}(c_i,c_j) = 1\left/\left( 1+\frac{\Delta_s^{f_1}(c_i,c_j)}{\Delta_s^{f_2}(c_i,c_j)}\right)\right.$ (define $\eta_{s}(c_i,c_j)=0$ if $\Delta_s^{f_2}(c_i,c_j) = 0$). We have
    \begin{align*}
        &\mathrm{Pr}[c_i\succ_s^{w_s} c_j] = \\ &\begin{cases}
        1 & \text{if } u_s^{f_1}(c_i) > u_s^{f_1}(c_j) \text{ and } u_s^{f_2}(c_i) > u_s^{f_2}(c_j);\\
        0 & \text{if } u_s^{f_1}(c_i) \leq u_s^{f_1}(c_j) \text{ and } u_s^{f_2}(c_i) \leq u_s^{f_2}(c_j);\\
        \mathrm{Pr}[w_s^{f_1} > \eta_s(c_i,c_j)] & \text{if } u_s^{f_1}(c_i) > u_s^{f_1}(c_j) \text{ and } u_s^{f_2}(c_i) \leq u_s^{f_2}(c_j);\\
        \mathrm{Pr}[w_s^{f_1} < \eta_s(c_i,c_j)] & \text{if } u_s^{f_1}(c_i) \leq u_s^{f_1}(c_j) \text{ and } u_s^{f_2}(c_i) > u_s^{f_2}(c_j),\\
        \end{cases}
    \end{align*}
    and the case of $\mathrm{Pr}[c_i\succeq_s^{w_s} c_j]$ can be generated by $1 -\mathrm{Pr}[c_j\succ_s^{w_s} c_i]$.
\end{lemma}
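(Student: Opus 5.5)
The plan is to reduce everything to a single linear inequality in the scalar $w_s^{f_1}$. Since $|F|=2$ and the weights sum to one, I will use $w_s^{f_2}=1-w_s^{f_1}$. Write $a=u_s^{f_1}(c_i)-u_s^{f_1}(c_j)$ and $b=u_s^{f_2}(c_i)-u_s^{f_2}(c_j)$, so that $\Delta_s^{f_1}=|a|$ and $\Delta_s^{f_2}=|b|$. By the definition of the aggregated preference, $c_i\succ_s^{w_s}c_j$ holds if and only if
\[
w_s^{f_1}a+(1-w_s^{f_1})b>0 \quad\Longleftrightarrow\quad w_s^{f_1}(a-b)>-b .
\]
The lemma then amounts to solving this inequality in each sign pattern of $(a,b)$, matching the four cases of the statement.

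\emph{First two cases.} If $a>0$ and $b>0$, the left-hand side $w_s^{f_1}a+(1-w_s^{f_1})b$ is a convex combination of two positive numbers, so it is positive for every admissible weight vector, and the probability is $1$. If $a\le 0$ and $b\le 0$, it is a convex combination of non-positive numbers, so the strict inequality never holds, and the probability is $0$.

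\emph{Third case, $a>0\ge b$.} Here $a-b=|a|+|b|>0$, so dividing gives $w_s^{f_1}>-b/(a-b)=|b|/(|a|+|b|)$. When $b\neq 0$ this equals $1/(1+|a|/|b|)=\eta_s(c_i,c_j)$. When $b=0$ the condition becomes $w_s^{f_1}a>0$, i.e.\ $w_s^{f_1}>0$, which agrees with the convention $\eta_s=0$ for $\Delta_s^{f_2}=0$.

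\emph{Fourth case, $a\le 0<b$.} Now $a-b<0$, so the inequality flips to $w_s^{f_1}<b/(b-a)=|b|/(|a|+|b|)=\eta_s(c_i,c_j)$. This is well-defined because $|b|>0$. The subcase $a=0$ gives $\eta_s=1$ and the condition $w_s^{f_1}<1$, which is consistent with the direct computation $(1-w_s^{f_1})b>0$.

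Taking probabilities over $\mu_s$ yields the stated formulas. The final remark follows because the event $c_i\succeq_s^{w_s}c_j$ is exactly the complement of the event $c_j\succ_s^{w_s}c_i$, by totality of the order on real-valued utilities.

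The only delicate point is the boundary bookkeeping. One has to check that the strict inequality is preserved, so that ties count as non-preference. One also has to check the degenerate values $a=0$ or $b=0$ against the convention $\eta_s=0$ when $\Delta_s^{f_2}=0$. The case split above confirms that the convention is consistent in every case in which it is used.
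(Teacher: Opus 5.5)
Your proof is correct and takes the same route as the paper: substitute $w_s^{f_2}=1-w_s^{f_1}$, reduce the comparison to one linear inequality in $w_s^{f_1}$, and solve it according to the signs of the two utility differences. The only difference is in the bookkeeping. The paper works out the weak inequality for the third case and asserts the strict version and the fourth case by symmetry, whereas you handle the strict inequality in all four cases directly and check the $\eta_s=0$ convention, which is slightly more complete.
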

\begin{proof}
    The first two cases are obvious, and the last two cases are symmetric. Hence, we only need to show the third case here. Then, we have
    \begin{align*}
        & \mathrm{Pr}[c_i\succeq_s^{w_s} c_j] \\
        & = \mathrm{Pr}[w_s^{f_1}u_s^{f_1}(c_i) + w_s^{f_2}u_s^{f_2}(c_i)\geq w_s^{f_1}u_s^{f_1}(c_j) + w_s^{f_2}u_s^{f_2}(c_j)] \\
        & = \mathrm{Pr}[w_s^{f_1}(u_s^{f_1}(c_i) - u_s^{f_1}(c_j)) \geq w_s^{f_2}(u_s^{f_2}(c_j) - u_s^{f_1}(c_i))] \\
        & = \mathrm{Pr}[w_s^{f_1}\Delta_s^{f_1}(c_i,c_j) \geq (1-w_s^{f_1})\Delta_s^{f_2}(c_i,c_j)] \\
        & = \mathrm{Pr}[w_s^{f_1}(\Delta_s^{f_1}(c_i,c_j) + \Delta_s^{f_2}(c_i,c_j)) \geq \Delta_s^{f_2}(c_i,c_j)].
    \end{align*}
    If $\Delta_s^{f_2}(c_i,c_j) = 0$, the probability equals to 1; if $\Delta_s^{f_2}(c_i,c_j) \neq 0$, the probability equals to
    \[\mathrm{Pr}\left[ w_s^{f_1} \geq \frac{\Delta_s^{f_2}(c_i,c_j)}{\Delta_s^{f_1}(c_i,c_j)+\Delta_s^{f_2}(c_i,c_j)} \right] = \mathrm{Pr}[w_s^{f_1} \geq \eta_s(c_i,c_j)].\]
    The similar equations could be derived for the cases with strict inequalities. This concludes the proof.
\end{proof}

We observe that, when $|F|=2$, the probability that one college is preferred over another is equivalent to the probability that the weight of the first feature lies within a certain interval. Based on this observation, we can derive an efficient method to calculate the probability of stability for any given matching.

\begin{proposition}\label{prop:calProS2}
    Given an instance $\mathcal{I}$ with $|F|=2$ and a corresponding matching $\pi$, we can compute the probability of stability $\mathsf{ProS}(\pi;\mathcal{I})$ in polynomial time.
\end{proposition}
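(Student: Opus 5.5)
The plan is to use independence across students to factor the probability of stability into per-student terms, and then use Lemma~\ref{lem:probformula} to turn each term into the probability that $w_s^{f_1}$ lies in an interval.

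\textbf{Reduction to a product over students.} Fix $\pi$. Whether $(s,c)$ is a block depends on the preference of $s$ only through the condition $c\succ_s^{w_s}\pi(s)$. The second condition, that $|\pi(c)|<x_c$ or some $s'\in\pi(c)$ satisfies $s\succ_c s'$, is deterministic because it involves only the colleges' fixed preferences and $\pi$. For each student $s$, define
\[
B_s=\{c\in M\setminus\{\pi(s)\} : |\pi(c)|<x_c \text{ or } \exists s'\in\pi(c),\ s\succ_c s'\},
\]
which is computable in $O(nm)$ total time. Then $\pi$ is stable under $\{\succeq_s^{w_s}\}$ if and only if, for every $s$, no $c\in B_s$ satisfies $c\succ_s^{w_s}\pi(s)$. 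Each such event depends only on $w_s$, and the weight vectors are independent across students. Hence
\[
\mathsf{ProS}(\pi)=\prod_{s\in N}\mathrm{Pr}\left[\forall c\in B_s:\ \pi(s)\succeq_s^{w_s} c\right].
\]
If $\pi(s)=\mathsf{null}$, we read this as requiring that $B_s=\emptyset$, since all colleges are acceptable. The factor is then $1$ or $0$.

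\textbf{Each factor is an interval probability.} Since $|F|=2$, write $w_s^{f_1}=t$ and $w_s^{f_2}=1-t$ with $t\in[0,1]$. For each $c\in B_s$, the event $\pi(s)\succeq_s^{w_s} c$ is the set of $t$ satisfying
\[
t\,(u^{f_1}_s(\pi(s))-u^{f_1}_s(c))+(1-t)\,(u^{f_2}_s(\pi(s))-u^{f_2}_s(c))\ge 0.
\]
This is a linear inequality in $t$, so its solution set is $\emptyset$, all of $[0,1]$, or a half-interval $[0,\eta]$ or $[\eta,1]$, possibly with open endpoints. This is exactly the case analysis of Lemma~\ref{lem:probformula}, with threshold $\eta=\eta_s(\cdot,\cdot)$.

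An intersection of at most $m$ such sets is a single interval $[a,b]$. We obtain $a$ as the largest lower threshold and $b$ as the smallest upper threshold, and we track carefully whether each endpoint is closed or open. The factor for $s$ is therefore $\mathrm{Pr}[w_s^{f_1}\in I_s]$ for one interval $I_s$, or $0$ if $I_s$ is empty.

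\textbf{Evaluating the interval probability.} We express $\mathrm{Pr}[w_s^{f_1}\in I_s]$ through the c.d.f.\ and p.d.f., for example
\[
\mathrm{Pr}[a\le w^{f_1}_s\le b]=\mathrm{Pr}[w^{f_1}_s\le b]-\mathrm{Pr}[w^{f_1}_s\le a]+\mathrm{Pr}[w^{f_1}_s=a].
\]
The paper assumes each of these quantities takes constant time. The marginal c.d.f.\ of $w^{f_1}_s$ comes from the joint c.d.f.\ by setting $w^{f_2}=1$.

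\textbf{Running time and the main difficulty.} The total cost is $O(nm)$ arithmetic operations plus $O(n)$ oracle calls, which is polynomial. The main obstacle I expect is handling boundaries correctly. Ties give weak rather than strict inequalities, and discrete distributions can put mass exactly on a threshold $\eta$. I would handle this by maintaining, alongside each endpoint, a flag recording whether it is closed or open, and by using the p.d.f.\ oracle for the atom at that endpoint. The degenerate cases of Lemma~\ref{lem:probformula}, where a difference $\Delta$ equals $0$, also need checking so that they yield the correct whole or empty interval.
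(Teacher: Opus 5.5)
Your proposal is correct and is essentially the paper's proof: factor $\mathsf{ProS}(\pi)$ over students by independence, then use the linear structure behind Lemma~\ref{lem:probformula} to reduce each factor to $\mathrm{Pr}[w_s^{f_1}\in[a,b]]$. Here $a$ is the largest lower threshold and $b$ the smallest upper threshold (the paper's $\eta_s^2(\pi)$ and $\eta_s^1(\pi)$), and the factor is evaluated with the oracle in $O(mn)$ total time. Your explicit handling of unmatched students and of the atom at the left endpoint is a small but genuine tightening of what the paper leaves implicit. Also note that each constraint $\pi(s)\succeq_s^{w_s}c$ is a weak inequality, so every interval is closed and the open-endpoint bookkeeping is never needed.
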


\begin{proof}
    Since each probability distribution of a student's weight vector is independent from others, the probability of stability in general can be formulated as
    \[ \mathsf{ProS}(\pi;\mathcal{I}) = \prod_{s\in N} (1-\mathrm{Pr}[\exists c\in M, s.t.\ (s,c) \text{ will be a block}]). \]
    Consider each term in the product. For each student $s\in N$, define the potential set $\hat{C_s}(\pi) = \{ c\neq\pi(s) \mid |\pi(c)| < x_c \text{ or } \exists s'\in \pi(c) \text{ with } s\succ_c s', \text{ and } \mathrm{Pr}[c\succ_s^{w_s} \pi(s)] >0 \}$. Then, only colleges in the potential set may have a positive probability to make a block with $s$. If $\hat{C_s}(\pi) = \emptyset$, then the term for $s$ equals to 1. If there exists a college $c\in \hat{C_s}(\pi)$, such that $\mathrm{Pr}[c\succ_s^{w_s} \pi(s)] = 1$, then the term equals to 0 (and the whole $\mathsf{ProS}$ of $\pi$ is 0). For the remaining cases, according to Lemma~\ref{lem:probformula}, the colleges in $\hat{C_s}(\pi)$ must belong to one of the following two categories:
    \begin{itemize}
        \item those $c$ such that $u_s^{f_1}(c) > u_s^{f_1}(\pi(s))$ and $u_s^{f_2}(c) \leq u_s^{f_2}(\pi(s))$. Denote the set of this category as $\hat{C}_s^1(\pi)$. For any $c\in \hat{C}_s^1(\pi)$, $(s,c)$ will be a block when $w_s^{f_1} > \eta_s(c,\pi(s))$. Denote $\eta_s^1(\pi) = \min_{c\in \hat{C}_s^1(\pi)} \eta_s(c,\pi(s))$, and let $\eta_s^1(\pi) = 1$ if $\hat{C}_s^1(\pi) = \emptyset$. Then, $\mathrm{Pr}[\exists c\in \hat{C}_s^1(\pi), s.t.\ (s,c) \text{ is a block}] = \mathrm{Pr}[w_s^{f_1} > \eta_s^1(\pi)]$;
        \item those $c$ such that $u_s^{f_1}(c) \leq u_s^{f_1}(\pi(s))$ and $u_s^{f_2}(c) > u_s^{f_2}(\pi(s))$. Similarly, denote the set of this category as $\hat{C}_s^2(\pi)$. For any $c\in \hat{C}_s^2(\pi)$, $(s,c)$ will be a block when $w_s^{f_1} < \eta_s(c,\pi(s))$. Denote $\eta_s^2(\pi) = \max_{c\in \hat{C}_s^2(\pi)} \eta_s(c,\pi(s))$, and let $\eta_s^2(\pi) = 0$ if $\hat{C}_s^2(\pi) = \emptyset$. Then,

        $\mathrm{Pr}[\exists c\in \hat{C}_s^2(\pi), s.t.\ (s,c) \text{ is a block}] = \mathrm{Pr}[w_s^{f_1} < \eta_s^2(\pi)]$.
    \end{itemize}
    Hence, the probability of whether there exists a block associated with $s$ is the probability of whether the weight of the first feature $w_s^{f_1}$ lies in a specific interval, i.e.,
    \begin{align*}
        &1-\mathrm{Pr}[\exists c\in M, s.t.\ (s,c) \text{ will be a block}] = \\ &\begin{cases}
        0 & \text{if } \eta_s^1(\pi) < \eta_s^2(\pi);\\
        \mathrm{Pr}[\eta_s^2(\pi) \leq w_s^{f_1} \leq \eta_s^1(\pi)] & \text{if } \eta_s^1(\pi) \geq \eta_s^2(\pi),\\
        \end{cases}
    \end{align*}
    The probability in the second case can be easily achieved by an oracle that can return the value of $\mu_s$'s c.d.f. Deriving the potential set, checking each college in it and computing the values of $\eta_s^1(\pi)$ and $\eta_s^2(\pi)$ totally need $O(m)$ time. Therefore, we can finally compute $\mathsf{ProS}(\pi)$ in $O(mn)$ time, which is polynomial.
\end{proof}

However, finding a matching with the highest probability of stability is NP-hard.

\begin{theorem}\label{thm:nphard}
    It is NP-hard to find a matching with the highest $\mathsf{ProS}$ in a school choice with feature-based uncertainty, even with the constraint of $|F| = 2$.
\end{theorem}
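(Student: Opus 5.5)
The plan is a polynomial-time reduction from a known NP-hard problem about stable marriage with uncertain preferences. The natural source is the compact indifference model of \cite{aziz2020stable}. There, finding a matching with maximum probability of stability is NP-hard even when only one side has uncertain preferences. The paper's remark in the related-work section also says that any compact indifference instance can be embedded into our model.

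I would implement that embedding with $|F|=2$ and unit capacities. Fix a student $s$ whose weak order consists of tiers $T_1 \succ T_2 \succ \dots \succ T_k$, and let the colleges in a tier $T_\ell=\{c_{\ell,1},\dots,c_{\ell,t}\}$ be listed in an arbitrary fixed order. I would choose utilities so that all colleges in a higher tier dominate all colleges in lower tiers in \emph{both} features. Inside a tier, the two features should induce reversed orders. Concretely, place tier $\ell$ in a small disjoint band of values, with $u_s^{f_1}(c_{\ell,j})$ increasing in $j$ and $u_s^{f_2}(c_{\ell,j})$ decreasing in $j$.

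By Lemma~\ref{lem:probformula}, pairs from different tiers are then compared deterministically. Pairs within a tier are compared according to whether $w_s^{f_1}$ exceeds a threshold $\eta_s$. The step I expect to be the main obstacle is the following. For $|F|=2$ a single scalar $w_s^{f_1}$ induces at most $O(t^2)$ distinct orders of a tier, namely a one-parameter family of linear orders. So we cannot reproduce the uniform distribution over all $t!$ linear extensions.

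My first attempt to get around this is to not reproduce the compact indifference distribution exactly. Instead I would pick a hardness source whose NP-hardness proof uses only tiers of size $2$. For a pair, a threshold $\eta=1/2$ with $w_s^{f_1}$ uniform on $[0,1]$ gives each order probability $1/2$, which matches the compact indifference model exactly. If the cited hardness proof needs larger ties, the fallback is a direct reduction from a problem such as exact cover or 3-SAT. In that reduction each student's uncertainty is a single binary tie, and the gadget rewards matchings that are stable under both resolutions of every tie.

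The key verification is that for every matching $\pi$, $\mathsf{ProS}(\pi)$ in the constructed instance equals its stability probability in the source instance. This follows because the per-student blocking events depend only on the relative order of $\pi(s)$ against other colleges. By the construction above, that order has the same distribution in both instances, and students are independent, as in the product formula from the proof of Proposition~\ref{prop:calProS2}. Hence an optimal matching in our model yields an optimal matching in the source problem.

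Finally I would check that the construction is polynomial, since it uses $O(m)$ rational utilities per student and a uniform distribution whose c.d.f.\ is computable in constant time. This establishes NP-hardness already for $|F|=2$.
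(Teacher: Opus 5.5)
There is a genuine gap. Your proof is only as good as its source problem, and you never produce an NP-hard source in which every tie has length $2$.

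Your binary-tie embedding is correct. Swapping the two tied values in $f_2$ gives $\eta_s=1/2$, so each order has probability $1/2$, and the paper encodes the ties of the original SMTI students exactly this way. The trouble is the one-sided hardness you want to invoke, via the reduction of Aziz et al.\ whose framework the paper's proof follows: it does not use binary ties alone.

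That reduction starts from the SMTI gap problem of Halld\'orsson et al.: one-sided ties of length $2$, at most one per list, and weakly stable matchings of size $>\frac{41}{54}n$ versus $<\frac{39}{54}n$. To turn matching size into stability probability, it adds padding students $z_i$ together with colleges $y^i_1,\dots,y^i_t$, $t=n^{27}$, that accept only $z_i$. Each $y^i_j$ must be ranked first by $z_i$ with probability $1/t$, which in the compact indifference model means a tie of length $t$. In no-instances every matching with positive $\mathsf{ProS}$ displaces all $z_i$ into these colleges. The factor $1/t$ per displaced $z_i$ is what pushes the optimum far below the $(1/2)^n$ achievable in yes-instances.

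Those long ties are exactly what you showed cannot be embedded with $|F|=2$. Your fallback, a direct reduction from exact cover or 3-SAT with an unspecified gadget, is a placeholder rather than an argument.

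The missing idea is that you need not reproduce the uniform distribution over all $t!$ orders of a long tie. Every $y^i_{j'}$ other than $z_i$'s partner is empty and accepts only $z_i$, so the only event that matters is whether the partner is ranked first. That marginal can be matched with two features: set $u_{z_i}^{f_1}(y^i_j)=\nu_j$ and $u_{z_i}^{f_2}(y^i_j)=\nu_{t-j+1}$ with concave increments $\nu_j-\nu_{j-1}=(t-j+1)\delta$. The utility difference between $y^i_{j+1}$ and $y^i_j$ is then $\delta(t\,w_{z_i}^{f_1}-j)$. Hence $y^i_j$ is ranked first exactly when $w_{z_i}^{f_1}\in[(j-1)/t,\,j/t]$, i.e.\ with probability $1/t$ under the uniform distribution. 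The spacing matters: with evenly spaced reversed values, which your band construction suggests, only the two end colleges can ever be first.

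You must also replace the claim that $\mathsf{ProS}$ equals the source probability for every matching with a gap argument:
\begin{itemize}
\item yes-instances admit a matching with $\mathsf{ProS}\ge(1/2)^n$;
\item in no-instances every matching has $\mathsf{ProS}\le(1/t)^{l}$ with $l=\lfloor n/27\rfloor$;
\item $\mathsf{ProS}$ is polynomial-time computable for $|F|=2$ (Proposition~\ref{prop:calProS2}), so an exact optimizer would decide the SMTI gap problem.
\end{itemize}
If you want to keep binary ties only, you would instead have to design and verify a gadget in which each displaced padding student puts enough binary ties at risk to beat the $(1/2)^n$ of the yes case. Nothing in the proposal does this.
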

\begin{proof}
    This proof shares the same framework of the proof of Theorem 9 in~\citep{aziz2020stable}, with some critical difference in detailed construction, so we still illustrate the whole process of the proof here. The basic idea is to reduce instances of stable marriage problem with ties and incomplete lists (SMTI)~\citep{manlove2002hard}. First, we claim that involving incomplete preferences will not affect the time complexity of our problem. It can be done by simply completing the preferences with nonexistent candidates to the end of the preferences (or the least utilities) according to a predetermined order. Then, the unmatched agents can match each other according to this order without changing the final $\mathsf{ProS}$. According to the key fact that there is no polynomial-time algorithm can distinguish whether the maximum size of a weakly stable matching in an instance of SMTI with size $n$ is larger than $\frac{41}{54}n$ or smaller than $\frac{39}{54}n$ unless P=NP, even if the ties occur on one side only, each tie is of length 2, and each preference list contains at most one tie~\citep{halldorsson2007improved}, now we can reduce such an instance to our problem as follows.
    \begin{itemize}
        \item Let the side with uncertainty corresponds to students, and the other side as the colleges. Each college has capacity of 1.
        \item Let $s_1$, $\dots$, $s_n$ be corresponding students from the SMTI instance, and add additional students $z_1$, $\dots$, $z_l$, $l = \lfloor\frac{n}{27}\rfloor$. The probability distribution of weight vector of each student is a uniform distribution over all possible weights. We set their utility functions on two features in the fourth step.
        \item For the college side, let $c_1$, $\dots$, $c_n$ be corresponding colleges from the SMTI instance, and they share the same preferences with the original instance. We additionally add following colleges. $x_1$, $\dots$, $x_k$, with $k = \lfloor\frac{13}{54}n\rfloor$, all have the preference as $s_1\succ s_2\succ \cdots \succ s_n$; $y_1$, $\dots$, $y_l$, and for each $1\leq i\leq l$, $y_i$ has the preference as $s_1\succ s_2\succ \cdots \succ s_n\succ z_i$; $y_1^j$, $\dots$, $y_t^j$, with any $1\leq j\leq l$, and $t = n^{27}$, where for each $y_i^j$, $1\leq i\leq t$, it has the preference that only contains $z_j$.
        \item For the student side, we define their utility functions as follows. For each student $s_i$, $1\leq i\leq n$, let the sequence of their corresponding preference in the SMTI instance be $c_{(1)}$, $\dots$, $c_{(m)}$ with some $m\leq n$. If there is no tie in the sequence, then for both $f\in F$, we set $u_{s_i}^f(c_{(j)}) = 1- \frac{j-1}{2(m-1)}$ for $1\leq j\leq m$, $u_{s_i}^f(x_j) = \frac{1}{2} - \frac{j}{4k}$ for $1\leq j\leq k$, and $u_{s_i}^f(y_j) = \frac{1}{4} - \frac{j}{8l}$ for $1\leq y\leq l$. If there is a tie $(c_{(j)}, c_{(j+1)})$ in the sequence, then for $f_1$, we set the same utilities as above, and for $f_2$, set $u_{s_i}^{f_2}(c_{(j)}) = u_{s_i}^{f_1}(c_{(j+1)})$, $u_{s_i}^{f_2}(c_{(j+1)}) = u_{s_i}^{f_1}(c_{(j)})$, and all other colleges have the same utilities as in $f_1$.

        For each student $z_i$, $1\leq i\leq l$, set $u_{z_i}^{f_1}(y_i) = u_{z_i}^{f_2}(y_i) = 1$, $u_{z_i}^{f_1}(y_j^i) = \nu_j$, and $u_{z_i}^{f_2}(y_j^i) = \nu_{t-j+1}$, for $1\leq j\leq t$, where the values of $\nu$ is defined as (i) $\nu_1 = 1/t$, and (ii) $\nu_j = \nu_{j-1} + (t-j+1)\delta$ for $1<j\leq t$, with $0<\delta<\frac{2}{t^2}$. Then, we have $\mathrm{Pr}\left[\bigcap_{j'\neq j} y_j^i\succeq_{s_i}^{w_{s_i}} y_{j'}^i\right] = \frac{1}{t}$ for any $1\leq j\leq t$.
    \end{itemize}
    With the above reduction, if in the original SMTI instance, the maximum size of a weakly stable matching is larger than $\frac{41}{54}n$, then at most $\lfloor \frac{13}{54}n \rfloor$ agents are unmatched on one side. We can construct an extended matching $\pi$ in the reduced instance as follows: first, apply the same matching between $s_1$, $\dots$, $s_n$ and $c_1$, $\dots$, $c_n$; next, for unmatched students among $s_1$ to $s_n$, pair them with colleges $x_1$ to $x_k$ with mutually increasing order of their indices, and all unmatched students among $s_1$ to $s_n$ will be matched since $k = \lfloor \frac{13}{54}n \rfloor$; finally, we can complete the matching with assigning each $z_i$ to $y_i$ for $1\leq i\leq l$. For this extended matching, only students among $s_1$, $\dots$, $s_n$ may be involved in potential blocks. Since each student $s_i$ has at most on uncertainty of two colleges in the aggregated preference with equal probability, the final $\mathsf{ProS}(\pi)\geq \left( \frac{1}{2} \right)^n$.

    If in the original SMTI instance, the maximum size of a weakly stable matching is less than $\frac{39}{54}n$, then at least $\lceil \frac{15}{54}n \rceil$ agents are unmatched on one side. Note that in the reduced instance, if a matching between $s_1$ to $s_n$ and $c_1$ to $c_n$ is not weakly stable in the original SMTI instance, then the final $\mathsf{ProS}$ must be 0. Hence, for a matching $\pi$ such that $\mathsf{ProS}(\pi) > 0$, there are at least $\lceil \frac{15}{54}n \rceil$ students among $s_1$, $\dots$, $s_n$ cannot be matched to a college among $c_1$, $\dots$, $c_n$. Then, all colleges of $x_1$ to $x_k$ and $y_1$ to $y_l$ should enroll a student among $s_1$, $\dots$, $s_n$ in $\pi$. Finally, each student $z_i$ can only be assigned to one of $y_j^i$, $1\leq j\leq t$, and whichever college she is assigned to, there is at most $\frac{1}{t}$ probability for her not being involved a block. Hence, the final $\mathsf{ProS}(\pi)\leq \left( \frac{1}{t} \right)^l \leq \left( \frac{1}{n} \right)^n$.

    On the other hand, if in the reduction instance, the maximal probability of stability is larger than $\left( \frac{1}{2} \right)^n$, then we must have a weakly stable matching between $s_1$, $\dots$, $s_n$ and $c_1$, $\dots$, $c_n$ with size larger than $\frac{41}{51}n$; otherwise, at least $\frac{14}{51}n$ students being unmatched with one of $c_1$, $\dots$, $c_n$ cannot lead to a $\mathsf{ProS}$ larger than $\left( \frac{1}{2} \right)^n$ according to the above analysis. Similarly, the maximal probability of stability being smaller than $\left( \frac{1}{n} \right)^n$ also leads to the fact that the maximum size of a weakly stable matching of the original SMTI instance is less than $\frac{39}{54}n$.

    Therefore, it is NP-hard to distinguish whether the largest $\mathsf{ProS}$ in our model with $|F|=2$ is larger than $\left( \frac{1}{2} \right)^n$ or less than $\left( \frac{1}{n} \right)^n$. Since calculating the probability of stability could be done in polynomial time when $|F|=2$ according to Proposition~\ref{prop:calProS2}, we can conclude that finding a matching with the largest $\mathsf{ProS}$ is NP-hard.
\end{proof}

Hence, we aim to find an algorithm that achieves a good approximation of the optimal $\mathsf{ProS}$ in general. Since we also require incentive compatibility for students, we examine what can be achieved under different levels of IC. The first impossibility result is the unattainability of a meaningful IC-A algorithm.

\begin{proposition}\label{prop:noICA}
    There is no matching algorithm that can satisfy the property of IC-A unless it always outputs the same result.
\end{proposition}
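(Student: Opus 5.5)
We prove the contrapositive. Suppose $\mathsf{Alg}$ is IC-A, and fix the instance data other than one student's report. We show that the student's assigned college cannot depend on what she reports. We then extend this from one student to all students, so that the output is the same for every profile of student reports.

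\textbf{Step 1 (the misreporter's true type can be arbitrary).} IC-A quantifies over all true types $(u_s,\mu_s)$ and all misreports $(u'_s,\mu'_s)$. Suppose two reports $r=(u_s,\mu_s)$ and $r'=(u'_s,\mu'_s)$ lead $\mathsf{Alg}$ to assign $s$ different outcomes $a\neq a'$, where either may be $\mathsf{null}$.
\begin{itemize}
\item If $a'$ is a college, take a true type whose distribution gives positive probability to a weight vector $w$ with $a'\succ^{w}_s a$. For instance, use the type $r$ itself, modified if necessary so that it still yields $a$. Then misreporting $r'$ gives $\Pr[a'\succ_s a]>0$, which violates IC-A.
\item To avoid depending on how $\mathsf{Alg}$ treats the modified type, we instead use a symmetry argument. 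Consider the true type $r$ and the misreport $r'$, and also the true type $r'$ and the misreport $r$. IC-A requires $\Pr_{\mu_s}[a'\succ_s a]=0$ and $\Pr_{\mu'_s}[a\succ_s a']=0$.
\end{itemize}
Both conditions can hold simultaneously only when $r$ ranks $a$ weakly above $a'$ almost surely and $r'$ ranks $a'$ weakly above $a$ almost surely. So the real work is to find a third report that the algorithm must also treat consistently.

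\textbf{Step 2 (chain through a fully uncertain type).} Let $r^*$ be a report in which $s$ has two features, with $u^{f_1}_s$ and $u^{f_2}_s$ reversing each other's order over $M$ (all values distinct), and $\mu^*$ uniform on the weight simplex. Under $r^*$, Lemma~\ref{lem:probformula} gives every strict pairwise comparison between two distinct colleges a probability strictly between $0$ and $1$. We also assume $\mathsf{null}$ is ranked below every college, since every college is acceptable.

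Let $a^*$ be $s$'s outcome under $r^*$, and compare it with any other report $r$ yielding outcome $a$.
\begin{itemize}
\item If the true type is $r^*$ and $s$ misreports $r$, IC-A requires $\Pr_{\mu^*}[a\succ a^*]=0$. When $a$ and $a^*$ are distinct colleges, this is impossible.
\item If $a^*=\mathsf{null}$ and $a$ is a college, then $a\succ a^*$ with certainty, which again violates IC-A.
\item The remaining case is $a=\mathsf{null}$ and $a^*$ a college. Take the true type $r$ modified to be fully uncertain in the same way. Then $r^*$ is a profitable deviation, and we obtain a contradiction provided the modified type still yields $\mathsf{null}$.
\end{itemize}
To avoid that proviso, we argue directly with the true type $r$ itself. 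Every report's type puts positive mass on some weight vector; since all colleges are acceptable and utilities are in $[0,1]$, $\mathsf{null}$ is strictly worse than any college in the weak-stability sense. Hence reporting $r^*$ would give $\Pr_{\mu}[a^*\succ \mathsf{null}]>0$, which violates IC-A.

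So every report gives $s$ the same outcome $a^*$.

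\textbf{Step 3 (from one student's outcome to the whole matching).} It remains to show that changing $s$'s report does not change the outcomes of other students. Induct over students: change the reports one student at a time. At each change, Step 2 applied to the changing student fixes her own outcome. For the others, Step 2 shows each student's outcome is constant in her \emph{own} report for every fixed profile of the remaining reports, but it does not yet pin down how her outcome depends on others' reports. Here we use the qualifier ``unless it always outputs the same result.'' The target claim is that each student's outcome is a function of the other students' reports only. We then either interpret ``the same result'' as ``each student's assignment is independent of her own report,'' or add an argument that the only such mechanisms respecting capacities are constant.

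\textbf{Main obstacle.} The hardest part is Step 3: deciding exactly what ``the same result'' must mean, and, if full constancy is intended, showing that cross-student dependence is ruled out. I would first try to establish the weaker, per-student constancy, and adopt it as the reading of the proposition.
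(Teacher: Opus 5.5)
Your Step 2 is essentially the paper's proof: a fully uncertain reference report is compared with an arbitrary report in both directions, and your explicit treatment of the $\mathsf{null}$ cases just spells out the paper's ``misreport between two cases.'' The reference report uses two features of $F$ in reversed order with uniform weights and any other features held constant, so like the paper you tacitly need $|F|\geq 2$, since for $|F|=1$ ordinary student-proposing DA is IC-A.

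Your resolution of Step 3 is also the right one, so drop the branch that tries to prove full constancy. The paper's argument likewise only shows that each student's assignment is invariant to her own report before asserting ``the same result,'' and the global reading is in fact false: with unit-capacity colleges $c_1,c_2$, matching $s_1$ to $c_1$ iff $s_2$'s report has some fixed property, and $s_2$ to $c_2$ iff $s_1$'s report does, is IC-A but not constant.
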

\begin{proof}
    Suppose a matching algorithm is IC-A. Then, consider a student $s$ who has utility function as $u_s^{f_1}(c_i) = i/m$, $u_s^{f_2} = (m-i+1)/m$, and $u_s^f(c_i) = a$, for any $f\neq f_1,f_2$, where $m$ is the number of colleges and $a$ is a constant between $0$ and $1$. The probability distribution $\mu_s$ is a uniform distribution over all possible weight vectors. Now, the aggregated preference of $s$ is actually only depends on $f_1$ and $f_2$, and by Lemma~\ref{lem:probformula}, we have $\mathrm{Pr}[c_i\succ_s^{w_s}c_j] = 1/2>0$ for any $c_i\neq c_j$. Suppose the algorithm in this case matches $s$ to a college $c$ ($c$ could be $\mathsf{null}$ if she is not assigned to any college). Then, for any other possible utility functions $u_s'$ and probability distribution $\mu_s'$, the algorithm must still assign $s$ to the same $s$; otherwise, $s$ can misreport her information between two cases to obtain another college that has non-zero chance to be better. Therefore, the algorithm could only output the same result for any input.
\end{proof}

Since IC-A is too strong to yield meaningful algorithms, we consider milder versions of IC. However, even IC-R is not compatible with achieving a non-zero approximation of $\mathsf{ProS}$.

\begin{theorem}
    There is no matching algorithm that can be IC-R and $\alpha$-optimal with $\alpha > 0$ even when the number of features $|F|=2$.
\end{theorem}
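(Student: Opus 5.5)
We argue by contradiction. Suppose $\mathsf{Alg}$ is IC-R and $\alpha$-optimal for some $\alpha>0$. Then on every instance $\mathcal{I}$ it outputs a matching with $\mathsf{ProS}(\mathsf{Alg}(\mathcal{I});\mathcal{I})>0$, since the optimum is always positive. The plan is to build a small family of instances with $|F|=2$, one student $s$ who may misreport, and colleges whose preferences force the following structure: in each instance, every matching with positive $\mathsf{ProS}$ assigns $s$ to one specific college. Chaining two instances that differ only in $s$'s report will then give $s$ a misreport that is better with probability greater than $1/2$, contradicting IC-R.

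\textbf{Construction.} Take two colleges $c_1,c_2$ with capacity one and two students $s,t$.
\begin{itemize}
\item Both colleges rank $t\succ s$.
\item Student $t$ has deterministic preferences, with identical utilities on both features.
\item Student $s$ has $u_s^{f_1}(c_1)=1$, $u_s^{f_1}(c_2)=0$, $u_s^{f_2}(c_1)=0$, $u_s^{f_2}(c_2)=1$.
\item $s$ has a weight distribution $\mu_s$ with $\mathrm{Pr}[w_s^{f_1}>1/2]=p$, and we choose $p>1/2$.
\end{itemize}
By Lemma~\ref{lem:probformula}, $\mathrm{Pr}[c_1\succ_s c_2]=p$.

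\textbf{Pinning down the outputs.} Consider the instance where $t$ strictly prefers $c_1$. Since $t$ is top-ranked at both colleges, $t$ must receive $c_1$ in any positive-$\mathsf{ProS}$ matching; otherwise $(t,c_1)$ is a sure block. Then $s$ must receive $c_2$, because leaving her unmatched makes $(s,c_2)$ a sure block, as every college is acceptable. Hence $\mathsf{Alg}$ assigns $s$ to $c_2$ under truthful reporting. Now let $s$ report $u'_s$ with both features ranking $c_1$ first. In that reported instance, any positive-$\mathsf{ProS}$ matching must still give $t$ college $c_1$, so this deviation does not help by itself.

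We therefore need $t$'s choice to react to $s$. The fix is to make $s$ the top-ranked student at $c_1$ and $t$ top-ranked at $c_2$. Give $t$ a deterministic preference $c_1\succ c_2$, and let $c_2$ rank $t\succ s$. Under $s$'s truthful report, both the matching $\{(s,c_1),(t,c_2)\}$ and the matching $\{(s,c_2),(t,c_1)\}$ can have positive $\mathsf{ProS}$. The second one is stable exactly when $s$ weakly prefers $c_2$, which has probability $1-p$. The deviation argument now runs as follows.
\begin{itemize}
\item If $\mathsf{Alg}$ outputs $(s,c_2)$ on the truthful report, $s$ deviates by reporting $c_1$ dominant on both features. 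Every positive-$\mathsf{ProS}$ matching in the reported instance then gives her $c_1$, since $s$ top-ranked at $c_1$ with $c_1$ surely preferred would otherwise block. She gains $c_1$ over $c_2$ with probability $p>1/2$, contradicting IC-R.
\item If $\mathsf{Alg}$ outputs $(s,c_1)$, the symmetric instance with $p<1/2$ yields the same contradiction.
\end{itemize}
So instead of fixing $p$ in advance, we take truthful $\mu_s$ with $p\neq 1/2$ and branch on $\mathsf{Alg}$'s output. Whichever college $s$ receives, the deviation to the opposite dominant report forces the other college. For $s$ to have no profitable deviation in either case, $\mathsf{Alg}$ must give $s$ the college she prefers with probability at least $1/2$.

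\textbf{Main obstacle.} That last requirement is not yet a contradiction, so the real work is in choosing the instance. The natural next step is to make the college $s$ prefers with probability above $1/2$ the one that is unstable, or to use more colleges so that positive $\mathsf{ProS}$ forces $s$ into a college that loses pairwise with probability above $1/2$ to a college reachable by misreport. Concretely, I would add a third college $c_3$, ranked by $s$ so that $c_3$ beats her forced college with probability above $1/2$. The forced assignment would come from priorities of additional deterministic students, in the style of the cycle in Proposition~\ref{prop:noICA}. The report making $c_3$ dominant must render every positive-$\mathsf{ProS}$ matching give $s$ college $c_3$. Designing these priorities, with the pairwise probabilities computed via Lemma~\ref{lem:probformula}, is the step I expect to need the most care.
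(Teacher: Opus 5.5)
\paragraph{There is a genuine gap.} Your proposal never reaches a contradiction. Moreover, the framework it commits to cannot reach one for any instance.

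\textbf{Why positivity alone cannot work.} You use $\alpha>0$ only to conclude that the output has $\mathsf{ProS}>0$. You then look for a truthful instance in which every positive-$\mathsf{ProS}$ matching pins $s$ to some college $c$, and a misreport under which every positive-$\mathsf{ProS}$ matching gives her $c'$, with $\mathrm{Pr}[c'\succ_s c]>1/2$. No such pair of instances exists, even if you only ask for $\mathrm{Pr}[c'\succ_s c]>0$. To see this, proceed as follows.
\begin{enumerate}
\item Fix a positive-probability realization of all students' weak orders in which $c'\succ_s c$, break ties, and run ordinary student-proposing DA. Its output is stable on that whole event, so it has positive $\mathsf{ProS}$ and must give $s$ college $c$.
\item Replace $s$'s list by a positive-probability realization of her report. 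DA again returns a matching that has positive $\mathsf{ProS}$ in the reported instance, so it must give her $c'$.
\item Together these give a profitable manipulation of deterministic DA, contradicting its strategy-proofness.
\end{enumerate}

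Your two-college attempt illustrates this. In the branch with $p<1/2$, reporting $c_2$ as dominant does not force $c_2$: since $c_2$ ranks $t$ above $s$, both matchings then have $\mathsf{ProS}=1$. In the branch with $p>1/2$, nothing forces the algorithm to output $(s,c_2)$, because the other matching has $\mathsf{ProS}=1$.

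\textbf{The missing idea.} You must use the ratio quantitatively, via a pairwise-majority winner that is rarely top-ranked. The paper gives $s$ uniform weights and the utilities
$u_s^{f_1}(c_1)=1.5\delta+3\epsilon$, $u_s^{f_1}(c_2)=\delta+2\epsilon$, $u_s^{f_1}(c_3)=0$, $u_s^{f_2}(c_1)=0$, $u_s^{f_2}(c_2)=0.5\delta+2\epsilon$, $u_s^{f_2}(c_3)=1.5\delta+2\epsilon$.

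With these, $c_2$ beats $c_1$ and $c_3$ with probabilities $\frac{\delta+4\epsilon}{2\delta+6\epsilon}$ and $\frac{\delta+2\epsilon}{2\delta+2\epsilon}$, both above $1/2$. Yet $c_2$ is her top choice only with probability $\frac{\epsilon(\delta+2\epsilon)}{(\delta+\epsilon)(\delta+3\epsilon)}\to 0$.

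Every college ranks $s$ first, so a matching giving her college $c$ is stable only when $c$ is (weakly) her top choice. Hence assigning her $c_2$ yields $\mathsf{ProS}\to 0$. Assigning her $c_1$, with the other, deterministic students at their top colleges, yields $\mathsf{ProS}\to 1/2$.

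Given $\alpha>0$, choose $\epsilon$ small enough that this ratio falls below $\alpha$. The algorithm must then give $s$ college $c_1$ or $c_3$; leaving her unmatched gives $\mathsf{ProS}=0$. But if she reports $c_2$ as dominant on both features, every positive-$\mathsf{ProS}$ matching must give her $c_2$, since she is top-ranked there. And $c_2$ beats her truthful assignment with probability above $1/2$.

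Matching $s$ to $c_2$ does have positive $\mathsf{ProS}$ in the truthful instance. That is exactly why the contradiction needs the approximation ratio and not mere positivity.
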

\begin{proof}
    We can prove the statement by giving an instance with three students and three colleges. The utility functions of three students are listed as follows.
    \begin{itemize}
        \item $s_1$: $u_{s_1}^{f_1}(c_1) = 1.5\delta + 3\epsilon,\ u_{s_1}^{f_1}(c_2) = \delta + 2\epsilon,\ u_{s_1}^{f_1}(c_3) = 0$;

        $\quad\ u_{s_1}^{f_2}(c_1) = 0,\ u_{s_1}^{f_2}(c_2) = 0.5\delta + 2\epsilon,\ u_{s_1}^{f_2}(c_3) = 1.5\delta + 2\epsilon$;
        \item $s_2$: $u_{s_2}^{f_{1/2}}(c_1) = 0.3,\ u_{s_2}^{f_{1/2}}(c_2) = 0.6,\ u_{s_2}^{f_{1/2}}(c_3) = 0.1$;
        \item $s_3$: $u_{s_3}^{f_{1/2}}(c_1) = 0.3,\ u_{s_3}^{f_{1/2}}(c_2) = 0.1,\ u_{s_3}^{f_{1/2}}(c_3) = 0.6$.
    \end{itemize}
    Here, $\delta$ and $\epsilon$ are any positive values, and $\epsilon\rightarrow 0$. We assume that the probability distributions of weights for all three students are uniform distributions over all possible weights. We can notice that, for students $s_2$ and $s_3$, whichever the weight vector is chosen, their aggregated preferences on colleges are always $c_2\succ_{s_2} c_1\succ_{s_2} c_3$ and $c_3\succ_{s_3} c_1\succ_{s_3} c_2$. On the other hand, the preferences of three colleges are $s_1\succ_{c_1/c_2} s_2\succ_{c_1/c_2} s_3$, and $s_1\succ_{c_3} s_3\succ_{c_3} s_2$.
    Here, we suppose the capacities of all colleges are one. First, we check what will happen if a matching algorithm matches $s_1$ to $c_2$. In this case, for the remaining students and colleges, we can only match $s_2$ to $c_1$ and $s_3$ to $c_3$; otherwise $(s_2,c_1)$ will be a block. Since both $c_1$ and $c_3$ ranks $s_1$ the first in their preferences, hence, the $\mathsf{ProS}$ of this matching will be
    \[ \mathrm{Pr}[c_2\succeq_{s_1}^{w_{s_1}}c_1 \text{ and } c_2\succeq_{s_1}^{w_{s_1}}c_3] = \frac{\epsilon(\delta+2\epsilon)}{(\delta+\epsilon)(\delta + 3\epsilon)}, \]
    which approaches 0 when $\epsilon\rightarrow 0$. Nevertheless, if we consider an alternative matching, where $s_1$ is matched to $c_1$, $s_2$ is matched to $c$, and $s_3$ is matched to $c$, then the $\mathsf{ProS}$ of this new matching will be
    \[ \mathrm{Pr}[c_1\succeq_{s_1}^{w_{s_1}}c_2 \text{ and } c_1\succeq_{s_1}^{w_{s_1}}c_3] = \frac{\delta+2\epsilon}{2\delta+6\epsilon}. \]
    We can observe that the ratio of these two probabilities approaches 0 when $\epsilon\rightarrow 0$. Therefore, a matching algorithm that has a non-zero approximation ratio cannot match $s_1$ to $c_2$. However, for $s_1$, $c_2$ is more likely to be better than $c_1$ or $c_3$:
    \[ \mathrm{Pr}[c_2\succeq_{s_1}^{w_{s_1}}c_1] = \frac{\delta+4\epsilon}{2\delta+6\epsilon} > \frac{1}{2};\quad \mathrm{Pr}[c_2\succeq_{s_1}^{w_{s_1}}c_3] = \frac{\delta+2\epsilon}{2\delta+2\epsilon} > \frac{1}{2}. \]
    Then, whichever college the matching algorithm matches to $s_1$, she can construct a misreporting as follows.
    \begin{itemize}
        \item $s_1$: $u_{s_1}^{f_{1/2}}(c_1) = 0.3,\ u_{s_1}^{f_{1/2}}(c_2) = 0.6,\ u_{s_1}^{f_{1/2}}(c_3) = 0.1$,
    \end{itemize}
    where it is always $c_2\succ_{s_1} c_1\succ_{s_1} c_3$. In this new case, $s_1$ must be matched to $c_2$; otherwise $(s_1, c_2)$ will be a block. Namely, $s_1$ can force a choice that will be better with probability larger than $1/2$, which violates the property of IC-R. In summary, in the given instance, $c_1$ being matched to $c_2$ violates IC-R, while $c_1$ not being matched to $c_2$ violates the non-zero approximation ratio, which infers that a matching algorithm cannot be IC-R and $\alpha$-optimal with $\alpha > 0$ at the same time. Since the instance only has two features, this impossibility holds even for the restriction of $|F| = 2$. 
\end{proof}

Hence, we can only require IC-C if we want an algorithm with a positive approximation ratio. The final result in this section provides an upper bound for the ratio in this case.

\begin{theorem}
    There is no matching algorithm that can be IC-C and $\alpha$-optimal with $\alpha>(\sqrt[3]{\phi})^{n}$, $\phi = \frac{\sqrt{5}-1}{2}$, even when $|F| = 2$. 
\end{theorem}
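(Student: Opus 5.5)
The plan is to build a small gadget with three students and three unit-capacity colleges in which every IC-C algorithm loses a factor of at least $\phi$ against the optimum. I would then take $n/3$ independent copies so that the losses multiply to $\phi^{n/3}=(\sqrt[3]{\phi})^n$. As in the IC-R impossibility above, only one student, call her $s_1$, has genuinely uncertain preferences over two features. The other two students, $s_2,s_3$, have identical utilities on $f_1,f_2$, so their aggregated preferences are deterministic. All weight distributions are uniform, so Lemma~\ref{lem:probformula} gives every pairwise probability as an explicit ratio of utility gaps, and Proposition~\ref{prop:calProS2} lets me write the $\mathsf{ProS}$ of every candidate matching in closed form.

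\textbf{Gadget.} Inside one gadget, the college priorities and the deterministic preferences of $s_2,s_3$ are chosen as in the previous proof. They force the rest of the matching once $s_1$'s college is fixed, so the gadget has essentially two or three relevant matchings. $\mathsf{ProS}$ of each is $\mathrm{Pr}[w_{s_1}^{f_1}\in I]$ for an interval $I$ given by the $\eta$ thresholds. I would introduce a free parameter $t\in(0,1)$ in $s_1$'s utilities. For the truthful profile, the matching with the largest $\mathsf{ProS}$ should be one that assigns $s_1$ to a college $c$ which, under some deterministic misreport of $s_1$ (a report with $u^{f_1}=u^{f_2}$, as in the previous proof), she can force herself out of into a college $c'$ with $\mathrm{Pr}[c'\succ_{s_1} c]=1$. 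IC-C then forbids the algorithm from assigning $c$. The best alternative matching should have $\mathsf{ProS}$ ratio $t$ (or $1-t$) relative to the optimum. The subtle point is that the misreported instance must itself be an instance where the algorithm is constrained, otherwise it could simply avoid giving $c'$ there. So I would pair two instances $\mathcal I_t$ and $\mathcal I'_t$ linked by $s_1$'s misreport, in the spirit of the previous proof. In $\mathcal I'_t$, giving $s_1$ anything other than $c'$ costs ratio $(1-t)/t$, or forces a certainly-worse college that creates an IC-C violation from $\mathcal I_t$. The algorithm's guarantee is then at most $\max\{\min(t,\cdot),(1-t)/t\}$ across the pair. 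Balancing $t=(1-t)/t$ gives $t^2+t-1=0$, so $t=\phi$. This parameter balancing is where the golden ratio should appear, and I would tune the $\epsilon,\delta$-style utilities until the two competing ratios are exactly $t$ and $(1-t)/t$.

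\textbf{Amplification.} Take $n/3$ disjoint copies of the gadget. Every college ranks the students of its own gadget above all others. Every student gives utility near $0$ on both features to colleges outside her gadget, so those colleges are certainly worse than any in-gadget college. Then cross-gadget blocks never occur in matchings that match within gadgets, and matchings that do not are never better. By independence of the $\mu_s$, $\mathsf{ProS}$ factorises over gadgets, and so does the optimum. The IC-C argument applies gadget by gadget, since a misreport by a student in one gadget changes only her own gadget's constraints. The algorithm therefore loses a factor of $\phi$ in each copy, giving an overall ratio of at most $\phi^{n/3}$, possibly up to a limit $\epsilon\to0$, which is why the bound is a non-strict infimum.

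\textbf{Main obstacle.} The hardest step is designing the gadget itself. I need a single uncertain student whose utilities make the IC-C-forbidden option exactly the optimum, while the surviving options have ratios that trade off as $t$ versus $(1-t)/t$. I would first try adapting the three-college instance of the IC-R theorem with $\delta,\epsilon$ replaced by $t$-dependent gaps. I would then check that the forced sub-matchings of $s_2,s_3$ do not add extra blocking probability.
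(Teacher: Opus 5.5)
There is a genuine gap. Your final balancing $t=(1-t)/t$ is the right one, but the gadget that should produce it is never built, and the architecture you sketch cannot produce it.

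As you describe it, $s_1$ is the only uncertain student, $s_2,s_3$ are deterministic, and $\mathcal I'_t$ comes from $s_1$'s deterministic misreport ($u^{f_1}=u^{f_2}$). Then $\mathcal I'_t$ has no uncertainty at all. Every matching there has $\mathsf{ProS}\in\{0,1\}$, so an algorithm with positive ratio simply outputs a stable matching of $\mathcal I'_t$ at no cost. Hence the loss $(1-t)/t$ in $\mathcal I'_t$ cannot occur, and all of the algorithm's loss within the pair falls in $\mathcal I_t$. There is no second fractional term to trade against $t$.

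The missing idea is to separate the manipulator from the source of fractional uncertainty. In the paper's gadget, the manipulator $s_i$ has a college $c_i$ that is certainly best for her. She toggles between two deterministic orders of $c_{i+1},c_{i+2}$ (Case~0: $y=0$; Case~1: $y=1$). A \emph{different} student $s_{i+1}$ carries $\mathrm{Pr}[c_{i+1}\succ_{s_{i+1}}c_i]=z$, and this uncertainty is present in both linked instances. Only three matchings have positive $\mathsf{ProS}$:
(a) $s_{i+r}\mapsto c_{i+r}$, with $\mathsf{ProS}=z$;
(b) $s_i\mapsto c_{i+1}$, $s_{i+1}\mapsto c_i$, $s_{i+2}\mapsto c_{i+2}$, with $\mathsf{ProS}=y$;
(c) $s_i\mapsto c_{i+2}$, $s_{i+1}\mapsto c_i$, $s_{i+2}\mapsto c_{i+1}$, with $\mathsf{ProS}=(1-z)(1-y)$.
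IC-C forces $s_i$ to receive $c_i$ in both cases or in neither. So the algorithm must use either (a) in both cases, with worst ratio $z$ (in Case~1), or (c) in Case~0 and (b) in Case~1, with worst ratio $(1-z)/z$ for $z>1/2$. Setting $z=\phi$ equalizes the two.

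Your amplification step also does not follow as written. The paper's proof makes the same one-line leap (``independent of $i$''), so do not rely on it. IC-C for one gadget's manipulator only links profiles that differ in her own report. For each such pair it shows that the gadget loses $\phi$ at one of the two endpoints. It does not show that a single profile loses $\phi$ in every gadget at once, because the algorithm may let a gadget's policy depend on reports in other gadgets.

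For example, take two copies with $z=\phi$. Use (c)/(b) in gadget~1 exactly when $y_2=0$, and (a)/(a) in gadget~2 exactly when $y_1=0$. This respects every IC-C constraint among the four profiles $(y_1,y_2)\in\{0,1\}^2$, and the ratio is exactly $\phi$, not $\phi^2$, at each of them.

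In general, each gadget's loss falls on exactly half of the $2^k$ profiles. Averaging therefore only guarantees a profile with $k/2$ losses, and pairing copies as above shows this is all the family can force. So this family by itself yields $\phi^{k/2}=(\sqrt[6]{\phi})^n$, and reaching $(\sqrt[3]{\phi})^n$ needs an argument beyond per-gadget IC-C.
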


\begin{proof}
    Since the approximation ratio is defined in the worst case, to establish this upper bound, it suffices to show that the ratio cannot be achieved in at least one instance. Consider the following case with $n = 3k$ students and $m = n$ colleges. Each college $c_j\in M$ has capacity $x_j = 1$. For each $1\leq i\leq k$, we suppose that $s_{i+1}\succ_{c_i} s_i\succ_{c_i} s_{i+2} \succ_{c_{i}} \cdots$, $s_i\succ_{c_{i+1}} s_{i+1}\succ_{c_{i+1}} s_{i+2} \succ_{c_{i+1}} \cdots$, and $s_i\succ_{c_{i+2}} s_{i+2}\succ_{c_{i+2}} \cdots$,
    where ``$\cdots$'' means the preferences over all other students could be arbitrary. On the other hand, for the student side, we suppose the probability distributions of weights are uniform distributions, and
    \begin{itemize}
        \item $s_i$:
        $u_{s_{i}}^{f_{1}}(c_i) = 1 > u_{s_i}^{f_1}(c_{i+1}) = 0.8 > u_{s_{i}}^{f_1}(c_{i+2}) = 0.8 - y > \cdots$,
        
        $\quad\ u_{s_{i}}^{f_{2}}(c_i) = 1 > u_{s_i}^{f_2}(c_{i+2}) = 0.8 > u_{s_{i}}^{f_2}(c_{i+1}) = y - 0.2 > \cdots$;
        \item $s_{i+1}$:
        $u_{s_{i+1}}^{f_{1}}(c_{i+1}) = 1 > u_{s_{i+1}}^{f_1}(c_{i}) = 1 - z  > \cdots$,
        
        $\quad\quad u_{s_{i+1}}^{f_{2}}(c_{i}) = 1 > u_{s_{i+1}}^{f_2}(c_{i+1}) = z  > \cdots$;
        \item $s_{i+2}$: $u_{s_{i+2}}^{f_{1}}(c_{i+2}) = 1 > \cdots$, $u_{s_{i+2}}^{f_{2}}(c_{i+2}) = 1 > \cdots$,
    \end{itemize}
    where ``$\cdots$'' means the utilities of all other colleges are arbitrary but with the same order in both $f_1$ and $f_2$. Then, the only uncertainties in preferences of these three students are $\mathrm{Pr}[c_{i+1}\succ_{s_i}^{w_{s_i}} c_{i+2}] = y$ and $\mathrm{Pr}[c_{i+1}\succ_{s_{i+1}}^{w_{s_{i+1}}} c_{i}] = z$.

    Now, to have a matching with positive $\mathsf{ProS}$, students $s_{i}$, $s_{i+1}$ and $s_{i+2}$ must be matched to colleges $c_{i}$, $c_{i+1}$ and $c_{i+2}$: if $s_i$ has not been matched to one of them, $(s_{i}, c_{i+1})$ would be a block with probability one; if $s_{i+1}$ has not been matched to one of them, $(s_{i+1}, c_{i})$ would be a block with probability one; if $s_{i+2}$ has not been matched to one of them, then at least one of three colleges could not enroll both $s_{i}$ and $s_{i+1}$, which could make a block with $s_{i+2}$ of probability one. Then, for all possible matching results among these six agents, there are only three of them that could have a positive $\mathsf{ProS}$:\\
    (a) $\{(s_{i}, c_{i}), (s_{i+1}, c_{i+1}), (s_{i+2}, c_{i+2})\}$ with $\mathsf{ProS} = z$, \\
    (b) $\{(s_{i}, c_{i+1}), (s_{i+1}, c_{i}), (s_{i+2}, c_{i+2})\}$ with $\mathsf{ProS} = y$, and \\
    (c) $\{(s_{i}, c_{i+2}), (s_{i+1}, c_{i}), (s_{i+2}, c_{i+1})\}$ with $\mathsf{ProS} = (1-z)(1-y)$. \\
    Consider the two extreme cases where $y=0$ (Case 0) and $y=1$ (Case 1). In Case 0, we can only choose matching (a) or (c) to ensure a positive $\mathsf{ProS}$, while in Case 1, we can only choose matching (a) or (b). Notice that the student $s_i$ can misreport her utilities to freely turn one case into the other. If in Case 0, $s_i$ is matched to $c_i$ but in Case 1, she is not matched to $c_i$, then an $s_i$ with true type being Case 1 can misreport to pretend to be the one in Case 0, where she can be matched to her certainly best college $c_i$. Similarly, we cannot match $s_i$ to $c_i$ in Case 1, without matching her to $c_i$ in Case 0. Hence, an IC-C matching algorithm with any positive approximation ratio can only have two choices: choosing matching (a) in both cases, or choosing (c) in Case 0 and choosing (b) in Case 1. For the former one, the approximation ratio of the best $\mathsf{ProS}$ among these agents is $\min\{z, \mathbb{I}(z\geq 1/2) + \mathbb{I}(z<1/2)\cdot (z/(1-z)) \} = z$, where $\mathbb{I}(\cdot)$ is the indicator function; for the latter one, the approximation ratio of the best $\mathsf{ProS}$ among these agents is $\mathbb{I}(z\leq 1/2) + \mathbb{I}(z> 1/2)\cdot ((1-z)/z)$. Therefore, for any satisfiable matching algorithm, it will not exceed
    \[ \min_z \max\left\{ z, \mathbb{I}(z\leq 1/2) + \mathbb{I}(z> 1/2)\cdot \frac{1-z}{z} \right\} = \frac{\sqrt{5}-1}{2}. \]
    Finally, since the above result is independent of $i$, an upper bound of the approximation ratio for the whole instance will be $\phi^k = \phi^{n/3}$, where $\phi = \frac{\sqrt{5}-1}{2}$.
\end{proof}
\section{Methods}\label{sec:methods}
In this section, we propose the methods that will be examined. Since we consider both IC and stability, we focus on a generalized version of the student-proposing deferred acceptance (GDA) algorithm~\citep{roth2008deferred}, such that when the scenario degenerates to the case with no uncertainty among students, the result naturally coincides with the standard student-proposing DA algorithm.

\begin{framed}
    \noindent \textbf{Generalized Student-Proposing Deferred Acceptance}

    \noindent \rule{\textwidth}{0.5pt}
    
    \noindent \textsc{Input:} an instance $(N, M, X, \succ_M, F, U, \mu)$.
    \begin{enumerate}
    \item Initialize the matching $\pi$ to be empty.
    \item Initialize the set of unmatched students as $N_{\mathcal{U}} \leftarrow N$.
    \item For each student $s\in N$, initialize the set of colleges which have rejected her as $R_s \leftarrow \emptyset$.
    \item \textbf{while} $|N_{\mathcal{U}}|>0$ \textbf{and} for any $s\in N_{\mathcal{U}}$, $R_s\neq M$ \textbf{do}
    \begin{enumerate}
        \item Initialize $P_c$ to be empty for all $c\in M$
        \item For each student $s\in N_{\mathcal{U}}$ with $R_s\neq M$, let $s$ propose to a college $c = \mathsf{Next}(s, R_s, \{u_s^f\}_{f\in F}, \mu_s)$. Add $s$ to $P_c$.
        \item For each college $c$ such that $P_c\neq \emptyset$,\\ \textbf{if} $|\pi(c)|+|P_c|\leq x_c$ \textbf{then}
        \begin{enumerate}
            \item[] Add all students in $P_c$ to $\pi(c)$.
            \item[] For each student $s\in P_c$, update $\pi(s) = c$ and remove $s$ from $N_{\mathcal{U}}$.
        \end{enumerate}
        \textbf{else}
        \begin{enumerate}
            \item[] Choose the most preferred $x_c$ students of $c$ from $\pi_c\cup P_c$, update $\pi(c)$ as the set of these $x_c$ students and reject others.
            \item[] For each student $s$ in updated $\pi(c)$, update $\pi(s) = c$ and remove $s$ from $N_{\mathcal{U}}$ (if $s\in N_{\mathcal{U}}$).
            \item[] For each student $s$ being rejected, update $\pi(s)= \mathsf{null}$ and add $s$ to $N_{\mathcal{U}}$ (if $s\notin N_{\mathcal{U}}$).
        \end{enumerate}
        \textbf{end if}
    \end{enumerate}
    \textbf{end while}
    \end{enumerate}
    \noindent \textsc{Output:} the final matching $\pi$.
\end{framed}

For each student $s$, we can define an ordering of colleges based on the $\mathsf{Next}()$ function. Denote this ordering as $\succ_s^{\mathsf{Next}}$. Then, the output of the algorithm must be student-optimal with respect to this evaluation. Specifically, among all possible matching results in which no student $s$ can find a college that is more preferred under $\succ_s^{\mathsf{Next}}$ than her current assignment and that also prefers her to at least one of its current enrollees, each student receives her best possible option according to $\succ_s^{\mathsf{Next}}$. The focus on such student-optimal matching algorithms (e.g., those that are student-optimal in terms of expected utilities) is another reason we restrict attention to this class of methods.

Within the class of GDA, the key problem is designing a suitable $\mathsf{Next}()$ function. One of the most straightforward ways is described above: it selects the college with the highest expected utility among those not in $R_s$. Besides, recall that our main goal is seeking high $\mathsf{ProS}$ and IC, which may not be fully captured by expected utilities. Therefore, we propose three additional methods based on the probabilities of one college being preferred over another. More concretely, the four methods are defined as follows. 

\noindent\textbf{Higher Expected Utility First (HEUF).} In the method of HEUF, the $\mathsf{Next}()$ function just chooses a college such that
\[ \mathsf{Next}(s, R_s, \{u_s^f\}_{f\in F}, \mu_s)\in \arg\max_{c\notin R_s} \mathbb{E}_{w_s\sim \mu_s}\left[ \sum_{f\in F} w_s^f\cdot u_s^f(c)  \right], \]
with random or any predetermined tie-breaking. The time complexity of computing the expected utility depends on the forms of probability distributions. In real applications, it is usually a discrete distribution with finite support or a common parameterized continuous distribution, which can be computed efficiently.

\noindent\textbf{Lexicographic Order of Comparison Vectors (LOCV).} In the method of LOCV, the $\mathsf{Next}()$ function chooses a college based on a specific order of the colleges. For each student $s$, such an order is defined by a lexicographic order of \emph{comparison vectors}. Given a college $c$, the comparison vector for the student $s$ is defined as
\[ q_s(c) = \mathsf{Ascending}((\mathrm{Pr}[c\succeq_s^{w_s}c'])_{c\in M, c'\neq c}), \]
where $\mathsf{Ascending}(\cdot)$ rearranges a vector's elements in an ascending order. Then, for any two colleges $c$ and $c'$, we say $c\succ_s^{\mathsf{lex}}c'$, if there exists an integer $0\leq k< m$, the first $k$ elements in $q_s(c)$ and $q_s(c')$ is the same, while the $(k+1)$th element of $q_s(c)$ is larger than that of $q_s(c')$. Based on $\succ_s^{\mathsf{lex}}$, we can have a complete order of all colleges with random or any predetermined tie-breaking, and each time the $\mathsf{Next}()$ function returns the first college that has not rejected the student in $\succ_s^{\mathsf{lex}}$. The intuition behind LOCV is that we try to choose a college that has a higher chance of being better than another college in the worst cases. By Lemma~\ref{lem:probformula}, these probabilities needed in the method can be efficiently computed by an oracle of c.d.f when $|F| = 2$. For $|F|\geq 3$, we discuss the issue in Section~\ref{sec:moretc}.

\noindent\textbf{Lexicographic Order of Iterated Comparison Vectors (LOICV).} The only difference compared to the LOCV method is that we update the comparison vectors iteratively based on the current $R_s$ set:
\[ \tilde{q}_s(c, R_s) = \mathsf{Ascending}((\mathrm{Pr}[c\succeq_s^{w_s}c'])_{c\in M\setminus R_s, c'\neq c}). \]
Then, each time, we can have an updated complete order for remaining colleges that have not been proposed by $s$, and the $\mathsf{Next}()$ function returns the first college in such an order. The intuition behind LOICV is that we do not need to worry about colleges that have rejected $s$ making a block with her. The time complexity of the LOICV method is quadratic in that of the LOCV method, due to its additional computed probabilities.

\noindent\textbf{Higher Expected Ranking First (HERF).} In the method of HERF, we also consider the order of colleges iteratively as in LOICV. The idea is motivated by the observation that $\mathrm{Pr}[c_i\succ_s^{w_s}c_j]\geq \mathrm{Pr}[c_j\succ_s^{w_s}c_i]$ for any $j\neq i$ still does not imply $c_i$ has the highest probability to be most preferred by the student $s$ (see the third instance in Example~\ref{ex:running}). Hence, in HERF, the $\mathsf{Next}()$ function instead chooses a college such that
\[ \mathsf{Next}(s, R_s, \{u_s^f\}_{f\in F}, \mu_s)\in \arg\max_{c\notin R_s} \mathrm{Pr}\left[ \bigcap_{c'\neq c, c\notin R_s} c\succeq_s^{w_s}c'  \right], \]
with random or any predetermined tie-breaking. By Lemma~\ref{lem:probformula}, the above probability is equal to the probability that the weight of the first feature locates in an intersection of $m-1$ intervals when $|F| = 2$, which can be computed efficiently by an oracle of c.d.f. For $|F|\geq 3$, we discuss the issue in Section~\ref{sec:moretc}.

With the above methods, we can obtain student-optimal matching results with respect to expected utilities, pairwise preferences, or expected rankings, and they all degenerate to standard student-proposing DA when uncertainty disappears. In the following example, we will see that, for the objective of $\mathsf{ProS}$, none of these methods consistently outperforms the others across all instances. This suggests that each method may have its own suitable scenarios in practical applications.

\begin{example}\label{ex:running}
    We will show three instances in this example. All of them have $n=m=3$, $|F|=2$, each college $c$ has capacity $x_c = 1$, and each probability distribution $\mu_s$ is a uniform distribution. For the first instance, the preferences of colleges are $s_1\succ_{c_1} s_2 \succ_{c_1} s_3$, $s_1\succ_{c_2} s_3 \succ_{c_2} s_2$, and $s_2\succ_{c_3} s_3 \succ_{c_3} s_1$.
    On the other hand, the utility functions of students are
    \begin{itemize}
        \item $s_1$: $u_{s_1}^{f_1}(c_1) = 0.3,\ u_{s_1}^{f_1}(c_2) = 0.2,\ u_{s_1}^{f_1}(c_3) = 1.0$;

        $\quad\ u_{s_1}^{f_2}(c_1) = 0.7,\ u_{s_1}^{f_2}(c_2) = 0.4,\ u_{s_1}^{f_2}(c_3) = 0.3$;
        \item $s_2$: $u_{s_2}^{f_1}(c_1) = 0.5,\ u_{s_2}^{f_1}(c_2) = 0.1,\ u_{s_2}^{f_1}(c_3) = 0.7$;

        $\quad\ u_{s_2}^{f_2}(c_1) = 0.6,\ u_{s_2}^{f_2}(c_2) = 0.3,\ u_{s_2}^{f_2}(c_3) = 0.1$;
        \item $s_3$: $u_{s_3}^{f_1}(c_1) = 0.9,\ u_{s_3}^{f_1}(c_2) = 0.3,\ u_{s_3}^{f_1}(c_3) = 0.6$;

        $\quad\ u_{s_3}^{f_2}(c_1) = 0.2,\ u_{s_3}^{f_2}(c_2) = 0.3,\ u_{s_3}^{f_2}(c_3) = 0.1$.
    \end{itemize}
    Taking student $s_3$ as an example, the comparison vector $q_{s_3}(c_2) = (1/7, 2/5)$ since we can compute that $\mathrm{Pr}[c_2\succeq_{s_3}^{w_{s_3}}c_1] = 1/7 < \mathrm{Pr}[c_2\succeq_{s_3}^{w_{s_3}}c_3] = 2/5$. Similarly, we have $q_{s_3}(c_1) = (6/7, 1)$ and $q_{s_3}(c_3) = (0, 3/5)$. Hence, $c_1\succ_{s_3}^{\mathsf{lex}} c_2\succ_{s_3}^{\mathsf{lex}} c_3$. If applying the LOCV method, $s_2$ and $s_3$ will apply to $c_1$, and $s_1$ will apply to $c_3$ in the first round. $c_1$ then rejects $s_3$, and $s_3$ will then apply to $c_2$. The final matching is $\{c_1:s_2, c_2: s_3, c_3: s_1\}$, with pairs $(s_1,c_1)$, $(s_1,c_2)$, $(s_2, c_3)$ and $(s_3, c_3)$ having positive probability to be blocks. Then, the corresponding $\mathsf{ProS}$ equals to $2/11$.

    On the contrary, if applying the LOICV method, when $c_1$ rejects $s_3$ in the first round, $s_3$ will then apply to $c_3$ instead because $\tilde{q}_{s_3}(c_2, \{c_1\}) = (2/5)$ and $\tilde{q}_{s_3}(c_3, \{c_1\}) = (3/5)$. $c_3$ then has choice to reject $s_1$, $c_1$ then accepts $s_1$, and $c_3$ can get its best choice $s_2$. The final matching is $\{c_1:s_1, c_2: s_3, c_3: s_2\}$, with no pairs having positive probability to be blocks. Then, the $\mathsf{ProS}$ equals to $1$. We can observe that, compared to the LOCV method, the outcome for $s_3$, who actually has a different proposing order, does not change. Lastly, if applying the HEUF method, the proposing orders of the students are the same as those in the LOICV method, since $\mathbb{E}[u_{s_3}(c_1)]=0.55 $ $ > \mathbb{E}[u_{s_3}(c_3)]=0.35 > \mathbb{E}[u_{s_3}(c_2)]=0.3$. Hence, the matching result of the HEUF method is the same as that of the LOICV method. Similarly, if applying the HERF method, the proposing orders of the students are also the same, which leads to the same result.

    We can observe that in the first instance, applying the LOICV, HEUF, or HERF method is the best choice for a higher $\mathsf{ProS}$. Next, we will give the second instance where the LOCV method becomes better than LOICV, which suggests iteratively updating the comparison vector is not always a better choice. The preferences of colleges in the second instance are $s_3\succ_{c_1} s_1 \succ_{c_1} s_2$ and $s_2\succ_{c_2/c_3} s_3 \succ_{c_2/c_3} s_1$,
    and the utility functions of students are
    \begin{itemize}
        \item $s_1$: $u_{s_1}^{f_1}(c_1) = 0.9,\ u_{s_1}^{f_1}(c_2) = 0.7,\ u_{s_1}^{f_1}(c_3) = 0.75$;

        $\quad\ u_{s_1}^{f_2}(c_1) = 0.1,\ u_{s_1}^{f_2}(c_2) = 0.7,\ u_{s_1}^{f_2}(c_3) = 0.8$;
        \item $s_2$: $u_{s_2}^{f_{1/2}}(c_1) = 0.9,\ u_{s_2}^{f_{1/2}}(c_2) = 0.5,\ u_{s_2}^{f_{1/2}}(c_3) = 0.1$;
        \item $s_3$: $u_{s_3}^{f_{1/2}}(c_1) = 0.5,\ u_{s_3}^{f_{1/2}}(c_2) = 0.1,\ u_{s_3}^{f_{1/2}}(c_3) = 0.9$.
    \end{itemize}
    Notice that for student $s_2$ and $s_3$, their preferences are actually certain because the utility functions for all features are the same. For student $s_1$, if applying the LOCV method, she will first propose to $c_3$, and then propose to $c_1$ if being rejected by $c_3$. The final matching is $\{c_1:s_1, c_2: s_2, c_3: s_3\}$, with no pairs having positive probability to be blocks ($\mathsf{ProS} = 1$). On the contrary, if applying the LOICV method, $s_1$ will propose to $c_2$ if being rejected by the first choice $c_3$. Then, the final matching is $\{c_1:s_2, c_2: s_1, c_3: s_3\}$, with the pair $(s_1,c_1)$ having probability of $1/4$ to be a block ($\mathsf{ProS} = 3/4$). Lastly, the HEUF and HERF share the same proposing order of students in the LOICV method, so they have the same result.

    In the above two instances, the LOICV, HEUF and HERF methods share the same results. For the HEUF, it always shares the same result with the LOICV as long as the probability distributions satisfy certain conditions (see Theorem~\ref{thm:heufICR}). However, it is not true for the HERF method, as we can see in the third instance. For this last instance, the preferences of colleges are
    \begin{itemize}
        \item $c_1/c_3$: $s_1\succ_{c_1/c_3} s_3 \succ_{c_1/c_3} s_2$; $\quad c_2$: $s_3\succ_{c_2} s_2 \succ_{c_2} s_1$,
    \end{itemize}
    and the utility functions of students are
    \begin{itemize}
        \item $s_{1/2}$: $u_{s_{1/2}}^{f_1}(c_1) = 0.25,\ u_{s_{1/2}}^{f_1}(c_2) = 0.3,\ u_{s_{1/2}}^{f_1}(c_3) = 0.8$;

        $\quad\quad u_{s_{1/2}}^{f_2}(c_1) = 0.4,\ u_{s_{1/2}}^{f_2}(c_2) = 0.3,\ u_{s_{1/2}}^{f_2}(c_3) = 0.7$;
        \item $s_3$: $u_{s_3}^{f_1}(c_1) = 0.1,\ u_{s_3}^{f_1}(c_2) = 1.0,\ u_{s_3}^{f_1}(c_3) = 0.8$;

        $\quad\ u_{s_3}^{f_2}(c_1) = 1.0,\ u_{s_3}^{f_2}(c_2) = 0.2,\ u_{s_3}^{f_2}(c_3) = 0.5$.
    \end{itemize}
    For students $s_1$ and $s_2$, proposing order is always $c_3$, $c_1$ and $c_2$ in all methods. For student $s_3$, she will first propose to $c_3$ with the LOCV or the LOICV method because both $\mathrm{Pr}[c_3\succeq_{s_3}^{w_{s_3}}c_1] = 7/12$ and $\mathrm{Pr}[c_3\succeq_{s_3}^{w_{s_3}}c_2] = 3/5$ is larger than $1/2$. If being rejected by $c_3$, $s_3$ will then propose to $c_1$ in the LOCV, and propose to $c_2$ in the LOICV. Hence, for LOCV, the final matching is $\{c_1:s_3, c_2: s_2, c_3: s_1\}$, with the pair $(s_3,c_2)$ having probability of $9/17$ to be a block ($\mathsf{ProS} = 8/17$), while for LOICV, the final matching is $\{c_1:s_2, c_2: s_3, c_3: s_1\}$, with the pair $(s_3,c_1)$ having probability of $8/17$ to be a block ($\mathsf{ProS} = 9/17$). The result of HEUF is still the same as the LOICV.

    If applying the HERF method, $s_3$ will not first propose $c_3$ because $\mathrm{Pr}[c_3\succeq_{s_3}^{w_{s_3}}c_1 \text{ and } c_3\succeq_{s_3}^{w_{s_3}}c_2] = \mathrm{Pr}[5/12\leq w_{s_3}\leq 3/5] = 11/60$, which suggests that $c_3$ has the smallest probability to be the favorite college of $s_3$. In this case, $s_3$ will first propose to $c_1$, and finally lead to the same matching result as the LOCV method.
\end{example}

From the three instances in this example, we can observe that none of the four methods dominates the others, i.e., for each method, there exist instances where it is not the best choice. Although in special cases we cannot say that one method is always better than another, one may still be curious about their average performance across a large number of instances. Therefore, we conduct a simple simulation experiment to get insight on how four methods perform differently in average rather than the worst-case comparison.

In the experiment, we set $|F| = 2$ and assume uniform distributions for weight vectors (as stated in Theorem~\ref{thm:equivalent}, it actually can represent the results for a class of distributions). For utility functions, each value of $u_s^f(c)$ is uniformly sampled from $(0,1)$. We sample as large amount of instances as possible to cover all kinds of cases. The number of students and colleges are relatively not so large since we have to enumerate all possible matching results to find an optimal solution, which is computationally hard.

The results of the experiment are summarized as box-plots in Figure~\ref{fig:experiment}. From the experiment, we can observe that the method of LOICV and HERF outperforms the method of LOCV when the seats of colleges are very rich for students. When the number of students increases, the outliers that represent worse cases for all methods become rare. It suggest that all the methods can perform well in a random sampled instance, and have high chance to achieve an optimal matching that maximizes the probability of stability.

\begin{figure*}[h]
    \centering
    \includegraphics[width=\linewidth]{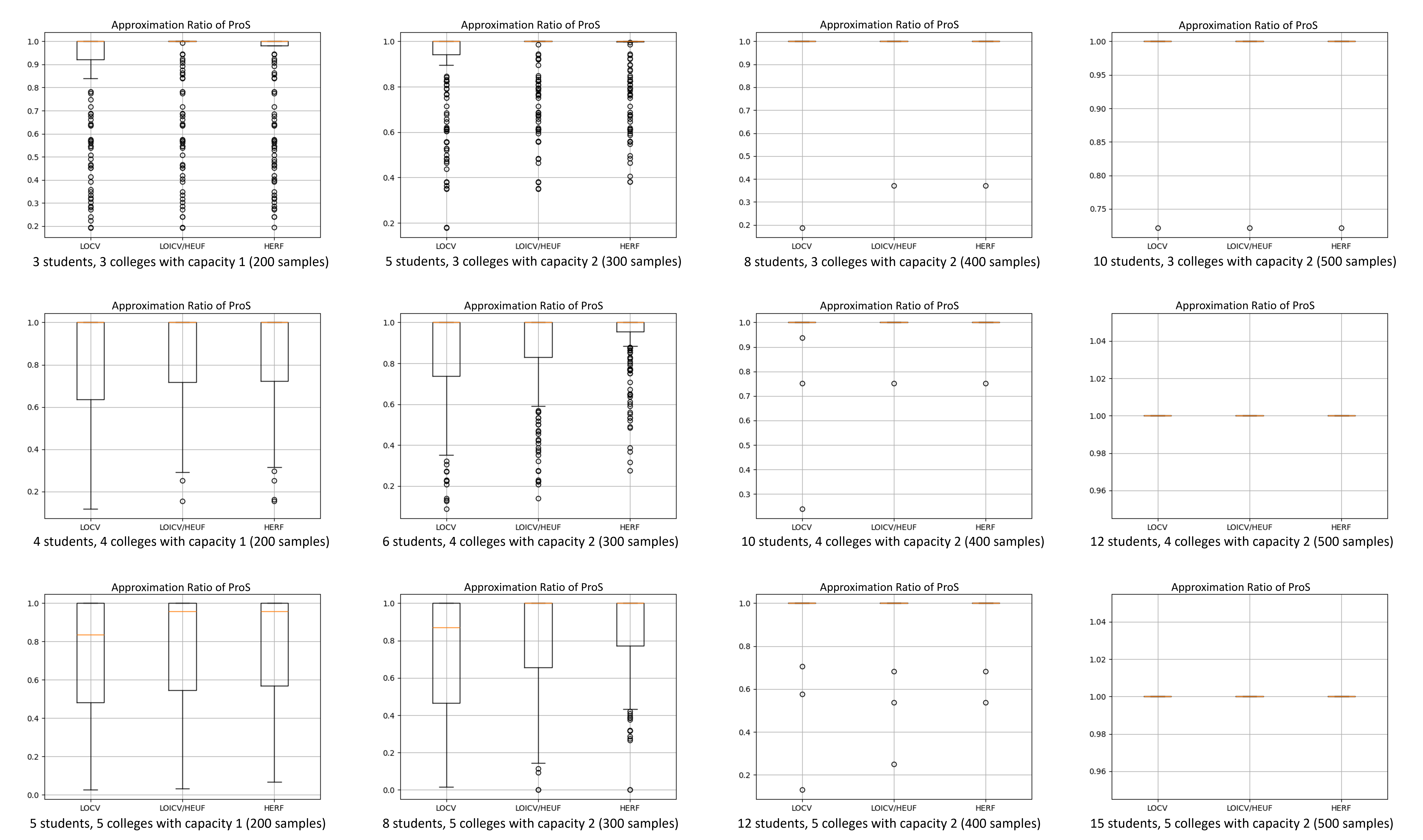}
    \caption{Box-plots of the approximation ratio of $\mathsf{ProS}$ in all sampled instances.}
    \label{fig:experiment}
\end{figure*}
\section{Properties of Methods}\label{sec:twofeatures}
In this section, we analyze the properties of these four methods.

\subsection{Probability of Stability}
For the probability of stability, only the method of HERF can have a positive approximation ratio.

\begin{theorem}\label{thm:pros0s}
    For GDA with the HEUF, LOCV, and LOICV methods, the worst-case approximation ratios to the optimal probability of being stable are all 0 even when the number of features $|F| = 2$.
\end{theorem}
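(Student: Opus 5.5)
The approach is to build, for each of the three methods, a family of instances with $|F|=2$ and uniform weight distributions, indexed by a parameter $\epsilon\to 0$. In each instance the method's output has $\mathsf{ProS}$ tending to $0$, while some other matching keeps $\mathsf{ProS}$ bounded away from $0$. The ratio in the definition of $\alpha$-optimality then tends to $0$, so its infimum is $0$.

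All constructions share one template, taken from the IC-R impossibility proof. There is a student $s_1$ whose pairwise probabilities (computed via Lemma~\ref{lem:probformula}) and expected utilities disagree with the probability that a given college is her favourite. The other students $s_2,s_3$ have identical utilities on both features, so their preferences are deterministic. Colleges all have capacity one and rank $s_1$ first, so $s_1$ obtains the first college she proposes to. Since the ranking is deterministic for everyone except $s_1$, Proposition~\ref{prop:calProS2} says the $\mathsf{ProS}$ of any matching reduces to the probability that $s_1$ has no block. Once $s_1$ holds college $c$, that quantity is at most $\mathrm{Pr}[\bigcap_{c'\neq c} c\succeq_{s_1}^{w_{s_1}} c']$.

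First, for LOCV and LOICV I reuse exactly the instance from the IC-R impossibility theorem. There, for $s_1$, we have $\mathrm{Pr}[c_2\succeq c_1]>1/2$ and $\mathrm{Pr}[c_2\succeq c_3]>1/2$. The comparison vector of $c_2$ therefore has minimum entry above $1/2$. Every other college $c$ loses to $c_2$, so $q_{s_1}(c)$ has an entry below $1/2$. Hence $c_2$ is lexicographically first under both LOCV and LOICV (the two coincide initially because $R_{s_1}=\emptyset$). Next I check that $s_1$ is not rejected by $c_2$, since $c_2$ ranks $s_1$ first. The remaining students then follow their deterministic DA and produce the matching $\{s_1\!:\!c_2, s_2\!:\!c_1, s_3\!:\!c_3\}$. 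The IC-R proof already computes its $\mathsf{ProS}$ as $\Theta(\epsilon)$, against $\approx 1/2$ for the alternative matching with $s_1$ at $c_1$, so the ratio tends to $0$.

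Second, for HEUF I need a student whose expected-utility maximiser is almost never her favourite. I take uniform weights, so $\mathbb{E}[w^{f_1}]=1/2$. Let a college $b$ have utilities $(1/2+\epsilon,1/2+\epsilon)$ and colleges $a,c$ have utilities $(1,0)$ and $(0,1)$. Then $\mathbb{E}[u(b)]=1/2+\epsilon$ exceeds $1/2$, so HEUF proposes to $b$ first. By Lemma~\ref{lem:probformula}, however, $b$ beats both $a$ and $c$ only when $w^{f_1}\in[1/2-\epsilon,1/2+\epsilon]$, which has probability $2\epsilon$. The college preferences are arranged as in the IC-R instance: all colleges rank $s_1$ first, and $s_2,s_3$ deterministically prefer $b$, $a$, and $c$ in a pattern ensuring that every college $s_1$ might prefer to $b$ would accept her. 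The HEUF output then has $\mathsf{ProS}\le 2\epsilon$. Meanwhile, giving $s_1$ college $a$ with $s_2,s_3$ placed stably yields $\mathsf{ProS}\approx 1/2$.

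The main obstacle is the bookkeeping in each instance. I must verify that the method's output really is the claimed matching, so that the run of GDA is not perturbed by rejections. I must also check that the comparison matching has $\mathsf{ProS}$ bounded below, which requires that the deterministic students create no certain blocks. For LOCV and LOICV this verification is already contained in the IC-R proof. For HEUF I will mirror that structure closely, with college preferences $s_1\succ s_2\succ s_3$ adjusted exactly as there.
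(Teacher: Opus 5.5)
Your proposal is correct and is essentially the paper's argument. The paper uses a single three-student instance built on the same trap utilities as the IC-R construction (with the trap student relabelled as $s_2$ and slightly different filler students), and compares the output's $\mathsf{ProS}=\frac{\epsilon(\delta+2\epsilon)}{(\delta+\epsilon)(\delta+3\epsilon)}$ with $\frac{\delta+2\epsilon}{2\delta+6\epsilon}$ for the alternative matching. Your separate HEUF construction is valid but not needed: in the trap instance $c_2$ already has the strictly largest expected utility under uniform weights ($0.75\delta+2\epsilon$, versus $0.75\delta+1.5\epsilon$ for $c_1$ and $0.75\delta+\epsilon$ for $c_3$), so one instance covers all three methods.
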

\begin{proof}
    We prove the statement by showing a worst-case instance with three students and three colleges. The utility functions of three students are listed as follows.
    \begin{itemize}
        \item $s_1$: \begin{align*}
            & u_{s_1}^{f_1}(c_1) = 0.75, &&u_{s_1}^{f_1}(c_2) = 0.5, &&u_{s_1}^{f_1}(c_3) = 0.55;\\
            & u_{s_1}^{f_2}(c_1) = 0.55, &&u_{s_1}^{f_2}(c_2) = 0.25, &&u_{s_1}^{f_2}(c_3) = 0.1;
        \end{align*}
        \item $s_1$: \begin{align*}
            & u_{s_2}^{f_1}(c_1) = 0.1 + 1.5\delta + 3\epsilon, &&u_{s_2}^{f_1}(c_2) = 0.1 + \delta + 2\epsilon, \\ 
            & u_{s_2}^{f_1}(c_3) = 0.1; && u_{s_2}^{f_2}(c_1) = 0.1, \\
            &u_{s_2}^{f_2}(c_2) = 0.1 + 0.5\delta + 2\epsilon, &&u_{s_2}^{f_2}(c_3) = 0.1 + 1.5\delta + 2\epsilon;
        \end{align*}
        \item $s_3$: $u_{s_1}^{f_{1/2}}(c_1) = 0.3,\ u_{s_1}^{f_{1/2}}(c_2) = 0.2,\ u_{s_1}^{f_{1/2}}(c_3) = 0.1$.
    \end{itemize}
    Here, $\delta$ and $\epsilon$ are any positive values, and $\epsilon\rightarrow 0$. We assume that the probability distributions of weights for all three students are uniform distributions over all possible weights. On the other hand, the preferences of three colleges are listed as follows.
    \begin{itemize}
        \item $c_1$: $s_2\succ_{c_1} s_3\succ_{c_1} s_1$;
        \item $c_2$: $s_2\succ_{c_2} s_3\succ_{c_2} s_1$;
        \item $c_3$: $s_3\succ_{c_3} s_2\succ_{c_3} s_1$.
    \end{itemize}
    Here, we suppose the capacities of all colleges are one. In this instance, no matter choosing the HEUF, LOCV or LOICV method, the final matching $\pi$ is always $\{c_1:s_3, c_2: s_2, c_3: s_1\}$, with the probability of being stable as
    \[ \mathsf{ProS}(\pi) = \mathrm{Pr}[c_2\succeq_{s_2}^{w_{s_2}}c_1 \text{ and } c_2\succeq_{s_2}^{w_{s_2}}c_3] = \frac{\epsilon(\delta+2\epsilon)}{(\delta+\epsilon)(\delta+3\epsilon)}. \]
    However, an optimal matching $\pi'$ should be $\{c_1:s_2, c_2: s_3, c_3: s_1\}$, with the probability of being stable as
    \[ \mathsf{ProS}(\pi') = \mathrm{Pr}[c_1\succeq_{s_2}^{w_{s_2}}c_2 \text{ and } c_1\succeq_{s_2}^{w_{s_2}}c_3] = \frac{\delta+2\epsilon}{2\delta+6\epsilon}. \]
    That means the worst-case approximation ratio for all three methods is
    \[ \alpha < \frac{\mathsf{ProS}(\pi)}{\mathsf{ProS}(\pi')} = \frac{2\epsilon}{\delta+\epsilon}\rightarrow0 \quad\text{as}\quad \epsilon\rightarrow 0. \]
    This concludes the proof.
\end{proof}

\begin{theorem}\label{thm:HERFalpha}
    The GDA with the HERF method is $\left(\frac{1}{n}\right)^n$-optimal. 
\end{theorem}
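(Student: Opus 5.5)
Since $\max_\pi \mathsf{ProS}(\pi)\le 1$, it suffices to prove the absolute bound $\mathsf{ProS}(\pi_{\mathrm{HERF}})\ge (1/n)^n$. I would work per student, using the independence across students exactly as in the product formula in the proof of Proposition~\ref{prop:calProS2}. Writing $\pi$ for the HERF output,
\[ \mathsf{ProS}(\pi) \ge \prod_{s\in N} \mathrm{Pr}\Bigl[\bigcap_{c\in B_s} \pi(s)\succeq_s^{w_s} c\Bigr], \]
where $B_s$ is the set of colleges that could possibly block with $s$. The goal is then to show that each factor is at least $1/n$; I expect the argument really gives a bound like $1/|M\setminus R_s|$, and bounding $m$ against $n$ is discussed below.

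The key structural fact is that, in any GDA run, every college $c\in R_s$ (one that rejected $s$) ends up with $x_c$ students, all preferred by $c$ to $s$. This holds because rejections only happen when $c$ is full, and $c$'s enrollment only improves afterwards. Consequently no college in $R_s$ can block with $s$, and $B_s\subseteq M\setminus R_s$, with $R_s$ taken at termination. Next, suppose $s$ is matched to $c^*=\pi(s)$, and $R$ is the rejection set at the moment she last proposed. By the HERF rule, $c^*$ maximizes $\mathrm{Pr}[\bigcap_{c'\notin R,\,c'\ne c} c\succeq_s^{w_s} c']$ over $c\notin R$. For every realization of $w_s$, some college in $M\setminus R$ is weakly best among $M\setminus R$, so the union of these $|M\setminus R|$ events has probability $1$. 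Hence the maximizer has probability at least $1/|M\setminus R|\ge 1/m$. Because $s$ is never rejected by $c^*$ afterward, her final $R_s$ equals $R$. Therefore the factor for $s$ is at least $\mathrm{Pr}[\bigcap_{c\in M\setminus R} c^*\succeq c]\ge 1/|M\setminus R|$, since $B_s\subseteq M\setminus R$ and blocking requires a strict preference, which the weak event excludes.

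Unmatched students need separate treatment. An unmatched $s$ has $R_s=M$, so she was rejected by all colleges, every college is full with students it prefers to her, and she cannot block at all; her factor is $1$.

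The main obstacle is converting $1/|M\setminus R_s|$ into $1/n$ when $m$ may exceed $n$. Here I would first observe that the colleges in $M\setminus R_s$ other than $\pi(s)$ that can block must either have spare seats or hold a student they rank below $s$. Then I would try to refine the bound. One route is to argue that the number of relevant colleges is at most $n$: a college that never receives any proposal has a free seat and could still block, so this does not follow immediately. The more promising route is to strengthen the HERF counting. Only colleges $c$ that $s$ has not yet proposed to matter, and at the moment $s$ proposes to $c^*$, the colleges $c'$ with $\mathrm{Pr}[c'\succ c^*]>0$ can be grouped by which one is best. If this refinement cannot be made, I would settle for $(1/\max(m,n))^n$. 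Since the paper states the bound as $(1/n)^n$, presumably using $m\le n$ or treating $n$ as the instance size, I would check which convention the paper intends and note that the per-student bound $1/|M\setminus R_s|$ is the real content of the proof.
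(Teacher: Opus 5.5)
Your per-student argument is the paper's argument. HERF's choice maximizes the probability of being weakly top in $M\setminus R_s$. The top events cover all outcomes, so that maximum is at least $1/|M\setminus R_s|$. Rejecting colleges end full with students they prefer to $s$, and independence gives the product. You also supply two details the paper skips: the final $R_s$ equals the rejection set at $s$'s last proposal, and unmatched students contribute a factor $1$. The paper finishes with $\prod_s 1/m_s^u\ge\prod_s 1/n$, where $m_s^u=|M\setminus R_s|$. That step silently needs $m_s^u\le n$, effectively $m\le n$, and nothing in the model imposes it.

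Your suspicion about this step is justified, and you should drop the refinements you sketch: for $m>n$ the bound is false, so no sharper counting can close it. Here is a counterexample.
\begin{itemize}
\item Take $n=2$, capacity-one colleges $a,b,d_1,\dots,d_k$, $|F|=2$, uniform weights, and let both $a$ and $b$ rank $s_2\succ s_1$.
\item Give $s_1$ utilities $(1,1)$ at $a$ and $(0,0)$ at $b$. Place $d_1,\dots,d_k$ (e.g.\ via the paper's tightness construction) so that each $d_j$ is weakly top in $\{b,d_1,\dots,d_k\}$ with probability $1/k$.
\item Give $s_2$ utilities $(1,\epsilon)$ at $a$, $(0,1)$ at $b$ and $(0,0)$ at every $d_j$. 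Then $a$ is her top with probability $1/(2-\epsilon)>1/2$ and $b$ with the rest.
\end{itemize}
Under HERF both students propose to $a$, which keeps $s_2$, and $s_1$ then goes to some $d_j$. This gives $\mathsf{ProS}=\frac{1}{k}\cdot\frac{1}{2-\epsilon}$.

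Now consider the matching $\{(s_1,a),(s_2,b)\}$. Student $s_1$ can only be blocked through empty $d_j$'s, which $a$ dominates, so her factor is $1$. Student $s_2$ is blocked exactly when $a\succ b$, so her factor is $\frac{1-\epsilon}{2-\epsilon}$. Hence HERF's ratio is at most $\frac{1}{k(1-\epsilon)}$. This is below $(1/2)^2$ for $k\ge 5$ and small $\epsilon$, and it tends to $0$ as $k\to\infty$ with $n$ fixed.

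So the correct unconditional conclusion is your fallback, $\mathsf{ProS}\ge\prod_s 1/|M\setminus R_s|\ge(1/\max(m,n))^n$. The stated $(1/n)^n$ bound holds only under the implicit assumption $m\le n$; there your proof, like the paper's, is complete.
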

\begin{proof}
    For each student $s\in N$, when she is finally matched to a college $c$ under the HERF method, we claim that the probability of $c$ being ranked the first by $s$ among all colleges that have not rejected $s$ is at least $1/m_s^u$, where $m_s^u$ denotes the number of colleges which have not rejected $s$. This claim can be shown by contradiction. If the probability is less than $1/m_s^u$, then total probability of each college being ranked the first among all colleges that have not rejected $s$ will be less than $(1/m_s^u)\cdot m_s^u < 1$ because $c$ has the highest probability of being ranked the first. However, the total probability should be at least one\footnote{It might be greater than one if there exist aggregated preferences that have a tie at the head.}. Hence, the claim must be satisfied.

    Then, for any aggregated preference where $c$ is ranked the first by $s$ among all colleges that have not rejected $s$, $c$ cannot make a block with any college: those have rejected $s$ must have full seats where each of them is better than $s$; others have not rejected $s$ are less preferred by $s$. Therefore, the approximation ratio is at least
    \[ \alpha \geq \mathsf{ProS}(\pi) \geq \prod_{s\in N}\frac{1}{m_s^u} \geq \prod_{s\in N} \frac{1}{n} = \left(\frac{1}{n}\right)^n. \]
\end{proof}

\begin{proposition}
    The worst-case approximation ratio characterized in Theorem~\ref{thm:HERFalpha} is tight even when the number of features $|F| = 2$.
\end{proposition}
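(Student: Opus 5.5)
The plan is to build, for every $n$, an instance with $|F|=2$ and $m=n$ unit-capacity colleges on which the HERF outcome has $\mathsf{ProS}$ arbitrarily close to $(1/n)^n$, while some other matching has $\mathsf{ProS}=1$. Since the ratio is defined as an infimum, letting a perturbation parameter go to $0$ then shows that the bound of Theorem~\ref{thm:HERFalpha} cannot be improved.

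The uncertainty gadget comes from the proof of Theorem~\ref{thm:nphard}. There, the sequence $\nu_1,\dots,\nu_t$ with $u^{f_1}(y_j)=\nu_j$ and $u^{f_2}(y_j)=\nu_{t-j+1}$ makes each of the $t$ colleges the top choice with probability exactly $1/t$ under the uniform weight distribution. I will take $t=n$ and give every student $s_i$ such utilities over $c_1,\dots,c_n$, relabelled so that one college is favoured. I will shift the breakpoints slightly, by an amount $\epsilon$, so that:
\begin{itemize}
\item college $c_i$ is the top choice of $s_i$ with probability $1/n+\epsilon$;
\item each other college is the top choice with probability $1/n-\epsilon/(n-1)>0$.
\end{itemize}
By Lemma~\ref{lem:probformula}, the top probabilities are lengths of consecutive $w^{f_1}$-intervals, so such a perturbation is just a choice of breakpoints. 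With this choice, the HERF $\mathsf{Next}()$ function makes $s_i$ propose first to $c_i$, with no tie-breaking involved.

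The college priorities are cyclic. College $c_k$ ranks $s_{k+1}$ first (indices mod $n$), then the remaining students in any order, and ranks $s_k$ last. The analysis has three steps.
\begin{itemize}
\item \emph{HERF outcome.} In the first round each $c_i$ receives exactly one proposal, from $s_i$, and accepts it, so GDA stops with $\pi(s_i)=c_i$.
\item \emph{$\mathsf{ProS}$ of the HERF outcome.} For $k\neq i$, college $c_k$ holds $s_k$, its last-ranked student, and prefers $s_i$. So $(s_i,c_k)$ is a block exactly when $c_k\succ_{s_i}c_i$. Hence $s_i$ is in no block iff $c_i$ is her top college (ties have measure zero), which happens with probability $1/n+\epsilon$. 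By independence across students, as in Proposition~\ref{prop:calProS2}, $\mathsf{ProS}(\pi)=(1/n+\epsilon)^n$.
\item \emph{A perfectly stable matching.} Let $\pi'(s_{k+1})=c_k$. Every college then holds its top-ranked student, so no pair can ever be a block, and $\mathsf{ProS}(\pi')=1$.
\end{itemize}
Therefore the ratio equals $(1/n+\epsilon)^n\to(1/n)^n$ as $\epsilon\to0$, which establishes tightness.

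The main obstacle is the first step: realising the perturbed top probabilities with only two features. I need to check two things.
\begin{itemize}
\item Each of the $n$ utility points $(u^{f_1}(c),u^{f_2}(c))$ stays on the upper convex hull of the $n$ points.
\item The top-choice intervals have the prescribed lengths.
\end{itemize}
I would first try points on a strictly concave curve, for example on the circle-like arc implicit in the $\nu$-construction, and then move a single point slightly outward to enlarge its interval. Strict concavity keeps every point extreme, and the interval lengths vary continuously with the points, so a small $\epsilon$ is achievable.
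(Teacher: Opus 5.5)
Your proposal is correct and follows essentially the same route as the paper. It uses the same cyclic college priorities ($c_k$ ranks $s_{k+1}$ first and $s_k$ last), the same $\nu$-gadget with $c_i$ slightly favoured by $s_i$, the HERF outcome $\pi(s_i)=c_i$ with $\mathsf{ProS}\to(1/n)^n$, and the comparison matching $s_{k+1}\mapsto c_k$ with $\mathsf{ProS}=1$. The paper resolves your ``main obstacle'' with exactly your single-point move, adding $\epsilon$ to both utilities of $c_i$; this gives $c_i$ the interval $[\frac{i-1}{n}-\frac{\epsilon}{n\delta},\frac{i}{n}+\frac{\epsilon}{n\delta}]$ and shrinks only the adjacent intervals, not all others uniformly to $1/n-\epsilon/(n-1)$ as you first state, but strict maximality of $c_i$ is all that is needed.
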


\begin{proof}
    We illustrate the tightness of the approximation ratio by showing an instance with $|F|=2$ that meets the worst case scenario. Consider an instance with $n$ students and $m=n$ colleges, and each college has a capacity of one. For each college $c_j\in M$, suppose $s_j$ is the least preferred student, and $s_{j+1}$ ($s_1$ if $j=n$) is the most preferred student according to its preference. On the other hand, for each student $c_i\in N$, the probability distribution $\mu_s$ is a uniform distribution over all possible weight values, and her utility function is as follows.
    \begin{itemize}
        \item $u_{s_i}^{f_1}(c_j) = \nu_j$ for all $j\neq i$, and $u_{s_i}^{f_1}(c_i) = \nu_i + \epsilon$;
        \item $u_{s_i}^{f_1}(c_j) = \nu_{n-j+1}$ for all $j\neq i$, and $u_{s_i}^{f_1}(c_i) = \nu_{n-i+1} + \epsilon$,
    \end{itemize}
    where the values of $\nu$ are defined as: (i) $\nu_1 = 0$, and (ii) $\nu_k = \nu_{k-1} + (n-k+1)\delta$ for all $1<k\leq n$. Here, $0<\delta\leq \frac{2}{n(n-1)}$ and $\epsilon$ is a very small positive number. Under this utility function, it partitions the intervals of $w_{s_i}^{f_1}$ as follows.
    \begin{itemize}
        \item For $1\leq j<i-1$, $c_j$ will be ranked the first in the aggregated preference of $s_i$ when $\frac{j-1}{n}\leq w_{s_i}^{f_1}\leq \frac{j}{n}$;
        \item For $j = i - 1$, $c_j$ will be ranked the first in the aggregated preference of $s_i$ when $\frac{i-2}{n}\leq w_{s_i}^{f_1}\leq \frac{i-1}{n}-\frac{\epsilon}{n\delta}$;
        \item For $j = i$, $c_j$ will be ranked the first in the aggregated preference of $s_i$ when $\frac{i-1}{n}-\frac{\epsilon}{n\delta} \leq w_{s_i}^{f_1} \leq \frac{i}{n} + \frac{\epsilon}{n\delta}$;
        \item For $j = i + 1$, $c_j$ will be ranked the first in the aggregated preference of $s_i$ when $\frac{i}{n} + \frac{\epsilon}{n\delta} \leq w_{s_i}^{f_1} \leq \frac{i+1}{n}$;
        \item For $i+1< j\leq n$, $c_j$ will be ranked the first in the aggregated preference of $s_i$ when $\frac{j-1}{n}\leq w_{s_i}^{f_1}\leq \frac{j}{n}$.
    \end{itemize}
    Since $w_{s_i}^{f_1}$ is a uniformly distributed random variable over $[0,1]$, $c_i$ then has the highest probability to be ranked first by $s_i$. Then, the final result $\pi$ of the HERF method matches $s_i$ to $c_i$ for every $i$. However, since $c_i$ likes $s_i$ the least, then each pair $(s_i, c_j)$ with $j\neq i$ is a potential block. The final probability of stability is
    \[ \mathsf{ProS}(\pi) = \prod_{i=1}^n \left( \frac{i}{n} + \frac{\epsilon}{n\delta} - \left(\frac{i-1}{n}-\frac{\epsilon}{n\delta}\right) \right) = \left( \frac{1}{n} + \frac{2\epsilon}{n\delta} \right)^n, \]
    which approaches to $(1/n)^n$ when $\epsilon\rightarrow 0$. On the other hand, consider an alternative matching $\pi'$ such that $s_i$ is matched to $c_{i-1}$ ($c_n$ if $i=1$). Since each college gets its favorite student, then $\mathsf{ProS}(\pi') = 1$. Therefore, the worst-case approximation ratio is no larger than $(1/n)^n$.
\end{proof}

\subsection{Incentive Compatibility}
First, for all four methods, they all satisfy the minimal requirement of incentive compatibility.

\begin{theorem}\label{thm:ICC}
    The GDA with the LOCV, LOICV, HERF, or HEUF method is IC-C.
\end{theorem}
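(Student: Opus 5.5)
Our approach is a reduction to the strategy-proofness of the classical student-proposing DA. Fix a student $s$ and the reports of everyone else. Because the colleges' preferences are deterministic, GDA with any $\mathsf{Next}()$ rule behaves like ordinary student-proposing DA in which each student $s'$ proposes along a fixed list. For LOCV and HEUF this list is the static order $\succ_{s'}^{\mathsf{Next}}$. For LOICV and HERF the order is recomputed from $R_{s'}$, so the list is generated adaptively, but it is still a deterministic function of $s'$'s report and of the sequence of rejections she receives.

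\textbf{Step 1.} We need a property of every method: a student's first proposal $c^\ast$ is never dominated with certainty. That is, there is no college $c\notin R_s$ with $\mathrm{Pr}[c\succ_s^{w_s} c^\ast]=1$. We check this method by method.
\begin{itemize}
\item HEUF: if $c$ strictly beat $c^\ast$ almost surely, then $c$ would have strictly larger expected utility, contradicting the choice of $c^\ast$.
\item LOCV and LOICV: if $c$ beats $c^\ast$ with certainty, then $\mathrm{Pr}[c^\ast\succeq_s c]=0$. So the smallest entry of $q_s(c^\ast)$ (or of $\tilde q_s$) is $0$. Moreover, every entry of $c$'s vector dominates the corresponding entry of $c^\ast$'s vector, because $c\succ c^\ast$ almost surely gives $\mathrm{Pr}[c\succeq c'']\ge \mathrm{Pr}[c^\ast\succeq c'']$ for every $c''$. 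Also $\mathrm{Pr}[c\succeq c^\ast]=1>0$. Comparing the sorted vectors lexicographically, $c$ comes strictly before $c^\ast$, contradicting the choice of $c^\ast$.
\item HERF: the event that $c^\ast$ is top is almost surely contained in the event that $c^\ast\succeq c$, which has probability $0$. Meanwhile $c$ is top with positive probability, since among the remaining colleges some maximizer exists and $c$ beats $c^\ast$. Hence $c^\ast$ could not have been selected.
\end{itemize}
The same argument applies at every later stage. When $s$ is finally matched to $\pi(s)$, no college outside $R_s$ beats $\pi(s)$ with certainty under $s$'s true type.

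\textbf{Step 2.} Suppose a misreport gives $s$ the college $c'=\pi'(s)$ with $\mathrm{Pr}[c'\succ_s\pi(s)]=1$ under her true type. By Step 1, $c'$ must lie in $R_s$ of the truthful run, so $c'$ rejected $s$ there. We now invoke the standard DA fact: fixing the others' proposal behaviour, no student can obtain a college that rejected her under some proposal list by using a different proposal list.

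For the static methods this is classical strategy-proofness applied to the ordinal lists: we compare truthful list $L$ with misreported list $L'$ and treat $L$ as $s$'s true preference. Since $c'$ rejected $s$ in DA with list $L$, $c'$ is ranked above $\pi(s)$ in $L$. Obtaining $c'$ under $L'$ would therefore be a profitable manipulation with respect to the preference $L$, which is impossible.

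For the adaptive methods, the others' lists are also adaptive but depend only on their own rejection histories. We handle this by noting that the run is outcome-equivalent to DA in which each other student $s''$ uses the fixed list she realizes in the truthful run, extended arbitrarily. The key point is that each adaptive rule is a deterministic function of $R_{s''}$. The delicate part is that the others' realized lists may change when $s$ deviates.

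\textbf{Main obstacle.} Step 2 for LOICV and HERF is the hard part. We would first try the blocking-lemma argument directly on GDA with adaptive lists. Concretely, we would show that the set of students rejected by each college can only be monotonically affected, following the Dubins–Freedman monotonicity proof. Any college that rejects $s$ in the truthful run would then still be filled with students it prefers to $s$ under any deviation by $s$. If the deviation causes fewer proposals from others to reach $c'$, we would trace the rejection chain back to $s$'s own altered proposals. This yields a contradiction in the style of the standard proof that DA is group-strategy-proof for the proposer.
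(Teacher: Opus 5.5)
Your overall structure matches the paper's proof: first use strategy-proofness of DA to rule out colleges that rejected $s$, then check method by method that no unrejected college beats the proposed one with certainty. The genuine gap is Step 2 for LOICV and HERF. You never actually prove that $s$ cannot obtain a college that rejected her. The ``main obstacle'' paragraph is only a plan (a Dubins--Freedman style argument for adaptive lists) that you do not carry out.

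Moreover, the obstacle you are fighting does not exist. A student proposes to one college at a time, and she is rejected only by the college she is currently proposing to or held by. So when she makes her $k$-th proposal, $R_s$ is exactly the set $\{c_1,\dots,c_{k-1}\}$ of her earlier proposals, and
\[ c_k=\mathsf{Next}(s,\{c_1,\dots,c_{k-1}\},\{u_s^f\}_{f\in F},\mu_s). \]
Iterating $\mathsf{Next}$ from $R_s=\emptyset$ therefore yields a complete fixed list, namely the order $\succ_s^{\mathsf{Next}}$ the paper mentions. This list depends only on $s$'s own report, not on the other students or on when rejections occur. In particular, the other students' full lists do not change when $s$ deviates; only how far down them they go changes. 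So there is no ``extended arbitrarily'' issue: the extension is the canonical continuation of $\mathsf{Next}$, and an arbitrary extension would in general be wrong.

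With this observation, GDA under LOICV or HERF is literally standard student-proposing DA on fixed lists. Your static argument then applies verbatim to all four methods: treat the truthful list $L$ as $s$'s preference; since $c'$ rejected her, $c'$ precedes $\pi(s)$ in $L$; obtaining $c'$ with $L'$ would be a profitable manipulation of DA. This is exactly how the paper proceeds: it treats the proposing orders $o_s$ and $o_s'$ as fixed lists and invokes truthfulness of DA in one step.

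Your Step 1 is correct, with one slip in the HERF case. The college $c$ that beats $c^\ast$ need not itself be top with positive probability, since another unrejected college could beat $c$ with certainty. What is true, and suffices, is the following: for every $w_s$ some college of $M\setminus R_s$ is weakly top, so the top-probabilities of the unrejected colleges sum to at least $1$. Hence some college has positive top-probability, while $c^\ast$ has top-probability $0$; this is the paper's argument. Your unified LOCV/LOICV argument is fine and somewhat cleaner than the paper's separate cycle argument for LOICV. It pairs the entries of the two comparison vectors so that $c$'s entries dominate, with strict inequality at the pair $(\mathrm{Pr}[c\succeq_s^{w_s}c^\ast],\mathrm{Pr}[c^\ast\succeq_s^{w_s}c])=(1,0)$, and this strict entrywise dominance survives sorting.
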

\begin{proof}
    For any student $s\in N$, whichever one of the four methods is selected, we claim that any misreport cannot make her be accepted by a college that has rejected her when she truthfully reports. This can be shown by contradiction. Suppose that the proposing order when $s$ truthfully reports her private information is $o_s$, and the alternative proposing order when she misreports her information is $o_s'$. If with the proposing order $o_s'$, $s$ can be accepted by a college that has rejected her with the proposing order $o_s$, then we can construct an instance of traditional school choice with $o_s$ as preferences of students, where $s$ can misreport her preference as $o_s'$ to be matched with a better college under the student-proposing deferred acceptance algorithm. It contradicts to the fact the student-proposing DA algorithm is truthful for students. Hence, we only need to consider colleges that has not been proposed by $s$ after the algorithm terminates. In other words, it is sufficient to show that, in each round, a student $s$ proposes to a college $c$ indicates that there does not exist another college $c'$ that has not rejected $s$, such that $\mathrm{Pr}[c'\succ_s^{w_s} c] = 1$.

    For LOCV, it can be shown by contradiction. Suppose in one round, a student $s$ proposes to a college $c$, but there is another college $c'$ that has not rejected $s$ such that $\mathrm{Pr}[c'\succ_s^{w_s} c] = 1$. Then, for any other college $c''$, we have $\mathrm{Pr}[c'\succeq_s^{w_s} c''] \geq \mathrm{Pr}[c\succ_s^{w_s} c' \text{ and } c\succeq_s^{w_s} c''] = \mathrm{Pr}[c\succeq_s^{w_s} c'']$. This implies that $q_s(c)$ must be lexicographically smaller than $q_s(c')$, which contradicts the fact that $s$ proposes to $c$. Hence, such a college $c'$ does not exist.

    For LOICV, it can also be shown by contradiction. Suppose in one round, a student $s$ proposes to a college $c$, but there is another college $c'$ that has not rejected $s$ such that $\mathrm{Pr}[c'\succ_s^{w_s} c] = 1$. Then, the first element of comparison vector in this round $\tilde{q}_s(c, R_s)$ must be zero. By the definition of LOICV, it implies that for each college $c'\notin R_s$, the first element of comparison vector in this round is zero. This means among all colleges that have not rejected $s$, each of them is less preferred by another college with probability 1. Then, an aggregated preference of $s$ must contain a cycle, which is impossible since the aggregated preferences is defined by expected utilities. Hence, such a college $c'$ does not exist.

    For HERF, the existence of $c'$ implies that the probability of $c$ is the most preferred college among $M\setminus R_s$ will be 0. However, since an aggregated preference of $s$ does not contain any cycles, there must be at least one college has positive probability to be most preferred among any subset of colleges. Hence, such a college $c'$ does not exist.

    For HEUF, $\mathrm{Pr}[c'\succ_s^{w_s} c] = 1$ means that for any possible $w_s$, the weighted utility of $c'$ is larger than that of $c$: $\sum_{f\in F}w_s^fu_s^f(c') > \sum_{f\in F} w_s^fu_s^f(c)$. This directly implies that the expected utility of $c$ also must be larger that of $c$, which contradicts the definition of the HEUF method. Hence, such a college $c'$ does not exist.

    In summary, all of the four methods are IC-C.
\end{proof}

Additionally, for the method of LOICV, it can further satisfy IC-R when $|F| = 2$.

\begin{lemma}\label{lem:transitivity}
    When $|F| = 2$, then for any student $s\in N$, and any three different colleges $c_i$, $c_j$, $c_k\in M$, if $\mathrm{Pr}[c_i \succeq_s^{w_s} c_j]\geq \frac{1}{2}$ and $\mathrm{Pr}[c_j \succeq_s^{w_s} c_k]\geq \frac{1}{2}$, we must have $\mathrm{Pr}[c_i \succeq_s^{w_s} c_k]\geq \frac{1}{2}$.
\end{lemma}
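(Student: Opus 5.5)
The plan is to reduce everything to one real parameter and use the interval structure from Lemma~\ref{lem:probformula}. Write $t=w_s^{f_1}\in[0,1]$, so $w_s^{f_2}=1-t$. For each college $c$ the aggregated utility $U_c(t)=t\,u_s^{f_1}(c)+(1-t)\,u_s^{f_2}(c)$ is affine in $t$. Define
\[ A=\{t: U_{c_i}(t)\ge U_{c_j}(t)\},\quad B=\{t: U_{c_j}(t)\ge U_{c_k}(t)\},\quad C=\{t: U_{c_i}(t)\ge U_{c_k}(t)\}. \]
Each set is where an affine function is nonnegative. Hence each is $\emptyset$, all of $[0,1]$, a closed prefix $[0,a]$, or a closed suffix $[a,1]$; these are exactly the cases of Lemma~\ref{lem:probformula}. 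The hypotheses are $\Pr[t\in A]\ge 1/2$ and $\Pr[t\in B]\ge 1/2$. Pointwise transitivity of $\ge$ on reals gives $A\cap B\subseteq C$. The goal is $\Pr[t\in C]\ge 1/2$.

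\textbf{Easy cases.} If $A=[0,1]$, then $C\supseteq B$ and we are done; symmetrically if $B=[0,1]$, then $C\supseteq A$. Neither set can be empty, since each has probability at least $1/2$. If $A$ and $B$ are both prefixes, then $A\cap B$ is the shorter of the two, which is one of $A,B$, so $C\supseteq A\cap B$ has probability at least $1/2$. Two suffixes are handled symmetrically.

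\textbf{Mixed case.} Suppose $A=[0,a]$ and $B=[b,1]$ with $b\le a$; the reverse orientation is symmetric. Then $[b,a]\subseteq C$. Since $C$ is itself a prefix or a suffix, a prefix containing $a$ contains $[0,a]=A$, and a suffix containing $b$ contains $[b,1]=B$. Either way $\Pr[t\in C]\ge 1/2$.

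\textbf{Main obstacle: $A=[0,a]$, $B=[b,1]$ with $a<b$, so $A\cap B=\emptyset$.} Then $\Pr[A]=\Pr[B]=1/2$, and the open gap $(a,b)$ has probability $0$. Moreover $U_{c_i}-U_{c_k}=(U_{c_i}-U_{c_j})+(U_{c_j}-U_{c_k})$ is strictly negative on $[a,b]$, so $C$ misses $[a,b]$ entirely.

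In this case the statement as written does not hold for arbitrary distributions, so I would first try to rule the case out rather than bound $\Pr[C]$ directly. If $\mu_s$ gives positive mass to every nondegenerate interval (for instance the uniform distribution used throughout, or any distribution with positive density), then $\Pr[(a,b)]>0$ when $a<b$. This contradicts $\Pr[A]+\Pr[B]\ge 1$, so the case cannot occur.

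For a general $\mu_s$ it can occur. Take $w_s^{f_1}$ uniform on the two atoms $\{0,1\}$, with
\[ u_s^{f_2}(c_i,c_j,c_k)=(1,0,2), \qquad u_s^{f_1}(c_i,c_j,c_k)=(0,2,1) \]
(rescaled into $[0,1]$). Then $\Pr[c_i\succeq_s^{w_s}c_j]=\Pr[c_j\succeq_s^{w_s}c_k]=1/2$, but $\Pr[c_i\succeq_s^{w_s}c_k]=0$.

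So I would state and prove the lemma under a full-support (positive density) assumption on $w_s^{f_1}$, which is presumably what the later IC-R analysis of LOICV needs. With that assumption the case analysis above completes the proof.
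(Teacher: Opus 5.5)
Your proposal is correct, including the negative part. The two-atom distribution is admissible in the paper's model. Finite-support $\mu_s$ are explicitly allowed, and the lottery-model reduction in Section~\ref{sec:relatedwork} uses exactly atoms at unit weight vectors. Your example gives $\mathrm{Pr}[c_i\succeq_s^{w_s}c_j]=\mathrm{Pr}[c_j\succeq_s^{w_s}c_k]=1/2$ but $\mathrm{Pr}[c_i\succeq_s^{w_s}c_k]=0$, so the lemma as stated is false.

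The paper's proof uses the full-support assumption you make explicit, without stating it. In Cases IV, VI and VII it has $\mathrm{Pr}[w_s^{f_1}\ge\eta_s(c_i,c_j)]\ge 1/2$, $\mathrm{Pr}[w_s^{f_1}\le\eta_s(c_j,c_k)]\ge 1/2$ and $\eta_s(c_j,c_k)\le\eta_s(c_i,c_j)$. From these it concludes that ``it can only be'' $\eta_s(c_j,c_k)=\eta_s(c_i,c_j)$, which requires $\mu_s$ to put mass on the open interval between the two thresholds. After swapping $f_1$ and $f_2$ to match the paper's normalization $u_s^{f_1}(c_i)\ge u_s^{f_1}(c_j)$, your example falls exactly in Case VII, with $\eta_s(c_j,c_k)=1/3<2/3=\eta_s(c_i,c_j)$. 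Continuity alone does not rescue the statement: a density that vanishes on an interval fails in the same way.

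Under full support, the two proofs take different routes. The paper splits into seven cases by the orders of the three colleges' utilities in each feature, and handles the crossing cases with a convex-combination identity among the $\eta_s$ thresholds. You use only that $A$, $B$, $C$ are initial or final segments of $[0,1]$ and that $A\cap B\subseteq C$. Your version is shorter, needs no threshold algebra, and isolates the one configuration (disjoint prefix and suffix) where the distribution matters, which is what exposes the counterexample. The paper's version is more explicit about the utility data but hides the distributional assumption.

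One correction to your closing remark: the IC-R result for LOICV does not need full support. Run your argument with $>1/2$ in both hypotheses. The disjoint case is then impossible for every $\mu_s$, since it would force $\mathrm{Pr}[A]+\mathrm{Pr}[B]>1$. Every other case gives $\mathrm{Pr}[c_i\succeq_s^{w_s}c_k]>1/2$. Chaining this strict form along a cycle of comparisons with $\mathrm{Pr}[c\succ_s^{w_s}c']>1/2$ is all the acyclicity step in the proof of Theorem~\ref{thm:LOICV-ICR} uses. So that step goes through for arbitrary $\mu_s$, even though the lemma as stated fails.
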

\begin{proof}
    Since $\mathrm{Pr}[c_i \succ_s^{w_s} c_j]\geq \frac{1}{2}$, then either $u_s^{f_1}(c_i) \geq u_s^{f_1}(c_j)$ or $u_s^{f_2}(c_i) \geq u_s^{f_2}(c_j)$ must be satisfied. W.l.o.g., we assume that $u_s^{f_1}(c_i) \geq u_s^{f_1}(c_j)$. Similarly, if $u_s^{f_1}(c_j) \leq u_s^{f_1}(c_k)$, then we must have $u_s^{f_2}(c_j) \geq u_s^{f_2}(c_k)$. Then, based on the relationships among utilities of three colleges, we have totally the following cases.

    \noindent\textbf{Case I}: $u_s^{f_1}(c_i) \geq u_s^{f_1}(c_j) \geq u_s^{f_1}(c_k)$. In this case, by Lemma~\ref{lem:probformula}, we have $\mathrm{Pr}[c_i\succeq_s^{w_s} c_j] = 1$ or $\mathrm{Pr}[c_i\succeq_s^{w_s} c_j] = \mathrm{Pr}[w_s^{f_1}\geq \eta_s(c_i,c_j)]$, and similarly, we also have $\mathrm{Pr}[c_j\succeq_s^{w_s} c_k] = 1$ or $\mathrm{Pr}[c_j\succeq_s^{w_s} c_k] = \mathrm{Pr}[w_s^{f_1}\geq \eta_s(c_j,c_k)]$. Hence, the probability of $\mathrm{Pr}[c_i\succeq_s^{w_s}c_j \text{ and } c_j\succeq_s^{w_s}c_k]$ has four possible values: 1) 1 if both probabilities are 1, 2) $\mathrm{Pr}[c_j\succeq_s^{w_s} c_k]\geq 1/2$, if $\mathrm{Pr}[c_i\succeq_s^{w_s} c_j]=1$, 3) $\mathrm{Pr}[c_i\succeq_s^{w_s} c_j]\geq 1/2$, if $\mathrm{Pr}[c_j\succeq_s^{w_s} c_k] = 1$, or 4) $\mathrm{Pr}[w_s^{f_1}\geq \eta_s(c_i,c_j) \text{ and }w_s^{f_1}\geq \eta_s(c_j,c_k)] = \mathrm{Pr}[w_s^{f_1}\geq \min\{\eta_s(c_i,c_j), \eta_s(c_j,c_k)\}]$, which finally equals to one of the two probabilities and greater than $1/2$. Therefore, $\mathrm{Pr}[c_i \succeq_s^{w_s} c_k]\geq \mathrm{Pr}[c_i\succeq_s^{w_s}c_j \text{ and } c_j\succeq_s^{w_s}c_k] \geq \frac{1}{2}$.
    
    \noindent\textbf{Case II}: $u_s^{f_1}(c_i) \geq u_s^{f_1}(c_k) \geq u_s^{f_1}(c_j)$ and $u_s^{f_2}(c_i) \geq u_s^{f_2}(c_j) \geq u_s^{f_2}(c_k)$. In this case, since both $u_s^{f_1}(c_i) \geq u_s^{f_1}(c_k)$ and $u_s^{f_2}(c_i) \geq u_s^{f_2}(c_k)$ are satisfied, we have $\mathrm{Pr}[c_i \succeq_s^{w_s} c_k] = 1$.
    
    \noindent\textbf{Case III}: $u_s^{f_1}(c_i) \geq u_s^{f_1}(c_k) \geq u_s^{f_1}(c_j)$ and $u_s^{f_2}(c_j) \geq u_s^{f_2}(c_i) \geq u_s^{f_2}(c_k)$. Same to Case II, we have $\mathrm{Pr}[c_i \succeq_s^{w_s} c_k] = 1$ in this case.
        
    \noindent\textbf{Case IV}: $u_s^{f_1}(c_i) \geq u_s^{f_1}(c_k) \geq u_s^{f_1}(c_j)$ and $u_s^{f_2}(c_j) \geq u_s^{f_2}(c_k) \geq u_s^{f_2}(c_i)$. In this case, we can observe that $\Delta_s^f(c_i,c_j) = \Delta_s^f(c_i,c_k) + \Delta_s^f(c_j,c_k)$ for $f\in \{f_1, f_2\}$. Hence, by Lemma~\ref{lem:probformula}, we have $\eta_s(c_i,c_j)$ $= \frac{\Delta_s^{f_1}(c_i,c_k)+\Delta_s^{f_2}(c_i,c_k)}{\Delta_s^{f_1}(c_i,c_j)+\Delta_s^{f_2}(c_i,c_j)}\eta_s(c_i,c_k) + \frac{\Delta_s^{f_1}(c_j,c_k) + \Delta_s^{f_2}(c_j,c_k)}{\Delta_s^{f_1}(c_i,c_j)+\Delta_s^{f_2}(c_i,c_j)}\eta_s(c_j,c_k)$ (for the special case where $\Delta_s^{f_1}(c_i,c_j)+\Delta_s^{f_2}(c_i,c_j) = 0$, the values of three colleges are the same. Then, the statement is automatically true). This indicates that $\eta_s(c_i,c_j)$ is a convex combination of $\eta_s(c_i,c_k)$ and $\eta_s(c_j,c_k)$, so one of $\eta_s(c_i, c_k) \leq \eta_s(c_i, c_j)\leq \eta_s(c_j,c_k)$ and $\eta_s(c_i, c_k) \geq \eta_s(c_i, c_j)\geq \eta_s(c_j,c_k)$ must be satisfied. For the former case, we have $\mathrm{Pr}[c_i \succeq_s^{w_s} c_k] = \mathrm{Pr}[w_s^{f_1}\geq \eta_s(c_i,c_k)] \geq \mathrm{Pr}[w_s^{f_1}\geq \eta_s(c_i,c_j)] = \mathrm{Pr}[c_i \succeq_s^{w_s} c_j] \geq \frac{1}{2}$. For the latter case, since we have $\mathrm{Pr}[c_i \succeq_s^{w_s} c_j] = \mathrm{Pr}[w_s^{f_1}\geq \eta_s(c_i,c_j)] \geq 1/2$ and $\mathrm{Pr}[c_j \succeq_s^{w_s} c_k] = \mathrm{Pr}[w_s^{f_1}\leq \eta_s(c_j,c_k)] \leq 1/2$, then it can only be $\eta_s(c_i,c_j) = \eta_s(c_j,c_k)$. According to the coefficients of the convex combination, this leads to the same utilities of college $c_i$ and $c_k$, and then $\mathrm{Pr}[c_i \succeq_s^{w_s} c_k] = 1$.
    
    \noindent\textbf{Case V}: $u_s^{f_1}(c_k) \geq u_s^{f_1}(c_i) \geq u_s^{f_1}(c_j)$ and $u_s^{f_2}(c_i) \geq u_s^{f_2}(c_j) \geq u_s^{f_2}(c_k)$. By symmetry with Case I, this case is definitely true.
    
    \noindent\textbf{Case VI}: $u_s^{f_1}(c_k) \geq u_s^{f_1}(c_i) \geq u_s^{f_1}(c_j)$ and $u_s^{f_2}(c_j) \geq u_s^{f_2}(c_i) \geq u_s^{f_2}(c_k)$. Similar to Case IV, here, $\eta_s(c_j,c_k)$ is a convex combination of $\eta_s(c_i,c_j)$ and $\eta_s(c_i,c_k)$, so one of $\eta_s(c_i, c_k) \leq \eta_s(c_j, c_k)\leq \eta_s(c_i,c_j)$ and $\eta_s(c_i, c_k) \geq \eta_s(c_j, c_k)\geq \eta_s(c_i,c_j)$ must be satisfied. For the former case, since we have $\mathrm{Pr}[c_i \succeq_s^{w_s} c_j] = \mathrm{Pr}[w_s^{f_1}\geq \eta_s(c_i,c_j)] \geq 1/2$ and $\mathrm{Pr}[c_j \succeq_s^{w_s} c_k] = \mathrm{Pr}[w_s^{f_1}\leq \eta_s(c_j,c_k)] \leq 1/2$, then it can only be $\eta_s(c_j,c_k) = \eta_s(c_i,c_j)$. According to the coefficients of the convex combination, this leads to the same utilities of college $c_i$ and $c_k$, and then $\mathrm{Pr}[c_i \succeq_s^{w_s} c_k] = 1$. For the latter case, we have $\mathrm{Pr}[c_i \succeq_s^{w_s} c_k] = \mathrm{Pr}[w_s^{f_1}\leq \eta_s(c_i,c_k)] \geq \mathrm{Pr}[w_s^{f_1}\leq \eta_s(c_j,c_k)] = \mathrm{Pr}[c_j \succeq_s^{w_s} c_k] \geq \frac{1}{2}$.
    
    \noindent\textbf{Case VII}: $u_s^{f_1}(c_k) \geq u_s^{f_1}(c_i) \geq u_s^{f_1}(c_j)$ and $u_s^{f_2}(c_j) \geq u_s^{f_2}(c_k) \geq u_s^{f_2}(c_i)$. In this case, since $\Delta_s^{f_1}(c_i,c_j) \leq \Delta_s^{f_1}(c_j,c_k)$ and $\Delta_s^{f_2}(c_i,c_j) \geq \Delta_s^{f_2}(c_j,c_k)$, then by Lemma~\ref{lem:probformula}, we have $\eta_s(c_i,c_j)\geq \eta_s(c_j,c_k)$. On the other hand, we have $\mathrm{Pr}[c_i \succeq_s^{w_s} c_j] = \mathrm{Pr}[w_s^{f_1}\geq \eta_s(c_i,c_j)] \geq 1/2$ and $\mathrm{Pr}[c_j \succeq_s^{w_s} c_k] = \mathrm{Pr}[w_s^{f_1}\leq \eta_s(c_j,c_k)] \leq 1/2$, then it can only be $\eta_s(c_j,c_k) = \eta_s(c_i,c_j)$. Hence, $\Delta_s^{f_1}(c_i,c_j) = \Delta_s^{f_1}(c_j,c_k)$ and $\Delta_s^{f_2}(c_i,c_j) = \Delta_s^{f_2}(c_j,c_k)$, which means $c_i$ and $c_k$ have the same utilities, and then $\mathrm{Pr}[c_i \succeq_s^{w_s} c_k] = 1$.

    Taking all the above together, we can conclude that the statement of lemma is true.
\end{proof}

\begin{theorem}\label{thm:LOICV-ICR}
    The GDA with the LOICV method is IC-R when the number of features $|F| = 2$.
\end{theorem}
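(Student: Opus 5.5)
We follow the reduction used in the proof of Theorem~\ref{thm:ICC}: whatever student $s$ misreports, she cannot be accepted by a college that rejected her under truthful reporting. Otherwise, treating the truthful proposing order $o_s$ as a deterministic preference, $s$ would profit in standard student-proposing DA by reporting $o_s'$, which is impossible. So any college $c'$ that $s$ can obtain by misreporting is either her truthful match $c=\pi(s)$ or a college she never proposed to, i.e.\ $c'\notin R_s$ at the final round, where $R_s$ is the rejection set just before her final proposal to $c$. (If she is unmatched, $R_s=M$ and there is nothing to obtain.) It therefore suffices to show the following: whenever LOICV makes $s$ propose to $c$ given the current $R_s$, every $c'\in M\setminus R_s$ with $c'\neq c$ satisfies $\mathrm{Pr}[c'\succ_s^{w_s} c]\le 1/2$, or equivalently $\mathrm{Pr}[c\succeq_s^{w_s} c']\ge 1/2$.

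To prove this, fix $A=M\setminus R_s$ and consider the binary relation on $A$ given by $a\trianglerighteq b$ iff $\mathrm{Pr}[a\succeq_s^{w_s} b]\ge 1/2$. It is complete, since $\mathrm{Pr}[a\succeq b]+\mathrm{Pr}[b\succeq a]\ge 1$, so one of the two is at least $1/2$. By Lemma~\ref{lem:transitivity} it is also transitive. Hence it is a total preorder, and some $a^*\in A$ satisfies $a^*\trianglerighteq b$ for all $b\in A$. Every entry of $\tilde q_s(a^*,R_s)$ is therefore at least $1/2$, so its first (smallest) entry is at least $1/2$.

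Next, since $s$ proposes to the lexicographic maximum $c$, the smallest entry of $\tilde q_s(c,R_s)$ is at least the smallest entry of $\tilde q_s(a^*,R_s)$, hence at least $1/2$. The minimum is taken over all $c'\in A\setminus\{c\}$, so $\mathrm{Pr}[c\succeq_s^{w_s}c']\ge 1/2$ for every such $c'$. This gives $\mathrm{Pr}[c'\succ_s^{w_s}c]=1-\mathrm{Pr}[c\succeq_s^{w_s}c']\le 1/2$, so no misreport can yield an assignment that is better with probability strictly greater than $1/2$, which is IC-R.

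The step needing care is making the first reduction precise. The truthful run of GDA coincides with standard DA under the deterministic preferences $o_s$; note that for LOICV $o_s$ is generated adaptively, but it is still a well-defined linear order realized on the run. We then invoke strategy-proofness of DA for $s$ against any fixed report $o_s'$, holding the other students' orders fixed; their orders do not depend on $s$'s report, since each student's $\mathsf{Next}$ uses only her own $R$ and reports. A second point to handle explicitly is the case where the obtained college is $\mathsf{null}$ or equals $c$: then $\mathrm{Pr}[\cdot\succ\cdot]=0$ trivially, because every college is acceptable. Once these are settled, the only real ingredient is the completeness-plus-transitivity (tournament) argument built on Lemma~\ref{lem:transitivity}.
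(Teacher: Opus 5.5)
Your argument matches the paper's proof in its main steps. The reduction to colleges in $M\setminus R_s$ is the same, via the argument of Theorem~\ref{thm:ICC}. You use Lemma~\ref{lem:transitivity} to get a college that beats every remaining college with probability at least $1/2$, and you observe that the lexicographic maximum must then have smallest entry at least $1/2$. You also treat the adaptive order and the $\mathsf{null}$ case more explicitly than the paper does.

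The one divergence is how that college is produced. You use completeness plus transitivity of the weak relation $a\trianglerighteq b\iff \mathrm{Pr}[a\succeq_s^{w_s}b]\ge 1/2$. The paper instead rules out cycles of the strict relation $\mathrm{Pr}[a\succ_s^{w_s}b]>1/2$. Taking Lemma~\ref{lem:transitivity} at face value these are interchangeable, but its weak--weak form is false when $w_s^{f_1}$ has a non-unique median. Take $w_s^{f_1}$ uniform on $[0,0.3]\cup[0.7,1]$, and let $(u_s^{f_1},u_s^{f_2})$ be $(0.9,0.4)$, $(0.4,0.9)$, $(0.93,0.43)$ for colleges $a,b,c$. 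Then $\mathrm{Pr}[a\succeq_s^{w_s}b]=\mathrm{Pr}[b\succeq_s^{w_s}c]=1/2$ but $\mathrm{Pr}[a\succeq_s^{w_s}c]=0$. The lemma's proof tacitly assumes a unique median where it concludes $\eta_s(c_j,c_k)=\eta_s(c_i,c_j)$. So your ``total preorder'' step is not justified in full generality, even though the existence claim it supports is still true.

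The fix is to argue with the strict relation only, which is all the paper's cycle formulation actually needs. Let $m^-\le m^+$ be the smallest and largest medians of $w_s^{f_1}$. Split on the sign of the slope of $x\mapsto x\,(u_s^{f_1}(a)-u_s^{f_1}(b))+(1-x)\,(u_s^{f_2}(a)-u_s^{f_2}(b))$. One checks that $\mathrm{Pr}[a\succ_s^{w_s}b]>1/2$ holds if and only if this function is strictly positive at both $x=m^-$ and $x=m^+$. That is strict Pareto dominance in two coordinates, hence acyclic. So every finite $A$ contains some $a^*$ that no $b\in A$ beats with probability above $1/2$, i.e.\ $\mathrm{Pr}[a^*\succeq_s^{w_s}b]\ge 1/2$ for all $b\in A$, and the rest of your proof goes through verbatim. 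This is also why the paper's route survives: chaining along its strict cycle only ever combines a weak hypothesis with a strict one, and that mixed form of transitivity holds for every weight distribution.
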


\begin{proof}
    First, as stated in the proof of Theorem~\ref{thm:ICC}, for any $s\in N$, no misreport can make her be accepted by a college that rejected her when she reported truthfully, so we only need to consider colleges that have not been proposed to by $s$ after the algorithm terminates.

    By Lemma~\ref{lem:transitivity}, we can make a corollary that for any student $s$, and any subset of colleges $M'\subseteq M$ with $|M'| > 1$, there must exist a college $c\in M'$, such that $\mathrm{Pr}[c \succeq_s^{w_s} c'] \geq \frac{1}{2}$, for any other $c'\in M'$. The statement is obviously true for $|M'| = 2$. For $|M'|>2$, it can be shown by contradiction. Suppose that for any $c\in M'$, there is at least one another college $c'\in M'$ such that $\mathrm{Pr}[c' \succ_s^{w_s} c] > \frac{1}{2}$. Then, we can construct a directed graph where nodes are colleges in $M'$, and an edge $(c,c')$ appears if and only if $\mathrm{Pr}[c' \succ_s^{w_s} c] > \frac{1}{2}$. Since each node has at least one out-edge appoints to the other node, there must be at least one cycle in the graph. Denote the cycle as $(c_1,c_2,\dots,c_k,c_1)$, and we have $\mathrm{Pr}[c_{i} \succ_s^{w_s} c_{i+1}] > \frac{1}{2}$ for $1\leq i< k$. By Lemma~\ref{lem:transitivity}, we must also have $\mathrm{Pr}[c_1 \succeq_s^{w_s} c_k] \geq \frac{1}{2}$, which contradicts to $\mathrm{Pr}[c_{k} \succ_s^{w_s} c_{1}] > \frac{1}{2}$.

    Then, according to the process of the LOICV method, in each round, the college proposed by a student $s$ must be one of the colleges described above, with $M'$ as the set of colleges that have not rejected $s$, since the comparison vectors are in the ascending order of elements. Therefore, for any college $c'$ that $s$ can be matched with by misreporting, compared to the college $c$ she can be matched without misreporting, there must be $\mathrm{Pr}[c' \succ_s^{w_s} c] = 1 - \mathrm{Pr}[c \succeq_s^{w_s} c'] \leq \frac{1}{2}$. This shows that the method is IC-R.
\end{proof}

However, when $|F|\geq 3$, it can no longer be IC-R because the transitivity characterized in Lemma~\ref{lem:transitivity} might be violated (see Example~\ref{ex:notrans} for details). Finally, for the method of HEUF, it can also be IC-R for a certain class of probability distributions. 

\begin{theorem}\label{thm:heufICR}
    The GDA with the HEUF method is IC-R if and only if for any student $s\in N$, $\mathrm{Pr}[w_s^{f_1}\geq \mathbb{E}_{w_s\sim \mu_s}[w_s^{f_1}]] = \mathrm{Pr}[w_s^{f_1}\leq \mathbb{E}_{w_s\sim \mu_s}[w_s^{f_1}]]$, when the number of features $|F| = 2$.
\end{theorem}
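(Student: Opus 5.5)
Write $\bar w=\mathbb{E}_{w_s\sim\mu_s}[w_s^{f_1}]$. The plan is to reduce HEUF to a comparison of $w_s^{f_1}$ against the pairwise thresholds $\eta_s(c,c')$ of Lemma~\ref{lem:probformula}, and then to mimic the proof of Theorem~\ref{thm:LOICV-ICR}. When $|F|=2$, linearity gives $\mathbb{E}[\sum_f w_s^f u_s^f(c)]=\bar w\,u_s^{f_1}(c)+(1-\bar w)\,u_s^{f_2}(c)$. Thus HEUF ranks colleges by their aggregated utility at the deterministic weight $\bar w$, so $c$ has expected utility at least that of $c'$ iff $c\succeq_s^{\bar w} c'$.

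\textbf{Sufficiency.} Suppose $\mathrm{Pr}[w_s^{f_1}\ge \bar w]=\mathrm{Pr}[w_s^{f_1}\le \bar w]$. We treat the cases of Lemma~\ref{lem:probformula} separately.
\begin{enumerate}
\item If $c$ dominates $c'$ weakly in both features, then $\mathrm{Pr}[c'\succ_s c]=0$.
\item If $u_s^{f_1}(c)>u_s^{f_1}(c')$ and $u_s^{f_2}(c)\le u_s^{f_2}(c')$, then $c\succeq_s^{\bar w}c'$ means $\bar w\ge \eta_s(c,c')$. Hence
\[\mathrm{Pr}[c'\succ_s c]=\mathrm{Pr}[w_s^{f_1}<\eta_s]\le \mathrm{Pr}[w_s^{f_1}<\bar w]=1-\mathrm{Pr}[w_s^{f_1}\ge\bar w].\]
The hypothesis gives $\mathrm{Pr}[w_s^{f_1}<\bar w]=\mathrm{Pr}[w_s^{f_1}>\bar w]$, and these two probabilities sum to at most $1$, so each is at most $1/2$.
\item The remaining case is symmetric to the previous one.
\end{enumerate}
So for every pair, $c\succeq_s^{\bar w}c'$ implies $\mathrm{Pr}[c'\succ_s c]\le 1/2$. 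By the argument in the proof of Theorem~\ref{thm:ICC}, a misreport can only yield a college $c'$ that $s$ never proposed to. Such a $c'$ had not rejected $s$ when she proposed to her truthful match $c$, so $c$ has expected utility at least that of $c'$. Therefore $\mathrm{Pr}[c'\succ_s c]\le 1/2$, which is IC-R.

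\textbf{Necessity.} Suppose some student $s$ has $\mathrm{Pr}[w_s^{f_1}>\bar w]\ne\mathrm{Pr}[w_s^{f_1}<\bar w]$; say the left side is larger, the other case being symmetric. We build an instance with two colleges $c,c'$, where $u^{f_1}(c')>u^{f_1}(c)$, $u^{f_2}(c')<u^{f_2}(c)$, and $\eta_s(c',c)$ equals $\bar w$ or is perturbed slightly above it. The perturbation is chosen so that HEUF makes $s$ propose to $c$ first, while $\mathrm{Pr}[c'\succ_s c]=\mathrm{Pr}[w_s^{f_1}>\eta]>1/2$. At equality this needs a strict gap, and we use a tie-breaking choice or a small perturbation to obtain it. Let $c$ have capacity $1$ and prefer nobody over $s$, so that $s$ gets $c$ when truthful. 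If $s$ instead reports identical utilities in both features ranking $c'$ first, with $c'$ free or preferring $s$, she obtains $c'$ with probability of improvement above $1/2$, violating IC-R.

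The main obstacle is the boundary in the necessity direction. If $\mathrm{Pr}[w_s^{f_1}>\bar w]>1/2$ fails but the asymmetry comes from atoms at $\bar w$, we need to work with $\succeq$ versus $\succ$. We also need right-continuity of the c.d.f. to push $\eta$ just above $\bar w$ while keeping $\mathrm{Pr}[w_s^{f_1}>\eta]>1/2$; with $\eta=\bar w$ exactly we instead rely on adversarial tie-breaking. We would first try to handle this by choosing $\eta=\bar w$ and a tie-breaking rule against $s$, and then fall back on perturbation.
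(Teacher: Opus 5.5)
\textbf{Sufficiency (correct, different route).} Your sufficiency half is correct, and it differs from the paper's argument. The paper claims that under the condition, the college chosen by LOICV always has maximal expected utility. It concludes that HEUF coincides with LOICV and inherits IC-R from Theorem~\ref{thm:LOICV-ICR}. You instead directly bound $\mathrm{Pr}[c'\succ_s^{w_s} c]\le\max\{\mathrm{Pr}[w_s^{f_1}>\bar w],\mathrm{Pr}[w_s^{f_1}<\bar w]\}\le 1/2$ for every $c'$ of weakly lower expected utility. Your route is the more robust one. The paper's intermediate step ``$\mathrm{Pr}[w_s^{f_1}\ge\eta]\ge 1/2$ implies $\bar w\ge\eta$'' fails for atomic distributions: take $w_s^{f_1}$ uniform on $\{0.2,0.8\}$ and $\eta=0.7$. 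So the claimed equivalence with LOICV can break at ties, although IC-R itself survives.

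\textbf{Necessity (genuine gap).} The gap is at exactly the boundary you flagged, and it cannot be closed, because the statement is false there. Your construction needs $\mathrm{Pr}[w_s^{f_1}>\bar w]>1/2$ (or the mirror inequality), i.e.\ that $\bar w$ is not a median. The assumption $\mathrm{Pr}[w_s^{f_1}>\bar w]>\mathrm{Pr}[w_s^{f_1}<\bar w]$ yields this only when $\mathrm{Pr}[w_s^{f_1}=\bar w]=0$. Switching between $\succeq$ and $\succ$ cannot rescue the atomic case, since IC-R is defined with the strict relation. Moreover, your own sufficiency computation shows that every college $c'$ reachable by a misreport has $\mathrm{Pr}[c'\succ_s^{w_s} c]\le\max\{\mathrm{Pr}[w_s^{f_1}>\bar w],\mathrm{Pr}[w_s^{f_1}<\bar w]\}$.

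For a concrete counterexample, let $w_s^{f_1}$ take the values $0.35, 0.5, 0.6$ with probabilities $0.2, 0.5, 0.3$. Then $\bar w=0.5$ and $\mathrm{Pr}[w_s^{f_1}\ge\bar w]=0.8\neq 0.7=\mathrm{Pr}[w_s^{f_1}\le\bar w]$. Yet no misreport ever gives this student a college that is better with probability above $0.3$, so HEUF is IC-R for her.

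Thus the ``only if'' direction needs an extra hypothesis such as $\mathrm{Pr}[w_s^{f_1}=\bar w]=0$ (e.g.\ continuous $\mu_s$). Alternatively, the condition should be replaced by ``$\bar w$ is a median of $w_s^{f_1}$''; your two halves, read this way, prove exactly that corrected characterization. The paper's proof has the same hole. From ``$\mathrm{Pr}[w_s^{f_1}>x]>1/2\Rightarrow\bar w\ge x$ for all $x$'' it only follows that $\mathrm{Pr}[w_s^{f_1}>\bar w]\le 1/2$, not that $\mathrm{Pr}[w_s^{f_1}\le\bar w]\ge\mathrm{Pr}[w_s^{f_1}\ge\bar w]$.

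Under the extra hypothesis, use your perturbed threshold $\eta\in(\bar w,1)$ with $\mathrm{Pr}[w_s^{f_1}>\eta]>1/2$, which exists by right-continuity, so that HEUF strictly prefers $c$. The variant with $\eta=\bar w$ depends on the tie-breaking rule, which the method allows to be random.
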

\begin{proof}
    ``$\Rightarrow$'': If the HEUF method is IC-R in a specific domain of instances, then for any student $s\in N$, and any two colleges $c_i$, $c_j\in M$, such that $\mathrm{Pr}[c_i\succ_s^{w_s} c_j] > 1/2$, $s$ must propose $c_i$ earlier than proposing $c_j$; otherwise, $s$ will have an incentive to misreport if she is finally accepted by $c_j$. By Lemma~\ref{lem:probformula}, there are three cases when $\mathrm{Pr}[c_i\succ_s^{w_s} c_j] > 1/2$.
    
    (I). If $u_s^{f_1}(c_i) > u_s^{f_1}(c_j)$ and $u_s^{f_2}(c_i) > u_s^{f_2}(c_j)$, then the expected utility of $c_i$ must be larger than that of $c_j$. In this case, $s$ will always propose $c_i$ earlier than $c_j$ in the HEUF method.

    (II). If $u_s^{f_1}(c_i) > u_s^{f_1}(c_j)$ and $u_s^{f_2}(c_i) \leq u_s^{f_2}(c_j)$, then $\mathrm{Pr}[c_i\succ_s^{w_s} c_j] = \mathrm{Pr}[w_s^{f_1} > \eta_s(c_i,c_j)] > 1/2$. On the other hand, if $s$ first proposes $c_i$ in the HEUF method, it must hold that
    \[ \mathbb{E}[w_s^{f_1}u_s^{f_1}(c_i) + (1-w_s^{f_1})u_s^{f_2}(c_i)] \geq \mathbb{E}[w_s^{f_1}u_s^{f_1}(c_j) + (1-w_s^{f_1})u_s^{f_2}(c_j)], \]
    and we can rearrange it as $\mathbb{E}[w_s^{f_1}]\geq \eta_s(c_i,c_j)$. This means for any value $x\in[0,1]$, $\mathrm{Pr}[w_s^{f_1} > x] > 1/2$ implies that $\mathbb{E}[w_s^{f_1}]\geq x$, i.e, the value of $\mathbb{E}[w_s^{f_1}]$ must be located right to the ``median''. Hence, $\mathrm{Pr}[w_s^{f_1}\leq \mathbb{E}[w_s^{f_1}]] \geq \mathrm{Pr}[w_s^{f_1}\geq \mathbb{E}[w_s^{f_1}]]$.

    (III). If $u_s^{f_1}(c_i) \leq u_s^{f_1}(c_j)$ and $u_s^{f_2}(c_i) > u_s^{f_2}(c_j)$, then similar to the case (II), we can derive that for any value $x\in[0,1]$, $\mathrm{Pr}[w_s^{f_1} < x] > 1/2$ implies that $\mathbb{E}[w_s^{f_1}]\leq x$, i.e, the value of $\mathbb{E}[w_s^{f_1}]$ must be located left to the ``median''. Hence, $\mathrm{Pr}[w_s^{f_1}\leq \mathbb{E}[w_s^{f_1}]] \leq \mathrm{Pr}[w_s^{f_1}\geq \mathbb{E}[w_s^{f_1}]]$.

    Taking all together, we can conclude that $\mathrm{Pr}[w_s^{f_1}\geq \mathbb{E}_{w_s\sim \mu_s}[w_s^{f_1}]]$ $= \mathrm{Pr}[w_s^{f_1}\leq \mathbb{E}_{w_s\sim \mu_s}[w_s^{f_1}]]$.

    ``$\Leftarrow$'': We prove the other side by showing the HEUF method is equivalent to the LOICV in the given cases. According to the proof of Theorem~\ref{thm:LOICV-ICR}, in each round, a student $s$ will propose to a college $c$ such that $\mathrm{Pr}[c\succeq_s^{w_s}c'] \geq 1/2$ for any other $c'$ that has not rejected $s$. By Lemma~\ref{lem:probformula}, there are three cases when $\mathrm{Pr}[c\succeq_s^{w_s}c'] \geq 1/2$.

    (I). If $u_s^{f_1}(c_i) \geq u_s^{f_1}(c_j)$ and $u_s^{f_2}(c_i) \geq u_s^{f_2}(c_j)$, then we also have the expected utility of $c_i$ must be no less than that of $c_j$.

    (II). If $u_s^{f_1}(c_i) \geq u_s^{f_1}(c_j)$ and $u_s^{f_2}(c_i) \leq u_s^{f_2}(c_j)$, then $\mathrm{Pr}[c_i\succeq_s^{w_s} c_j] = \mathrm{Pr}[w_s^{f_1} \geq \eta_s(c_i,c_j)] \geq 1/2$. Since $\mathrm{Pr}[w_s^{f_1}\leq \mathbb{E}[w_s^{f_1}]] = \mathrm{Pr}[w_s^{f_1}\geq \mathbb{E}[w_s^{f_1}]]$, we have $\mathbb{E}[w_s^{f_1}] \geq \eta_s(c_i,c_j)$, which further implies that the expected utility of $c_i$ must be no less than that of $c_j$.

    (III). If $u_s^{f_1}(c_i) \leq u_s^{f_1}(c_j)$ and $u_s^{f_2}(c_i) \geq u_s^{f_2}(c_j)$, then similar to the case (II), we can derive that $\mathbb{E}[w_s^{f_1}] \leq \eta_s(c_i,c_j)$, which can also imply that the expected utility of $c_i$ must be no less than that of $c_j$ in this case.

    Taking all together, such a college $c$ must also have the largest expected utility among all colleges that have not rejected $s$. Therefore, we can conclude that the two methods are equivalent to each other in the given domain, and both IC-R.
\end{proof}
\section{Additional Issues}\label{sec:additional}
\subsection{Restrictions of ``mean=median''}
Theorem~\ref{thm:heufICR} highlights a special property of those probability distributions on the first feature whose expectation equals their median when $|F| = 2$. Such distributions naturally arise in many situations. For instance, a student may firmly believe that, for her, the first feature is about twice as important as the second one, yet it is indistinguishable whether being slightly more or slightly less than twice is preferable. Moreover, we will show that these distributions possess additional theoretical properties, as they can be reduced to simple uniform distributions from certain perspectives.
\begin{definition}
    We say two students $s$ and $s'$ are \emph{equivalent in welfare and pairwise preferences} if they satisfy all the following:
    \begin{itemize}
        \item for any college $c\in M$, $\mathbb{E}[u_s(c)] = \mathbb{E}[u_{s'}(c)]$;
        \item for any feature $f\in F$, and any two colleges $c_i$, $c_j\in M$, $u_s^f(c_i)\geq u_s^f(c_j)$ if and only if $u_{s'}^f(c_i)\geq u_{s'}^f(c_j)$;
        \item for any two colleges $c_i$, $c_j\in M$, $\mathrm{Pr}[c_i\succeq_s^{w_s}c_j] \geq \mathrm{Pr}[c_j\succeq_s^{w_s}c_i]$ if and only if $\mathrm{Pr}[c_i\succeq_{s'}^{w_{s'}}c_j] \geq \mathrm{Pr}[c_j\succeq_{s'}^{w_{s'}}c_i]$.
    \end{itemize}
\end{definition}

\begin{theorem}\label{thm:equivalent}
    When the number of features is $|F|=2$, then for a student $s$ with a continuous probability $\mu_s$ that satisfies $\mathrm{Pr}[w_s^{f_1}\leq \mathbb{E}[w_s^{f_1}]] = 1/2$, the following $s'$ is equivalent to $s$ in welfare and pairwise preferences:
    \begin{itemize}
        \item $\mu_{s'}$ is a uniform distribution over all possible weight vectors;
        \item let $A = \int_0^1 \mathrm{Pr}[w_s^{f_1}\leq w] \mathrm{d}w$, and then for any college $c\in M$, $u_{s'}^{f_1} = 2(1-A)u_s^{f_1}$ and $u_{s'}^{f_2} = 2Au_s^{f_2}$.
    \end{itemize}
\end{theorem}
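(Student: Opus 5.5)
The plan is to verify the three bullet conditions of the definition one by one. Everything rests on a single identity. Since $w_s^{f_1}\in[0,1]$, we have $\mathbb{E}[w_s^{f_1}]=\int_0^1 \mathrm{Pr}[w_s^{f_1}> w]\,\mathrm{d}w = 1-\int_0^1\mathrm{Pr}[w_s^{f_1}\leq w]\,\mathrm{d}w$. Continuity of $\mu_s$ lets us ignore the difference between $>$ and $\geq$, so $\mathbb{E}[w_s^{f_1}]=1-A$ and $\mathbb{E}[w_s^{f_2}]=A$. The hypothesis $\mathrm{Pr}[w_s^{f_1}\le \mathbb{E}[w_s^{f_1}]]=1/2$ then says that $1-A$ is a median of $w_s^{f_1}$. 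Continuity also gives $0<A<1$ unless $\mu_s$ is degenerate, which a continuous distribution is not. Hence both scaling factors $2(1-A)$ and $2A$ are strictly positive.

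\textbf{Welfare and ordinal features.} For the first condition, I will use that the uniform $\mu_{s'}$ has $\mathbb{E}[w_{s'}^{f_1}]=\mathbb{E}[w_{s'}^{f_2}]=1/2$. Linearity then gives
\[
\mathbb{E}[u_{s'}(c)]=\tfrac12\cdot 2(1-A)u_s^{f_1}(c)+\tfrac12\cdot 2A\,u_s^{f_2}(c)=(1-A)u_s^{f_1}(c)+A\,u_s^{f_2}(c)=\mathbb{E}[u_s(c)].
\]
The second condition is immediate, because multiplying each feature's utility by a positive constant preserves its order. One caveat: the scaled utilities may exceed $1$. I will remark that the normalisation to $[0,1]$ is cosmetic here, since all probabilities depend only on ratios of utility differences by Lemma~\ref{lem:probformula}; if needed, one can rescale both features by a common constant.

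\textbf{Pairwise preferences.} For the third condition I split into the cases of Lemma~\ref{lem:probformula}. If one college weakly dominates the other in both features, the second condition makes the same domination hold for $s'$, so both comparisons agree. In the mixed case $u_s^{f_1}(c_i)>u_s^{f_1}(c_j)$, $u_s^{f_2}(c_i)\le u_s^{f_2}(c_j)$, continuity gives
\[
\mathrm{Pr}[c_i\succeq_s c_j]\ge \mathrm{Pr}[c_j\succeq_s c_i] \iff \mathrm{Pr}[w_s^{f_1}\ge \eta_s]\ge 1/2,
\]
where $\eta_s=\Delta^{f_2}/(\Delta^{f_1}+\Delta^{f_2})$. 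The differences for $s'$ are $2(1-A)\Delta^{f_1}$ and $2A\Delta^{f_2}$, so
\[
\eta_{s'}=\frac{A\Delta^{f_2}}{A\Delta^{f_2}+(1-A)\Delta^{f_1}},
\]
and under the uniform distribution the $s'$-comparison holds iff $\eta_{s'}\le 1/2$. A short algebraic step shows that $\eta_{s'}\le 1/2$ iff $A\Delta^{f_2}\le(1-A)\Delta^{f_1}$ iff $\eta_s\le 1-A$. It therefore remains to show that $\mathrm{Pr}[w_s^{f_1}\ge\eta_s]\ge 1/2$ iff $\eta_s\le 1-A$. The case with $f_1,f_2$ swapped is symmetric.

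\textbf{Main obstacle.} The last equivalence is where I expect the difficulty. The direction "$\Leftarrow$" is fine: if $\eta_s\le 1-A$, then $\mathrm{Pr}[w\ge \eta_s]\ge \mathrm{Pr}[w\ge 1-A]=1/2$. The direction "$\Rightarrow$" needs $1-A$ to be the \emph{largest} point with $\mathrm{Pr}[w\geq x]\ge 1/2$, i.e.\ the c.d.f.\ must not stay flat at $1/2$ just to the right of the mean. If it is flat on some $(1-A,b]$, then for $\eta_s$ in that interval the comparison for $s$ is an exact tie, counted as "$\ge$" in both directions, while $s'$ strictly disagrees.

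I would first try to close this using the same median/mean reasoning as in the proof of Theorem~\ref{thm:heufICR}. Failing that, I would resolve it by reading the hypothesis as "the mean is the unique median", i.e.\ the c.d.f.\ is strictly increasing through $1-A$. Under that reading the equivalence holds in both directions and the theorem follows.
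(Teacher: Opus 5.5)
Your argument follows the paper's proof step for step. You use the same identity $\mathbb{E}[u_s(c)]=(1-A)u_s^{f_1}(c)+Au_s^{f_2}(c)$, positive rescaling for the ordinal condition, and, in the mixed case, the same algebra showing that the $s'$-comparison holds iff $\eta_s\le 1-A$. The paper phrases that last step as $(1-A)x\ge A$ with $x=\Delta_s^{f_1}/\Delta_s^{f_2}$.

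The step you single out is exactly the one the paper does not justify. It simply asserts that $\mathrm{Pr}[c_i\succeq_s^{w_s}c_j]\ge 1/2$ is equivalent to $1/(1+x)\le 1-A$. Your diagnosis is correct, and the gap cannot be closed from the stated hypotheses, so your first plan will fail. The ``$\Leftarrow$'' part of the proof of Theorem~\ref{thm:heufICR} relies on the same unjustified inference, from $\mathrm{Pr}[w_s^{f_1}\ge\eta_s]\ge 1/2$ to $\eta_s\le\mathbb{E}[w_s^{f_1}]$.

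Here is a counterexample to the statement as written. Let $w_s^{f_1}$ have density $5/2$ on $[0,0.2]\cup[0.8,1]$ and $0$ elsewhere. This distribution is continuous, has mean $1/2$, and satisfies $\mathrm{Pr}[w_s^{f_1}\le 1/2]=1/2$. Hence $A=1/2$, and $s'$ has $u_{s'}=u_s$ with uniform weights. Take $u_s^{f_1}(c_1)=0.6$, $u_s^{f_1}(c_2)=0.2$, $u_s^{f_2}(c_1)=0.2$, $u_s^{f_2}(c_2)=0.8$. Then $\eta_s=0.6$ and
\begin{gather*}
\mathrm{Pr}[c_1\succeq_s^{w_s}c_2]=\mathrm{Pr}[w_s^{f_1}\ge 0.6]=\tfrac12=\mathrm{Pr}[c_2\succeq_s^{w_s}c_1],\\
\mathrm{Pr}[c_1\succeq_{s'}^{w_{s'}}c_2]=0.4<0.6=\mathrm{Pr}[c_2\succeq_{s'}^{w_{s'}}c_1].
\end{gather*}
So the third condition of the definition fails, and the theorem as stated is false.

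Your second plan is therefore not a matter of interpretation but a necessary extra hypothesis. The c.d.f.\ of $w_s^{f_1}$ must take the value $1/2$ only at $1-A$, which holds for instance if the density is positive around the mean. You need this on both sides of $1-A$: the right side for the case you wrote out, and the left side for the swapped case. Your ``strictly increasing through $1-A$'' covers both. With that hypothesis your proof is complete and is a correct repair of the paper's.

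One small correction to your side remark on normalisation. Rescaling both features of $s'$ by a common constant would break the exact equality $\mathbb{E}[u_{s'}(c)]=\mathbb{E}[u_s(c)]$ required by the first condition. The consistent reading is simply that $u_{s'}$ may exceed $1$, which the paper implicitly allows; it cannot be renormalised into $[0,1]$.
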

\begin{proof}
    We first show that $s$ and $s'$ satisfy the conditions in the definition one by one.

    Firstly, for any college $c\in M$, we have
    \begin{align*}
        \mathbb{E}&[u_s(c)]  = \int_{0}^1 \mathrm{Pr}[w_s^{f_1}=w](w\cdot u_s^{f_1}(c) +(1-w)\cdot u_s^{f_2}(c))\mathrm{d}w \\
        & = u_s^{f_1}(c)\int_0^1 w\mathrm{d}\mathrm{Pr}[w_s^{f_1}\leq w] + u_s^{f_2}(c) \int_0^1 (1-w)\mathrm{d}\mathrm{Pr}[w_s^{f_1}\leq w] \\
        & = (1-A) u_s^{f_1}(c) + Au_s^{f_2}(c).
    \end{align*}
    On the other hand, we also have
    \[ \mathbb{E}[u_{s'}(c)] = \frac{u_{s'}^{f_1}(c)+u_{s'}^{f_2}(c)}{2} = (1-A) u_s^{f_1}(c) + Au_s^{f_2}(c), \]
    which is the same as $\mathbb{E}[u_s(c)]$.

    Secondly, for any two colleges $c_i$, $c_j\in M$, we have $u_{s'}^{f_1}(c_i) - u_{s'}^{f_1}(c_j) = 2(1-A)[u_{s}^{f_1}(c_i) - u_{s}^{f_1}(c_j)]$ and $u_{s'}^{f_2}(c_i) - u_{s'}^{f_2}(c_j) = 2A[u_{s}^{f_2}(c_i) - u_{s}^{f_2}(c_j)]$. Since $A = \int_0^1 \mathrm{Pr}[w_s^{f_1}\leq w] \mathrm{d}w = 1-\mathbb{E}[w_s^f]$ and $\mathrm{Pr}[w_s^{f_1}\leq \mathbb{E}[w_s^{f_1}]] = 1/2$, then $0<A<1$. Hence, for both $f\in \{f_1,f_2\}$, the difference between $u_{s'}^{f}(c_i)$ and $u_{s'}^{f}(c_j)$ always has the same sign with that between $u_{s}^{f}(c_i)$ and $u_{s}^{f}(c_j)$, which implies that the second condition in the definition is satisfied.
    Lastly, for continuous probability distribution, $\mathrm{Pr}[c_i\succeq_s^{w_s}c_j] \geq \mathrm{Pr}[c_j\succeq_s^{w_s}c_i]$ is the same to the condition $\mathrm{Pr}[c_i\succeq_s^{w_s}c_j] \geq 1/2$. By the fact proved in the second point, we know that for any college $c_i$, $c_j\in M$, $\mathrm{Pr}[c_i\succeq_s^{w_s}c_j]$ and $\mathrm{Pr}[c_i\succeq_{s'}^{w_{s'}}c_j]$ must have the same condition as in the formula given in Lemma~\ref{lem:probformula}. Then, the third condition must be satisfied when $\mathrm{Pr}[c_i\succeq_s^{w_s}c_j] = 1$. For the remaining two cases, by symmetry, we assume that $u_s^{f_1}(c_i) > u_{s}^{f_1}(c_j)$ and $u_s^{f_2}(c_i) < u_{s}^{f_2}(c_j)$ w.l.o.g., and we have $\mathrm{Pr}[c_i\succeq_s^{w_s}c_j] = \mathrm{Pr}[w_s\geq 1/(1+x)]$ and $\mathrm{Pr}[c_i\succeq_{s'}^{w_{s'}}c_j] = \mathrm{Pr}\left[w_{s'}\geq 1\left/\left(1+\frac{1-A}{A}x\right)\right]\right.$, where $x=\frac{\Delta_s^{f_1}(c_i,c_j)}{\Delta_s^{f_2}(c_i,c_j)}$. Then, $\mathrm{Pr}[c_i\succeq_s^{w_s}c_j] \geq 1/2$ is equivalent to $1/(1+x) \leq 1-A$, and further $(1-A)x\geq A$. By adding $A$ on both sides, we have $A+(1-A)x\geq 2A$, which is equivalent to $1\left/\left(1+\frac{1-A}{A}x\right)\right.\leq \frac{1}{2}$. Therefore, $\mathrm{Pr}[c_i\succeq_{s'}^{w_{s'}}c_j] = \mathrm{Pr}\left[w_{s'}\geq 1\left/\left(1+\frac{1-A}{A}x\right)\right]\right. \geq \frac{1}{2}$ holds equivalently.
\end{proof}

With this equivalence, the properties of a matching, such as social welfare, potential blocking pairs, and incentives for misreporting, remain unchanged. Moreover, the outcomes of methods based on these comparisons, such as HEUF and LOICV, are also preserved.

\subsection{Computational Issues for More Features}\label{sec:moretc}
In Proposition~\ref{prop:calProS2}, we only give a polynomial algorithm to compute $\mathsf{ProS}(\pi)$ for a certain matching $\pi$ when $|F| = 2$. Here, we supplement the computational issues when $|F|\geq 3$.

\begin{lemma}\label{lem:probformula3d}
    For any student $s\in N$ and any two colleges $c_1, c_2\in M$, denote $\delta_s^f(c_i,c_j) = u_s^f(c_i) - u_s^f(c_j)$ for any $f\in F$, a vector of dimension $|F|-1$ as $\vec{d}_{s}(c_i,c_j) = (\delta_s^{f_{|F|}}(c_i,c_j) - \delta_s^{f_{k}}(c_i,c_j))_{1\leq k<|F|}$, and a weight vector of the same dimension $\vec{w}_s = (w_s^f)_{1\leq f<|F|}$. We have $\mathrm{Pr}[c_i\succ_{s}^{w_s} c_j] = \mathrm{Pr}[\vec{d}_{s}(c_i,c_j)^{\mathsf{T}}\vec{w}_s < \delta_s^{f_{|F|}}(c_i,c_j)]$. 
\end{lemma}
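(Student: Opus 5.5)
The plan is to mirror the proof of Lemma~\ref{lem:probformula}: rewrite the event $c_i\succ_s^{w_s} c_j$ as an inequality in the weights, then use the simplex constraint $\sum_{f\in F} w_s^f = 1$ to eliminate the last coordinate. The key observation is that the two events in the statement coincide for every realization of $w_s$. Equality of the probabilities then follows immediately, with no assumption on $\mu_s$ beyond its support lying in the simplex. In particular, no case analysis and no continuity assumption should be needed.

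First, by the definition of the aggregated preference, $c_i\succ_s^{w_s} c_j$ holds if and only if
\[ \sum_{f\in F} w_s^f u_s^f(c_i) > \sum_{f\in F} w_s^f u_s^f(c_j), \]
that is, if and only if $\sum_{f\in F} w_s^f\,\delta_s^f(c_i,c_j) > 0$.

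Second, I enumerate the features as $f_1,\dots,f_{|F|}$ and substitute $w_s^{f_{|F|}} = 1-\sum_{k<|F|} w_s^{f_k}$. The inequality becomes
\[ \sum_{k<|F|} w_s^{f_k}\delta_s^{f_k}(c_i,c_j) + \Bigl(1-\sum_{k<|F|} w_s^{f_k}\Bigr)\delta_s^{f_{|F|}}(c_i,c_j) > 0 . \]
Collecting the terms multiplying each $w_s^{f_k}$ gives the equivalent form
\[ \delta_s^{f_{|F|}}(c_i,c_j) > \sum_{k<|F|} w_s^{f_k}\bigl(\delta_s^{f_{|F|}}(c_i,c_j)-\delta_s^{f_k}(c_i,c_j)\bigr), \]
whose right-hand side is exactly $\vec{d}_s(c_i,c_j)^{\mathsf{T}}\vec{w}_s$. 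Since this chain consists only of equivalences holding pointwise for every admissible $w_s$, the two events are identical, and taking probabilities under $\mu_s$ yields the claimed formula.

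The only step needing care, and the one I expect to be the (minor) obstacle, is the sign bookkeeping when moving terms across the inequality. I need to confirm that each coefficient comes out as $\delta^{f_{|F|}}-\delta^{f_k}$ rather than its negative, and that the strict inequality is preserved in the right direction. As a sanity check, I will specialize to $|F|=2$ and recover the third case of Lemma~\ref{lem:probformula}. I will also remark that the non-strict version follows in the same way, with $\le$ in place of $<$, which is what one would use for $\mathrm{Pr}[c_i\succeq_s^{w_s}c_j]$.
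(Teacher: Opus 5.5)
Your proposal is correct and follows the same route as the paper: substitute $w_s^{f_{|F|}} = 1-\sum_{k<|F|} w_s^{f_k}$ into the aggregated-utility comparison and collect coefficients, so the two events coincide pointwise for every admissible $w_s$. Your bookkeeping is actually cleaner than the paper's, since you keep the strict inequality throughout. The paper's displayed chain passes through $\geq$/$\leq$ (the $\succeq$ event) before concluding with $<$.
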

\begin{proof}
    By definition, we can calculate that
    \begin{align*}
        & \mathrm{Pr}[c_i\succ_s^{w_s} c_j] \\
        & = \mathrm{Pr}\left[\sum_{k=1}^{|F|-1}w_s^{f_k}u_s^{f_k}(c_i) + \left(1 - \sum_{k=1}^{|F|-1}w_s^{f_k}\right)u_s^{f_{|F|}}(c_i)\geq \right. \\
        & \qquad \left. \sum_{k=1}^{|F|-1}w_s^{f_k}u_s^{f_k}(c_j) + \left(1 - \sum_{k=1}^{|F|-1}w_s^{f_k}\right)u_s^{f_{|F|}}(c_j)\right] \\
        & = \mathrm{Pr}\left[\sum_{k=1}^{|F|-1}w_s^{f_k}\left(\delta_s^{f_{|F|}}(c_i,c_j) - \delta_s^{f_k}(c_i,c_j)\right) \leq \delta_s^{f_{|F|}}(c_i,c_j)\right] \\
        & = \mathrm{Pr}[\vec{d}_{s}(c_i,c_j)^{\mathsf{T}}\vec{w}_s < \delta_s^{f_{|F|}}(c_i,c_j)].
    \end{align*}
    This concludes the proof.
\end{proof}

Notice that the formula in Lemma~\ref{lem:probformula3d} actually computes the cumulative probability over an $(|F|-1)$-dimensional half-space. Even with access to an oracle of c.d.f, the time complexity of computing such a probability depends on the form of the probability distribution. In general, calculating the volume of such an integral is \#P-hard with respect to the number of dimensions~\citep{khachiyan1993complexity}, which renders the computational intractability when $|F|$ is large. Consequently, calculating $\mathsf{ProS}$, the methods LOCV, LOICV, and HERF are intractable in the most general cases.

In practice, when eliciting probability distributions, it is often more realistic to consider discrete distributions with finite support, or to query the parameters of specific classes of distributions such as the uniform or normal distribution. In the former case, the computation is polynomial in the input size of the probability distribution, since we can enumerate all points located in the intersections of the target spaces. In the latter case, with specific distribution families and a relatively small number of features, we can still maintain a reasonable time complexity~\citep{preparata1979finding,sun1988general} or employ standard approximation algorithms~\citep{ridgway2016computation}. We supplement the NP-hardness of finding the highest $\mathsf{ProS}$ even with additional restrictions of probability distributions and utility functions as follows.

\begin{theorem}\label{thm:nphard3}
    It is NP-hard to find a matching with the highest $\mathsf{ProS}$ in a school choice with feature-based uncertainty even when all $\mu_s\in \mu$ is uniform distribution and for each $s\in N$, $f\in F$, $c_i,c_j\in M$ such that $u_s^f(c_i)\neq u_s^f(c_j)$, it has $|u_s^f(c_i) - u_s^f(c_j)| \geq \frac{1}{m}$.
\end{theorem}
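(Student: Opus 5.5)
We will reduce from the problem of finding a matching with the highest stability probability in the compact indifference model, which \cite{aziz2020stable} show to be NP-hard even when only one side has uncertain preferences. The related-work discussion already promises that every compact-indifference instance embeds in our model, so the plan is to carry out that embedding while respecting the two restrictions in the statement. In a compact-indifference instance, each student $s$ has a weak order over colleges, split into tiers $T_1,\dots,T_r$. Every linear extension of this weak order is equally likely. This means the relative order inside each tier is a uniformly random permutation, independent across tiers and across students. Colleges keep their strict preferences and capacities unchanged.

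\textbf{Construction.} For each student $s$ we take $|F| = m$ features, one per college slot. We give $s$ utilities $u_s^f(c)$ so that the tier structure is respected by every weight vector: every college in $T_a$ beats every college in $T_b$ for $a<b$ on \emph{all} features. Inside a tier, each college is the top of that tier on exactly one designated feature. Concretely, fix a base score $b(c)=1-\frac{k_c}{m}$ on a grid of step $1/m$, where $k_c$ is the tier index offset. Within a tier of size $t$ we let the $t$ colleges cyclically permute a block of $t$ consecutive grid values across the features. This requires the tiers to be separated on the grid with gaps of size at least $1/m$; a careful indexing with roughly $m$ grid points handles this. All distinct utility values then automatically differ by at least $1/m$. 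Each $\mu_s$ is uniform on the simplex. By symmetry of the uniform distribution under permuting coordinates, combined with the cyclic (Latin-square-like) design, the induced distribution over orders inside each tier is invariant under permutations of that tier. The intended output is that each linear extension has equal probability; if one cyclic block only gives cyclic symmetry, we will instead index features by all permutations of the tier, or more cheaply use features $f$ where exactly one college of the tier gets the high value.

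\textbf{Correctness.} Ties occur with probability zero because $\mu_s$ is continuous. Blocking pairs depend only on students' strict preferences, which are identically distributed in the two instances. Hence $\mathsf{ProS}(\pi)$ coincides for every matching $\pi$, and an optimal matching in our instance is optimal in the original one. The reduction is polynomial, since it uses $m$ features and $O(m)$ grid values per student.

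\textbf{Main obstacle.} The hardest step is making the within-tier order exactly uniform over \emph{all} permutations, not just over the top element, using a polynomial number of features and $1/m$-separated utilities. My first attempt will exploit exchangeability of uniform simplex weights. I will give tier colleges utility vectors that are coordinate permutations of one another, forming a permutation-closed family, e.g. college $c$ gets value high on its own feature and a common lower value elsewhere. Swapping two colleges then corresponds to swapping two weight coordinates, which preserves the distribution, so all orders become equiprobable. If exact uniformity only holds up to ties or needs adjustment, we may instead reduce from the SMTI-based construction in the proof of Theorem~\ref{thm:nphard}. That construction uses only ties of length two, where a single swap of two features suffices and symmetry is immediate.
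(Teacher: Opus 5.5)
Your final construction (each tied college gets the higher grid value on its own feature and a common lower value elsewhere, tiers stacked on a $1/m$ grid, uniform weights, and the coordinate-swap exchangeability argument) is essentially the paper's reduction from HSMCI and is correct. You should discard the cyclic Latin-square variant, which gives only cyclic symmetry for tiers of size $\geq 3$, and the fallback to the construction of Theorem~\ref{thm:nphard}, whose $\nu_j$ gaps are far below $1/m$. One point in your favour is that you give each college its own feature ($|F|=m$). The paper instead sets $|F|=l$ and reuses $f_1,\dots,f_{t+1}$ for every tie, which makes the within-tie orders of a student with two or more ties perfectly correlated. Your swap argument therefore gives exactly uniform linear extensions in general, whereas the paper's construction does so only when each student has at most one tie.
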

\begin{proof}
    We prove the NP-hardness by reducing the problem of \textit{highest probability of stable marriage with compact indifference model} (HSMCI)~\cite{aziz2020stable}. An instance of HSMCI is a problem of one-to-one two sided matching where each agent could report a single weak preference list that allows for ties. Each complete strict linear order extension of a weak order is assumed to be equally likely. The target is to find a matching that has the highest probability of being a stable marriage. \citeauthor{aziz2020stable}~\cite{aziz2020stable} have shown that the HSMCI is NP-hard even if only one side of the market has uncertain agents. Then consider an instance of the HSMCI problem, with only man-side has uncertain agents. We can reduce it to our problem as follows.
    \begin{itemize}
        \item Correspond each man as a student, and each woman as a college;
        \item For each college, it has capacity of one, and the same preference with the corresponding woman;
        \item Let $l$ be the maximum length of all ties in the instance of HSMCI, and let the number of features be $|F| = l$;
        \item For each student $s$, her weight vector is uniformly distributed over all possibilities;
        \item For each student $s$, her utility functions is defined by the following process: (i) traverse the weak preference of the corresponding man from the least preferred one to the most preferred one, and let the sequence of the corresponding colleges is $c_{(1)}$, $c_{(2)}$, $\dots$, $c_{(m)}$, where $m$ is the number of women; (ii) for each $c_{(i)}$, if it is not involved in any tie, then let $u_s^f(c_{(i)}) = i/m$ for any $f\in F$; (iii) if from $c_{(i)}$ to $c_{(i+t)}$, $t\leq l$, it forms a tie and $c_{(i)}$ is the first college in the tie, then let $u_s^{f_{k+1}}(c_{(i+k)}) = (i+1)/m$ for all $0\leq k\leq t$, and any other utilities equal to $i/m$.
    \end{itemize}
    The overall reduction could be done in polynomial time, and all possible aggregated preferences are corresponds to an extension of the weak order in HSMCI with the same probability. Hence, the original instance of HSMCI and the reduced instance indeed share the same solution. Therefore, we can conclude that our problem is also in NP-hard even with the given constraints.
\end{proof}

Additionally, we have a failure of Lemma~\ref{lem:transitivity} for $|F|\geq 3$.

\begin{example}\label{ex:notrans}
    We illustrate a case to indicate that the transitivity characterized in Lemma~\ref{lem:transitivity} is not generally hold for instances with $|F|>2$. Consider a student $s\in N$ with:
    \begin{itemize}
        \item $u_s^{f_1}(c_1) = 0.65$, $u_s^{f_1}(c_2) = 0.91$, $u_s^{f_1}(c_3) = 0.10$;
        \item $u_s^{f_2}(c_1) = 0.90$, $u_s^{f_2}(c_2) = 0.05$, $u_s^{f_2}(c_3) = 1.00$;
        \item $u_s^{f_3}(c_1) = 0.21$, $u_s^{f_3}(c_2) = 0.31$, $u_s^{f_3}(c_3) = 0.70$.
    \end{itemize}
    Then according to the formula given in Lemma~\ref{lem:probformula3d}, $\mathrm{Pr}[c_1\succ_{s}^{w_s} c_2]$ equals to the probability of $(w_s^{f_1}, w_s^{f_2})$ being located right to the blue line in Figure~\ref{fig:nontrans}, $\mathrm{Pr}[c_2\succ_{s}^{w_s} c_3]$ equals to the probability of $(w_s^{f_1}, w_s^{f_2})$ being located above the green line, and $\mathrm{Pr}[c_1\succ_{s}^{w_s} c_3]$ equals to the probability of $(w_s^{f_1}, w_s^{f_2})$ being located right to the yellow line (the azure area). The probabilities of being located in each area are also labeled in the Figure. Finally, we have $\mathrm{Pr}[c_1\succ_{s}^{w_s} c_2] = \frac{1}{2}+\epsilon$ and $\mathrm{Pr}[c_2\succ_{s}^{w_s} c_3] = \frac{1}{2}+\epsilon$, but $\mathrm{Pr}[c_1\succ_{s}^{w_s} c_3] = 4\epsilon < \frac{1}{2}$.
\end{example}

\begin{figure}[t]
    \centering
    \includegraphics[width=0.3\linewidth]{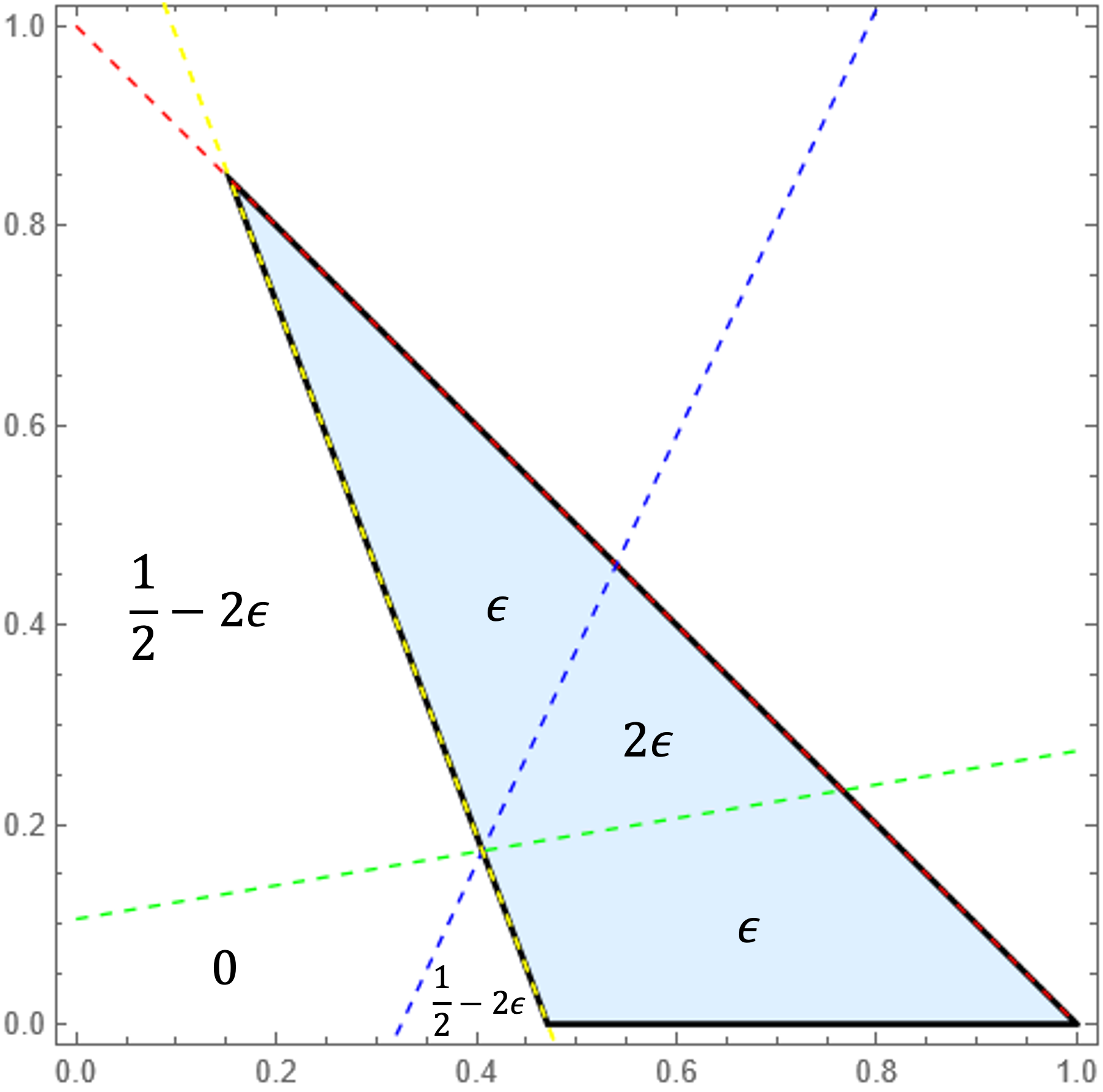}
    \caption{An example of the probability distribution of $\vec{w}_s$.}
    \label{fig:nontrans}
\end{figure}

Unfortunately, it also suggests an impossibility of IC-R algorithms when $|F|\geq 3$ with a similar proof as in Proposition~\ref{prop:noICA}.

\begin{corollary}\label{coro:impossibleICR}
    No matching algorithm can satisfy the property of IC-R unless it always outputs the same result when $|F|\geq 3$.
\end{corollary}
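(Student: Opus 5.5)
We mirror the argument of Proposition~\ref{prop:noICA}, with the cyclic student of Example~\ref{ex:notrans} in place of the symmetric one. Suppose an algorithm is IC-R with $|F|\geq 3$. We first build a student $s$ whose pairwise majority relation over the colleges is cyclic. For three colleges we take the utilities of Example~\ref{ex:notrans}, setting $u_s^f\equiv a$ on any features beyond $f_3$ so that only $f_1,f_2,f_3$ matter. By a cyclic relabelling of the three colleges and features, or by the same kind of construction, we aim for $\mathrm{Pr}[c_1\succ_s c_2]>1/2$, $\mathrm{Pr}[c_2\succ_s c_3]>1/2$ and $\mathrm{Pr}[c_3\succ_s c_1]>1/2$. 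Since $\mathrm{Pr}[c_1\succ_s c_3]=4\epsilon$ in the example, the third inequality holds automatically once the distribution is continuous, because then $\mathrm{Pr}[c_3\succ_s c_1]=1-4\epsilon$. For $m>3$ colleges we give the remaining colleges utilities below those of $c_1,c_2,c_3$ in every feature, and we handle $\mathsf{null}$ the same way, as certainly worse. With this choice, every possible assignment of $s$ has some alternative college $c$ that beats it with probability greater than $1/2$.

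Next comes the incentive argument. Fix any other reports and any type $(u_s',\mu_s')$, and let $c'$ be the college $s$ receives under that type. We also consider degenerate certain types: $s$ can report identical utilities across all features, which gives a deterministic strict order with any chosen college on top. Suppose that, for some profile of the other students, the outcome for $s$ depends on her report, so that two reports give different colleges. Then, when $s$'s true type is the cyclic one, her truthful assignment $c$ has a college $\hat c$ in the cycle, or among $c_1,c_2,c_3$ if $c$ lies outside the cycle, with $\mathrm{Pr}[\hat c\succ_s c]>1/2$.

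The step I expect to be the obstacle is showing that $s$ can actually obtain $\hat c$ by misreporting. This requires the algorithm's range for $s$, across her reports, to contain the beating college, not merely to be non-constant. I would argue as in Proposition~\ref{prop:noICA}. Let $S$ be the set of colleges $s$ can obtain across all her reports, with the others fixed. IC-R requires that the truthful outcome, for every possible true type, is not beaten with probability above $1/2$ by any element of $S$.

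The cyclic construction can be placed on any three elements of $S$ by permuting which college plays which role. I would also construct two-college cyclic variants, in which two colleges $x,y\in S$ each beat the other with probability $1/2+\epsilon$ in the $\succ$ sense; that is impossible for strict comparisons under continuity. So instead I would use three elements when $|S|\geq 3$, and when $|S|=2$ use a type with $\mathrm{Pr}[x\succ y]>1/2$ and another with $\mathrm{Pr}[y\succ x]>1/2$. Then whichever element is truthfully assigned to the cyclic type is beaten by another element of $S$, which contradicts IC-R. Hence $|S|=1$ for every student and every profile of the others. Finally, since each student's outcome is independent of her own report, changing reports one student at a time shows that the output matching never changes, i.e., the algorithm always outputs the same result.
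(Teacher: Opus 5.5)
There is a genuine gap, and it sits exactly at the obstacle you flagged. Your $|S|\geq 3$ argument is sound. It is also cleaner than the paper's proof, which asserts the impossible property $\mathrm{Pr}[c_i\succ_s^{w_s}c_j]>1/2$ for all $c_i\neq c_j$, whereas Example~\ref{ex:notrans} only yields a $3$-cycle. You need one small addition: the IC-C remark that $\mathsf{null}$ cannot lie in $S$ together with a college, since your cycle cannot use $\mathsf{null}$. With it, your argument shows that, with the other reports fixed, an IC-R student can reach at most two colleges.

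The case $|S|=2$ fails. A type with $\mathrm{Pr}[x\succ_s^{w_s} y]>1/2$ and another with $\mathrm{Pr}[y\succ_s^{w_s} x]>1/2$ merely force the algorithm to give $x$ to the first and $y$ to the second. That is consistent with IC-R, and no single type can be ``cyclic'' on two colleges, since the two events are disjoint.

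This cannot be repaired, because the statement, as literally worded, fails here. Let the algorithm match $s_1$ to $c_1$ if her reported type gives $\mathrm{Pr}[c_1\succeq_{s_1}^{w_{s_1}}c_2]\geq 1/2$ and to $c_2$ otherwise, and leave every other student unmatched.
\begin{itemize}
\item If $s_1$ truthfully gets $c_1$, a deviation can only move her to $c_2$, and $\mathrm{Pr}[c_2\succ_{s_1}^{w_{s_1}}c_1]=1-\mathrm{Pr}[c_1\succeq_{s_1}^{w_{s_1}}c_2]\leq 1/2$.
\item If she truthfully gets $c_2$, a deviation can only give her $c_1$, with $\mathrm{Pr}[c_1\succ_{s_1}^{w_{s_1}}c_2]\leq \mathrm{Pr}[c_1\succeq_{s_1}^{w_{s_1}}c_2]<1/2$.
\item The other students can never change their own outcome.
\end{itemize}
This algorithm is IC-R for every $|F|$, including $|F|\geq 3$, yet its output depends on the reports. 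So what your argument actually proves is the weaker statement that every student's attainable set (others fixed) has at most two elements.

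Your final step has a second, independent gap. From ``each student's assignment is independent of her own report'' you cannot conclude, by changing reports one student at a time, that the matching is constant. A student's report may change \emph{other} students' assignments. For instance, suppose $s_2$ always gets $c_2$, and $s_1$ gets $c_1$ exactly when $s_2$'s report satisfies some fixed property ($\mathsf{null}$ otherwise). This algorithm is even IC-A, but its output is not constant. The paper's proof, and that of Proposition~\ref{prop:noICA}, make the same leap. A correct conclusion needs either a non-bossiness assumption or must be stated per student.
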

\begin{proof}
    Suppose a matching algorithm is IC-R when $|F|\geq 3$. Then, consider a student $s$ who has utility function and probability distribution all the same as the one in Example~\ref{ex:notrans}. Now, we have $\mathrm{Pr}[c_i\succ_s^{w_s}c_j] > 1/2$ for any $c_i\neq c_j$. Suppose the algorithm in this case matches $s$ to a college $c$ ($c$ could be $\mathsf{null}$ if she is not assigned to any college). Then, for any other possible utility functions $u_s'$ and probability distribution $\mu_s'$, the algorithm must still assign $s$ to the same $s$; otherwise, $s$ can misreport her information between two cases to obtain another college that is more likely to be better. Hence, the algorithm could only output the same result for any input.
\end{proof}
\section{Conclusion and Future Work}\label{sec:conclusion}
In this paper, we study a feature-based uncertainty model for the school choice problem and analyze a class of algorithms that achieve different levels of incentive compatibility (IC) and ProS guarantees. Several important directions remain for future research.

Our model assumes that a probability distribution over weight vectors is available. In practice, obtaining an arbitrary full-support distribution may be difficult, but our results do not rely on exact specifications—reasonable approximations of students’ valuation uncertainty are sufficient. Such approximations are often obtainable in real systems through application data, surveys, or preference elicitation tools that evaluate programs across multiple attributes. Empirical methods such as discrete choice models and conjoint analysis can also be used to estimate preference weights from data~\citep{soutar2002students}. Thus, while exact distributions may be unrealistic, structured or discretized approximations are practical and compatible with our framework. Performance may be further improved if signaling mechanisms help shape uncertainty into more structured forms.

Another realistic factor is that uncertainty may be endogenous: colleges or policymakers can influence it by signaling or information design. Although our worst-case analysis ensures robustness under such effects, it remains an open question whether improved outcomes can be achieved through explicit information design.

Finally, understanding how different uncertainty structures systematically affect matching outcomes and relate to axiomatic properties is a promising direction. Our examples suggest several patterns—for instance, more symmetric feature-weight distributions may induce acyclic comparisons, while similarity structures across features or colleges may favor different algorithms (e.g., high similarity potentially favoring LOICV). Providing a quantitative characterization of these effects is an important topic for future work.

\vspace{3mm}
\noindent \textbf{Acknowledgments.} This work is supported by JST ERATO Grant Number JPMJER2301.

\bibliographystyle{plainnat}
\bibliography{mybibfile}

@article{aziz2020stable,
  title={Stable matching with uncertain linear preferences},
  author={Aziz, Haris and Bir{\'o}, P{\'e}ter and Gaspers, Serge and de Haan, Ronald and Mattei, Nicholas and Rastegari, Baharak},
  journal={Algorithmica},
  volume={82},
  number={5},
  pages={1410--1433},
  year={2020},
  publisher={Springer}
}

@article{manlove2002hard,
  title={Hard variants of stable marriage},
  author={Manlove, David F and Irving, Robert W and Iwama, Kazuo and Miyazaki, Shuichi and Morita, Yasufumi},
  journal={Theoretical Computer Science},
  volume={276},
  number={1-2},
  pages={261--279},
  year={2002},
  publisher={Elsevier}
}

@article{halldorsson2007improved,
  title={Improved approximation results for the stable marriage problem},
  author={Halld{\'o}rsson, Magn{\'u}s M and Iwama, Kazuo and Miyazaki, Shuichi and Yanagisawa, Hiroki},
  journal={ACM Transactions on Algorithms (TALG)},
  volume={3},
  number={3},
  pages={30--es},
  year={2007},
  publisher={ACM New York, NY, USA}
}

@inproceedings{chu2024stable,
  title={Stable matching with approval preferences under partial information},
  author={Chu, Yaqin and Luo, Junjie and Zheng, Tianyang},
  booktitle={International Conference on Algorithmic Aspects in Information and Management},
  pages={64--75},
  year={2024},
  organization={Springer}
}

@article{soutar2002students,
  title={Students’ preferences for university: A conjoint analysis},
  author={Soutar, Geoffrey N and Turner, Julia P},
  journal={International journal of educational management},
  volume={16},
  number={1},
  pages={40--45},
  year={2002},
  publisher={MCB UP Ltd}
}

@article{haeringer2009constrained,
  title={Constrained school choice},
  author={Haeringer, Guillaume and Klijn, Flip},
  journal={Journal of Economic theory},
  volume={144},
  number={5},
  pages={1921--1947},
  year={2009},
  publisher={Elsevier}
}

@article{kimura2025multi,
  title={Multi-stage generalized deferred acceptance mechanism: Strategyproof mechanism for handling general hereditary constraints},
  author={Kimura, Kei and Liu, Kweiguu and Sun, Zhaohong and Yahiro, Kentaro and Yokoo, Makoto},
  journal={Autonomous Agents and Multi-Agent Systems},
  volume={39},
  number={2},
  pages={1--25},
  year={2025},
  publisher={Springer}
}

@inproceedings{alimudin2021study,
  title={A Study of the Random Order Mechanism for Uncertain Preferences in the Stable Marriage Problem},
  author={Alimudin, Akhmad and Ishida, Yoshiteru},
  booktitle={2021 8th International Conference on Advanced Informatics: Concepts, Theory and Applications (ICAICTA)},
  pages={1--6},
  year={2021},
  organization={IEEE}
}

@article{aziz2022stable,
  title={Stable matching with uncertain pairwise preferences},
  author={Aziz, Haris and Bir{\'o}, P{\'e}ter and Fleiner, Tam{\'a}s and Gaspers, Serge and De Haan, Ronald and Mattei, Nicholas and Rastegari, Baharak},
  journal={Theoretical Computer Science},
  volume={909},
  pages={1--11},
  year={2022},
  publisher={Elsevier}
}

@article{dvir2020modelling,
  title={Modelling the expected probability of correct assignment under uncertainty},
  author={Dvir, Tom and Peres, Renana and Rudnick, Ze{\'e}v},
  journal={Scientific Reports},
  volume={10},
  number={1},
  pages={15080},
  year={2020},
  publisher={Nature Publishing Group UK London}
}

@article{aziz2024approval,
  title={Approval-based committee voting under uncertainty},
  author={Aziz, Haris and Kagita, Venkateswara Rao and Rastegari, Baharak and Suzuki, Mashbat},
  journal={arXiv preprint arXiv:2407.19391},
  year={2024}
}

@inproceedings{aziz2024envy,
  title={Envy-free house allocation under uncertain preferences},
  author={Aziz, Haris and Iliffe, Isaiah and Li, Bo and Ritossa, Angus and Sun, Ankang and Suzuki, Mashbat},
  booktitle={Proceedings of the AAAI Conference on Artificial Intelligence},
  volume={38},
  pages={9477--9484},
  year={2024}
}

@article{gale2013college,
  title={College admissions and the stability of marriage},
  author={Gale, David and Shapley, Lloyd S},
  journal={The American Mathematical Monthly},
  volume={120},
  number={5},
  pages={386--391},
  year={2013},
  publisher={Taylor \& Francis}
}

@article{owen2020student,
  title={Student preferences for college and career information},
  author={Owen, Laura and Poynton, Timothy A and Moore, Raeal},
  journal={Journal of college access},
  volume={5},
  number={1},
  pages={7},
  year={2020}
}

@article{bell2009college,
  title={College knowledge of 9th and 11th grade students: Variation by school and state context},
  author={Bell, Angela D and Rowan-Kenyon, Heather T and Perna, Laura W},
  journal={The Journal of Higher Education},
  volume={80},
  number={6},
  pages={663--685},
  year={2009},
  publisher={Taylor \& Francis}
}

@article{hoxby2015high,
  title={What high-achieving low-income students know about college},
  author={Hoxby, Caroline M and Turner, Sarah},
  journal={American Economic Review},
  volume={105},
  number={5},
  pages={514--517},
  year={2015},
  publisher={American Economic Association 2014 Broadway, Suite 305, Nashville, TN 37203}
}

@misc{hastings2007preferences,
  title={Preferences, information, and parental choice behavior in public school choice},
  author={Hastings, Justine S and Van Weelden, Richard and Weinstein, Jeffrey M},
  year={2007},
  publisher={National Bureau of Economic Research Cambridge, Mass., USA}
}

@article{roth2008deferred,
  title={Deferred acceptance algorithms: History, theory, practice, and open questions},
  author={Roth, Alvin E},
  journal={international Journal of game Theory},
  volume={36},
  number={3-4},
  pages={537--569},
  year={2008},
  publisher={Springer}
}

@article{abdulkadirouglu2003school,
  title={School choice: A mechanism design approach},
  author={Abdulkadiro{\u{g}}lu, Atila and S{\"o}nmez, Tayfun},
  journal={American economic review},
  volume={93},
  number={3},
  pages={729--747},
  year={2003},
  publisher={American Economic Association}
}

@inproceedings{aziz2019matching,
  title={From Matching with Diversity Constraints to Matching with Regional Quotas},
  author={Aziz, Haris and Gaspers, Serge and Sun, Zhaohong and Walsh, Toby},
  booktitle={Proceedings of the 18th International Conference on Autonomous Agents and MultiAgent Systems},
  pages={377--385},
  year={2019}
}

@inproceedings{takeshima2025new,
  title={A New Relaxation of Fairness in Two-Sided Matching Respecting Acquaintance Relationships},
  author={Takeshima, Ryota and Kimura, Kei and Kuroki, Ayumu and Wakasugi, Temma and Yokoo, Makoto},
  booktitle={28th European Conference on Artificial Intelligence, ECAI 2025, including 14th Conference on Prestigious Applications of Intelligent Systems, PAIS 2025},
  pages={3727--3734},
  year={2025},
  organization={IOS Press BV}
}

@inproceedings{cho2022two,
  title={Two-Sided Matching over Social Networks},
  author={Cho, Sung-Ho and Todo, Taiki and Yokoo, Makoto},
  booktitle={Proceedings of the Thirty-First International Joint Conference on Artificial Intelligence},
  year={2022},
  organization={International Joint Conferences on Artificial Intelligence Organization}
}

@incollection{khachiyan1993complexity,
  title={Complexity of polytope volume computation},
  author={Khachiyan, Leonid},
  booktitle={New trends in discrete and computational geometry},
  pages={91--101},
  year={1993},
  publisher={Springer}
}

@article{preparata1979finding,
  title={Finding the intersection of n half-spaces in time O (n log n)},
  author={Preparata, Franco P. and Muller, David E.},
  journal={Theoretical Computer Science},
  volume={8},
  number={1},
  pages={45--55},
  year={1979},
  publisher={Elsevier}
}

@article{ridgway2016computation,
  title={Computation of Gaussian orthant probabilities in high dimension},
  author={Ridgway, James},
  journal={Statistics and computing},
  volume={26},
  number={4},
  pages={899--916},
  year={2016},
  publisher={Springer}
}

@article{sun1988general,
  title={A general reduction method for n--variate normal orthant probability},
  author={Sun, Hong--Jie},
  journal={Communications in Statistics-Theory and Methods},
  volume={17},
  number={11},
  pages={3913--3921},
  year={1988},
  publisher={Taylor \& Francis}
}

@techreport{abdulkadiroglu2018impact,
  title={Impact evaluation in matching markets with general tie-breaking},
  author={Abdulkadiro{\u{g}}lu, Atila and Angrist, Joshua D and Narita, Yusuke and Pathak, Parag A},
  year={2018},
  institution={National Bureau of Economic Research}
}

@article{abdulkadi2022breaking,
  title={Breaking ties: Regression discontinuity design meets market design},
  author={Abdulkadiro{\u{g}}lu, Atila and Angrist, Joshua D and Narita, Yusuke and Pathak, Parag},
  journal={Econometrica},
  volume={90},
  number={1},
  pages={117--151},
  year={2022},
  publisher={Wiley Online Library}
}

\end{document}